\documentclass[12pt,a4paper]{article}
\usepackage[onehalfspacing]{setspace}
\usepackage[margin=1.25in]{geometry}
\usepackage{amsmath,mathtools,amssymb,amsthm}
\usepackage[authoryear]{natbib}
\usepackage{authblk}
\newcommand{\bm}{\boldsymbol}
\usepackage{algorithm}
\usepackage{algpseudocode}
\usepackage{caption,subcaption,tabularx,booktabs}
\usepackage[]{threeparttable}
\usepackage{multirow}
\usepackage{rotating}
\usepackage{xcolor}
\usepackage[normalem]{ulem}

\DeclareMathOperator*{\argmax}{arg\,max}

\newcommand{\dt}{\mathrm{d}}
\newcommand{\E}{\mathbb{E}}
\newcommand{\wh}{\widehat}

\newtheorem{theorem}{Theorem}
\newtheorem{corollary}{Corollary}[theorem]
\newtheorem{assumption}{Assumption}

\newtheorem{lemma}{Lemma}

\usepackage{comment,color}
 \includecomment{comment}
 \specialcomment{comment}{\begingroup\color{red}}{\endgroup}  %%%

\title{Scalable likelihood-based inference for limited dependent variable models
}
\date{\today}
\author{David T. Frazier}
\author{Rub\'en Loaiza-Maya}
\author{Didier Nibbering\thanks{Correspondence to: Department of Econometrics \& Business Statistics, Monash University, Clayton VIC 3800, Australia, e-mail: \textsf{didier.nibbering@monash.edu}}}
\affil{\small Department of Econometrics and Business Statistics, Monash University}

\begin{document}
\maketitle
\begin{abstract} % ecta and restud max 150 words
\noindent \footnotesize
% SEGA: Stochastically estimated gradient ascent.
% Asymptotically valid.
% SGA using Fisher to estimate gradient. 
Limited dependent variable models are central to empirical economics, but likelihood-based inference is infeasible when likelihoods involve high-dimensional integration over latent variables. This paper proposes Stochastically Estimated Gradient Ascent (SEGA), a scalable estimation approach for limited dependent variable models. Using Fisher's identity, SEGA replaces the intractable likelihood score with an unbiased augmented-data score evaluated at a single conditional draw of the latent variables, and embeds this score in a stochastic gradient ascent algorithm. With sufficiently many iterations, we show that SEGA is asymptotically equivalent to the infeasible maximum likelihood estimator. A variance estimator based on Fisher's and Louis' identities is proposed that allows inference to proceed in the usual manner. Applications to brand choice and household demand demonstrate the usefulness of SEGA for conducting inference in large-scale discrete-choice and censored-demand models.
\end{abstract}
{\bf Keywords:} Limited Dependent Variable Models, Maximum Likelihood Estimation, Stochastic Gradient Ascent
\\
{\bf JEL Classification:} C13, C15, C24, C25, C63

\thispagestyle{empty}
\clearpage
\setcounter{page}{1}
\section{Introduction}
% KEY THINGS TO CONSIDER:
% Why should an applied economist care? 
% What new empirical work becomes possible? 
% What is the economic payoff beyond a faster estimator?
% we remove a practical barrier to likelihood-based inference in economically important latent-variable models.

% BUT: for econometrica we also need to create some appreciation for the theory, and clearly point out its novelty

%\dn{LDVMs important in general interest journals}\\
Limited dependent variable models (LDVMs) are central to empirical economics because they allow researchers to model outcomes that are discrete, censored, truncated, or otherwise restricted by their support \citep{tobin1958estimation}. Such restrictions arise naturally in settings where researchers study a range of decisions made by economic agents, such as choices, participation, purchase incidence, sales, or program uptake. Recent applications of LDVMs appear in labor economics \citep{bonhomme2022discretizing}, development economics \citep{bhattacharya2024demand}, health economics \citep{pakes2021unobserved}, education economics \citep{kline2016evaluating}, marketing \citep{danaher2020advertising}, and digital economics \citep{kanazawa2026ai}. 

%\dn{But still big issue with applicability of LDVMs}\\
%Yet inference in many interesting applications 
While LDVMs can be used to model data arising from many economic settings,  inference within this class of models remains challenging. On the one hand, the limited observability of the dependent variable means that observations only partially reveal the economic behavior of interest. As a consequence, large datasets are required to infer unknown parameters of interest with enough precision to be economically informative. On the other hand, estimation of such models is cumbersome even in small datasets since the resulting likelihood functions for many LDVMs involve high-dimensional integrals that are unavailable in closed-form. Consequently, for the large datasets necessary to produce economically useful inferences, applying maximum likelihood estimation (MLE) to conduct inference using LDVMs is often practically infeasible. To circumvent these issues, researchers are often forced to impose restrictive structures and to use approximate methods without the inferential guarantees associated with MLE.

% Paragraph 3: Main idea.
%\dn{What do we do, and how does this solve the problem?}\\
This paper removes the need to calculate the high-dimensional integral required to evaluate the likelihood function, enabling inference in economically important LDVMs. To this end, we propose Stochastically Estimated Gradient Ascent (SEGA), a scalable estimation method for LDVMs that admit a latent-variable representation. The key idea is to use Fisher's identity to express the score of the likelihood as a conditional expectation of the latent variable augmented score \citep{Cappe}. We then use \textit{a single draw} from the conditional distribution of the latent variables therefore provides an unbiased estimate of the score, irrespective of the sample size or dimension of the parameter space. We embed this noisy score in a stochastic gradient ascent (SGA) algorithm, and smooth the iterates from this algorithm to obtain a low variance estimate. For inference, we combine Fisher's identity with Louis' identity to construct a sandwich variance estimator that avoids direct evaluation of the observed score and Hessian \citep{louis1982finding}. SEGA is widely applicable because many LDVMs used in economics have the three features needed for fast implementation: a latent-variable representation, an augmented likelihood with an analytically tractable score, and  latent variables that can be generated efficiently from the conditional distribution.

% Paragraph 4: Empirical payoff.
% Large-scale MNP for brand substitution and random-effects Tobit for household demand,
% both with valid frequentist inference at scales where standard likelihood methods are
% not feasible.
%information accrues slowly with discrete choices or censoring, so large data essential to estimate substition patterns or price effects.
%\dn{What new empirical work becomes possible. Connect to paragraph two.}\\
We illustrate the empirical payoff of SEGA with two large-scale applications. The first is a multinomial probit (MNP) model of pasta brand choice using more than one million purchase observations. In this setting, a brand purchase identifies the chosen alternative but not the full ranking of alternatives, so large samples are needed to recover substitution patterns. SEGA allows us to estimate the unrestricted model, construct confidence intervals for own- and cross-price elasticities, and formally test covariance restrictions that are commonly imposed for tractability. The second application is a random effects Tobit model of household-level pasta demand using more than 200,000 households observed over 62 weeks. In this setting, observed sales record whether demand is positive but not latent desired demand when purchases are censored at zero. SEGA allows us to study heterogeneity in baseline demand, price sensitivity, and state dependence, and to test whether these sources of heterogeneity are independent. In both applications, standard likelihood-based approaches are infeasible at the required scale. SEGA therefore expands the class of empirically relevant LDVMs for which researchers can conduct frequentist likelihood-based inference.

%Paragraph 5: Theory and numerical validation.
%Finite-sample concentration around the infeasible MLE; asymptotic equivalence to the
%MLE when iterations grow; sandwich inference under misspecification.
%\dn{Make sure that econometrica appreciates the theoretical contribution}\\
We make two main theoretical contributions. First, for fixed sample sizes, we prove that the SEGA iterates concentrate around the infeasible MLE as the number of iterations increases. Second, when the number of iterations grows sufficiently fast with the sample size, we show that SEGA is asymptotically equivalent at first-order to the infeasible MLE, despite using only a single simulated latent variable draw at each  iteration. The novelty in this contribution comes from the fact that, in contrast to standard SGA algorithms, simulation noise in SEGA is generated by latent variable draws from a conditional distribution that changes with the current parameter value, and not from subsampling observations \citep{polyak1992acceleration,moulines2011non,chen2020statistical}. While the noisy score we use to implement SEGA is path-dependent, and must be handled appropriately, Fisher's identity ensures that the estimated score is an unbiased estimate of the infeasible score that we leverage to obtain our theoretical results. To our knowledge, 
%this is the first result showing that latent variable simulation noise can be averaged out over a stochastic optimization path without changing the likelihood target.
this is the first result showing that simulated latent variable scores can be used inside a stochastic optimization algorithm while remaining asymptotically equivalent to the infeasible MLE.

SEGA is closest in spirit to the method of simulated scores (MSS) of \citet{hajivassiliou1998method}. Despite its attractive theoretical properties, MSS has long been known to be difficult to implement in practice \citep{cameron2005microeconometrics,train2009discrete}. MSS approximates the likelihood score using simulated latent variables and then directly solves the resulting simulated score equations. This requires the simulated score function to remain stable across candidate parameter values, which is challenging in models where the latent variables are constrained by the underlying economic model, such as in high-dimensional choice models. % where the latent variables are latent utilities associated with choices. 
Our numerical experiments directly demonstrate the sensitivity of MSS to the number of simulation draws, the accuracy of the conditional simulator, and the initialization of the algorithm, generating substantial simulation noise and numerical instability even in moderate-dimensional LDVMs.
SEGA provides an alternative approach: rather than eliminating simulation noise before solving the score equations, it leverages stochastic approximation and iterative averaging to curtail the simulation noise across the optimization path. 

% therefore, practtioners have used alternative methods in practice, such as sml or quadrature, taking theoretical drawbacks for granted. 
%\dn{Discuss methods used in practice: relate to Sections 2.2.1, 2.2.2, and 5.3}\\
In practice, applied researchers often use simulated maximum likelihood (SML) or numerical quadrature to approximate the integrals in LDVM likelihoods. SML replaces likelihood contributions, such as the MNP choice probabilities, with Monte Carlo approximations \citep{borsch1993smooth}. The simulated score is biased for a fixed number of simulation draws. Consistency therefore requires the number of simulation draws to increase with the sample size. This requirement becomes more demanding in MNP models as the number of alternatives grows, because the choice probabilities involve higher-dimensional integrals and more simulation effort is needed to maintain accuracy. Numerical quadrature is commonly used in random effects LDVMs \citep{rabe2005maximum}. Its cost grows often exponentially with the dimension of the random effects, making it impractical with several sources of unobserved heterogeneity. As a result, SML and quadrature methods are feasible only in low-dimensional LDVMs, or in settings where strong model-specific restrictions make the integrals manageable. SEGA removes this limitation by avoiding accurate likelihood approximation at each parameter value.

% Less often used with LDVMs, but in principle possible approach, is EM.
% EM is possible, but exact EM is generally unavailable, and approximate EM becomes costly
%\dn{Discuss EM methods}\\
Although less commonly used with LDVMs, expectation-maximization (EM) is another possible method for obtaining point estimators \citep{dempster1977maximum}. Rather than approximating the likelihood directly, EM algorithms alternate between computing conditional expectations of the latent variables given the observed data and current parameter values, and maximizing the resulting expected augmented likelihood. In LDVMs, the E-step is often the computational bottleneck because the required conditional moments are not available in closed form. Monte Carlo EM approximates these moments by simulation \citep{wei1990monte}, but accurate E-steps can require many conditional draws at each iteration. In addition, each M-step may still involve solving a nontrivial optimization problem. Recent work has developed more specialized EM-type methods for probit models. For example, \citet{ding2024computationally} propose a deterministic approximation to the required conditional moments using expectation propagation, focusing on estimation rather than inference. SEGA instead targets large-scale inference for a general class of LDVMs.

% outline.
The remainder of the paper is organized as follows. Section~\ref{sec:method} introduces the model class, the leading examples, and the SEGA estimator and variance estimator. Section~\ref{sec:application} presents two large-scale empirical applications. Section~\ref{sec:theory} establishes the finite-sample concentration and asymptotic equivalence results. Section~\ref{sec:experiments} evaluates the finite-sample performance of SEGA through Monte Carlo experiments. Section~\ref{sec:conclusion} concludes.

\section{Setup, examples and a new algorithm}\label{sec:method}
\subsection{Setting and model class}
We observe a sequence of outcomes $y_i$ and covariates $x_i$, with $d_i=(y_i,x_i)$ and $i=1,\dots,n$, generated independently from some unknown data generating process $P_0$. In LDVMs, the observed outcome variable $y_i$ represents a restricted transformation of some unobservable random variable $z_i$, which we refer to as a latent variable throughout. We write this observation rule as $y_i=\mathcal{T}(z_i)$, where $\mathcal{T}$ is known. The latent variable $z_i$ is unrestricted, but the transformation $\mathcal{T}$ induces discreteness, censoring, truncation, or other restrictions in the observed outcome. The distribution of the latent variable is governed by the unknown parameters $\theta\in\Theta\subset\mathbb{R}^{d_\theta}$.

We consider LDVMs where, given $(z_i,x_i)$, the distribution of $y_i$ is modeled using a class of probabilistic models which we represent through the following conditional latent variable representation: for each $i=1,\dots,n$, and independently across $i$,
\begin{align}\label{eq:loglikelihood}
    y_{i}&\mid z_i,x_i \;{\sim}\; p_\theta(y_{i}\mid z_i,x_i) ,\\ 
    %z_{1:n}&\mid x_{1:n} \sim p_\theta(z_{1:n}|x_{1:n}),\\
    {z_{i}}&{\mid x_{i}\;{\sim}\; p_\theta(z_{i}|x_{i}).}
\end{align} 
In many LDVMs, $p_\theta(y_i\mid z_i,x_i)$ is degenerate and simply encodes the observation rule $y_i=\mathcal{T}(z_i)$. The key feature of LDVMs is that the latent variables $z_{i}$ are required to construct the model, but are typically not themselves the object of inference. Thus, the main task in this framework is inference on $\theta$.

The key difficulty with maximum likelihood estimation in LDVMs is that the latent variables must be integrated out to conduct inference on $\theta$. The likelihood function is
\begin{equation*}
    p_\theta(y_{1:n}|x_{1:n}) = \int p_\theta(y_{1:n}|z_{1:n},x_{1:n})p_\theta(z_{1:n}|x_{1:n})\dt z_{1:n},
\end{equation*}
where $p_\theta(y_{1:n}|z_{1:n},x_{1:n}) = \prod_{i=1}^np_\theta(y_{i}|z_i,x_i)$, $p_\theta(z_{1:n}|x_{1:n}) = \prod_{i=1}^np_\theta(z_{i}|x_{i})$, and $r_{1:n} =(r_1^\top,\dots,r_n^\top)^\top$  for some random variable $r_i$. Denoting $\ell_n(\theta) = \log p_\theta(y_{1:n}|x_{1:n})$, the maximum likelihood estimator is 
\begin{equation*}
    \hat{\theta}_n = \underset{\theta\in \Theta}{\text{argmax} } \ \ell_n(\theta).
\end{equation*}
The normalized score function of $\ell_n(\theta)$ can be represented as
\begin{align}\label{Eq:Score}
    m_n(\theta):=\frac{1}{n}\nabla_\theta \ell_n(\theta)= \frac{1}{n}\nabla_\theta \log\left[ \int p_\theta(y_{1:n}|z_{1:n},x_{1:n})p_\theta(z_{1:n}|x_{1:n})dz_{1:n}\right].
\end{align}This score function depends on a high-dimensional integral, and has no analytical solution except in simple models. To account for this issue, various estimators that seek to approximate the MLE are often used when conducting inference on $\theta$. We review the most common approaches to these settings in the confines of two commonplace examples.

\subsection{Examples and existing likelihood-based estimators}\label{sec:examples}
\subsubsection{Multinomial probit model}\label{sec:examplesMNP}
Multinomial probit models are widely used to analyze discrete choice behavior, as they allow for flexible substitution patterns among choice alternatives; see for instance, \citet{geweke2003bayesian}, \citet{natenzon2019random}, and \citet{khan2021inference}. To introduce the model, denote $y_i$ to be a multinomial choice for individual $i=1,\dots,n$, where $y_{i}=j$ if individual $i$ chooses choice alternative $j=0,1,\dots,J$. Let $z_i=(z_{i1},\dots,z_{iJ})^\top$ be a $J$-dimensional vector of continuous random variables. {The elements in $z_i$ can be interpreted as utility differences with respect to the base category $j=0$, where the differencing identifies the location of the utilities \citep{bunch1991estimability}.}

The multinomial outcome $y_{i}$ is determined by the maximum value of $z_i$:
\begin{align}\label{eq:Y_i}
y_i =
\begin{cases}
0, & \text{if } \max_{1\leq k\leq J} z_{ik}<0,\\
j, & \text{if } z_{ij}=\max_{1\leq k\leq J} z_{ik}>0.
\end{cases}
\end{align}
The latent utilities are modeled as 
\begin{align}\label{eq:Z_i}
    z_{ij} = x_{ij}^\top\beta +\varepsilon_{ij}, \quad \varepsilon_{i}=(\varepsilon_{i1},\dots,\varepsilon_{iJ})^\top\sim N(0_J,\Sigma),
\end{align}
where $x_{ij}$ is an $r$-dimensional vector, $\beta$ is an $r$-dimensional vector of coefficients,  $\varepsilon_{i}$ is a $J$-dimensional normally distributed disturbance vector with mean zero and covariance matrix $\Sigma$. We use $x_i$ to denote the collection of alternative-specific regressors, with $x_i\beta=(x_{i1}^\top\beta,\ldots,x_{iJ}^\top\beta)^\top$. The parameter vector is $\theta=(\beta^\top,\mathrm{vech}({\Sigma})^\top)^\top$.

Taking $d_i=(y_i,x_{i})$ as the observed data, the likelihood for the MNP can be stated through the choice probabilities associated with choice $j\in\{0,\dots,J\}$. Denoting $P_{ij}(\theta) = \Pr(y_i=j\mid x_i)$, we have that $P_{i0}(\theta)=\Phi_J(0_J;x_i\beta,\Sigma)$ indicates the probability of the base category, where $\Phi_J(\cdot;\mu,\Sigma)$ is the CDF of a $J$-variate normal with mean $\mu$ and covariance matrix $\Sigma$. For the remaining categories, we have that
\begin{flalign}
P_{ij}(\theta)&=\int \left[\mathbb{I}\left\{x_{ij}^\top \beta+\varepsilon_{ij}\ge0\right\}\prod_{k\ne j}\mathbb{I}\left\{x_{ij}^\top \beta+\varepsilon_{ij}>x_{ik}^\top \beta+\varepsilon_{ik}\right\}\right]\phi_J(\varepsilon_i;0_J,\Sigma)\dt\varepsilon_i,\label{Eq:intprobMNP}
\end{flalign}
where $\phi_J(\cdot;\mu,\Sigma)$ is the density of a $J$-variate normal with mean $\mu$ and covariance matrix $\Sigma$. If $P_{ij}(\theta)$ could be evaluated, the log-likelihood would take the simple multinomial form $\ell_n(\theta)=\sum_{i=1}^{n}\sum_{j=0}^{J}\mathbb{I}(y_i=j)\log P_{ij}(\theta)$. However, the choice probabilities \(P_{ij}(\theta)\) involve integration of a multivariate normal density over a truncated region of the latent utility space. For \(J\ge 3\), these integrals are not available in closed form \citep{botev2017normal}.

The simulated maximum likelihood (SML) estimator approximates the integrals in \eqref{Eq:intprobMNP} by simulation methods:  
$$
\hat\theta_{SML}:=\argmax_{\theta\in\Theta}\frac{1}{n}\sum_{i=1}^{n}\sum_{j=0}^{J}\mathbb{I}(y_i=j)\log \widehat{P}_{ij}(\theta),
$$
where $\wh{P}_{ij}(\theta)$ denotes the estimate for ${P}_{ij}(\theta)$ \citep{bolduc1999practical}. However, even with efficient simulation and/or approximation methods, conducting exact estimations of the MNP model in modern large-scale data sets is challenging \citep{ding2024computationally}. Indeed, the approximate nature of SML results in biased estimators of the score, with this bias often exacerbated in larger choice sets and or large datasets \citep{borsch1993smooth}. An alternative to simulating the likelihood is to simulate the scores directly, ala MSS \citep{hajivassiliou1998method}, we defer a discussion of this approach to Section \ref{sec:MSS_discuss}.

\subsubsection{Random effects Tobit model}\label{sec:examplesTobit}
Random effects Tobit models are widely used to analyze censored outcomes in the presence of unobserved heterogeneity, with recent applications in marketing \citep{danaher2020advertising} and banking \citep{liu2023forecasting}. To introduce the model, let $n$ denote the total number of individuals and $T$ the total number of observations available for each individual. We denote $y_{it}\ge 0$ to be the limited dependent variable with $i =1,\dots,n$, and $t=1,\dots,T$.

The variable $y_{it}$ is modeled to be a function of the latent variable $y_{it}^*$, such that 
\begin{equation}\label{Eq:linktobit}
p(y_{it}|y_{it}^*)=\left\{\begin{matrix} \mathbb{I}(y_{it}=0) & y_{it}^*\le 0,\\
\mathbb{I}(y_{it}=y_{it}^*)&y_{it}^*>0.\\\end{matrix}\right.
\end{equation} 
This latent variable follows the normal distribution
\begin{equation}\label{Eq:yystar}
p_\theta(y_{it}^*|\alpha_i,x_{it}) = \phi_1\left(y_{it}^*;h_{it}^\top\beta+w_{it}^\top\alpha_i,\sigma^2\right),
\end{equation}
where the vector $h_{it}$ includes covariates with fixed effects $\beta$, and the vector $w_{it}$ includes covariates with random effects $\alpha_i\sim N(0_r,\Omega)$. The parameter vector is $\theta = (\beta^\top,\sigma^2,\text{vech}(\Omega)^\top)^\top$, where $\text{vech}(A)$ extracts the lower triangular elements of matrix $A$.

Taking $y_{i} = \left(y_{i1},\dots,y_{iT}\right)^\top$, $x_{i} = (x_{i1}^\top,\dots,x_{iT}^\top)^\top$, with $x_{it} = (h_{it}^\top,w_{it}^\top)^\top$, and $d_i = (y_i,x_i)$, the likelihood increment can be stated as 
\begin{equation}\label{Eq:intprobTobit}
    p_\theta(y_i|x_i) = \int g_\theta(\alpha_i,d_i)\phi_r(\alpha_i,0_r,\Omega)d\alpha_i.
\end{equation} where 
\begin{equation*}
g_\theta(\alpha_i,d_i) = \prod_{\{t:y_{it}=0\}}\Phi_1\left(0;h_{it}^\top\beta+w_{it}^\top\alpha_i,\sigma^2\right) \prod_{\{t:y_{it}>0\}}\phi_1\left(y_{it};h_{it}^\top\beta+w_{it}^\top\alpha_i,\sigma^2\right),
\end{equation*}
Because $g_\theta(\alpha_i,d_i)$ contains products of normal CDFs and must be integrated over the random effects, $p_\theta(y_i\mid x_i)$ is not available in closed form except in special cases.

To tackle this problem, Gaussian-Hermite quadrature methods can be used, which yield the approximation
\begin{equation}\label{quadrature}
\widehat{p_\theta(y_i|x_i)} =  
\frac{1}{\pi^{r/2}}
\sum_{q \in \{1,\dots,Q\}^r} 
\left(
\prod_{j=1}^{r} \omega_{q_j}
\right)
g_\theta\!\left(\sqrt{2}\,\tilde{D}\zeta(q),d_i\right),
\end{equation}
where $\Omega = \tilde{D}\tilde{D}^\top$,  $\zeta(q) = (\zeta_{q_1},\dots,\zeta_{q_r})$ and where $\{\zeta_s,\omega_s\}_{s=1}^Q$ denote the root, weight pairs of the Hermite polynomial of order $Q$  \citep{skrondal2004generalized}. A value $Q=30$ is often considered reasonable to accomplish high accuracy. However, note that the function $g_\theta(\cdot,d_i)$ must be evaluated a total $Q^r$ times, rendering quadrature methods practically infeasible beyond three dimensional random effects problems.

\subsection{Stochastically estimated gradient ascent}
As is clear from Sections \ref{sec:examplesMNP} and \ref{sec:examplesTobit}, in particular equations \eqref{Eq:intprobMNP} and \eqref{Eq:intprobTobit}, parameter estimation in LDVMs is made difficult through their dependence on the latent variables that must be integrated out before the likelihood function can be evaluated. Consequently, the likelihood score equations in \eqref{Eq:Score} depend on integrals which must generally be approximated. 

If we are willing to conduct inference on the unknown parameters via an approximation to the MLE, then we can circumvent these computational difficulties and apply LDVMs to high-dimensional settings with millions of observations and hundreds of parameters. To do so, our starting point is to re-examine the score equations in \eqref{Eq:Score}, and apply integration by parts to see that (see, e.g., \citealp{Cappe})
\begin{flalign}
    m_n(\theta)&=\frac1n\nabla_\theta 
    \log\left[\int p_\theta(y_{1:n}\mid z_{1:n},x_{1:n})p_\theta(z_{1:n}\mid x_{1:n})dz_{1:n}\right]\nonumber\\
    &=\frac1n\int_{}\nabla_\theta \log p_\theta(y_{1:n},z_{1:n}\mid x_{1:n})p_\theta(z_{1:n}\mid d_{1:n})\dt z_{1:n}.\label{Eq:FisherScore}
\end{flalign}
The gradient in \eqref{Eq:FisherScore} makes clear that if we can obtain draws from $p_\theta(z_{1:n}\mid d_{1:n})$, we could attempt to conduct inference on $\theta$ by estimating the score equations and applying modern gradient-based optimization methods. 

While we cannot evaluate $m_n(\theta)$ in general, a feasible estimator exists:
\begin{align}\label{Eq:FisherScoreest}
\widehat{m}_n(\theta)= \frac{1}{n}\nabla_\theta \log p_\theta(y_{1:n},\tilde{z}_{1:n}\mid x_{1:n}),\quad \tilde{z}_{1:n}{\sim} p_\theta(\cdot\mid d_{1:n}).
\end{align}
While such an estimator is of course high-variance, it is nonetheless unbiased. Since $\wh{m}_n(\theta)$ is an unbiased estimator of $m_n(\theta)$, we use $\wh{m}_n(\theta)$ within a stochastic gradient ascent search algorithm to produce stochastic iterates of $\theta$. SGA is the backbone of modern machine learning algorithms due to its ability to scale to vast amounts of data, and millions of parameters. However, our use of the estimator $\wh{m}_n(\theta)$ within an SGA algorithm is non-standard: SGA algorithms are ``stochastic'' not because they are based on gradients that depend on simulated data, but because they estimate the gradient through subsets (called mini-batches) of the data that are drawn randomly from the full data set \citep{robbins1951stochastic}. 
In contrast to standard SGA, our implementation uses the entire dataset, but is based on estimating the gradient stochastically using a subset of the possible data that could have been simulated from $p_\theta(z_{1:n}\mid d_{1:n})$. To clarify this distinction, we refer to SGA algorithms based on the estimated gradient $\wh{m}_n(\theta)$ as stochastically estimated gradient ascent (SEGA) algorithms. 

Pseudo-code for the implementation of SEGA is given in Algorithm \ref{alg:sega}. The $k$-th parameter iterate in the SEGA path is denoted as $\theta^{(k)}_n$. The iterate $\theta^{(k)}_n$ is noisy, due to stochasticity induced by the latent variables. To account for this, more precise estimates can be obtained by smoothing the iterates to obtain
\begin{align}\label{eq:smooth}
\overline\theta_n=\frac{1}{k-k_0}\sum_{j=k_0+1}^{k}\theta^{(j)}_n,
\end{align}
where $1\le k_0\ll k$ denotes some initial ``burn-in'' period after which we smooth the iterates. The averaging step in \eqref{eq:smooth} is known as Polyak–Ruppert averaging \citep{polyak1992acceleration,ruppert1988efficient}. We refer to $\overline\theta_n$ as the SEGA estimator.

\begin{algorithm}[tb!]
    \begin{algorithmic}[1]
        \State{Initialize $\theta^{(0)}_n$, $k_0$, and $k$.}
        \For{$j=0,\dots,k$}
        \State{Draw $\widetilde{z}^{(j)}_{1:n}{\sim} p_{\theta^{(j)}_n}(\cdot\mid d_{1:n})$.}
        \State{Calculate $\widehat{m}^{}_{n}({\theta}_n^{(j)})=\frac{1}{n}\nabla_\theta \log p_{ {\theta}_n^{(j)}}(y_{1:n},\widetilde{z}^{(j)}_{1:n})$.}
        \State{Update 
    $\theta_n^{(j+1)}= \theta_n^{(j)}+\eta^{(j+1)}_{n}\odot\widehat{m}^{}_n(\theta^{(j)}_n)$.}
        \EndFor
        \State \Return{$\overline\theta_n=\frac{1}{k-k_0}\sum_{j=k_0+1}^{k}\theta^{(j)}_n$.}
    \end{algorithmic}
    \caption{Stochastically estimated gradient ascent algorithm}
    \label{alg:sega}
\end{algorithm}

% adadelta + convergence
Step 5 in Algorithm \ref{alg:sega} requires a learning rate vector $\eta_n^{(j+1)}$. We use the ADADELTA method in \citet{zeiler2012adadelta}, which conducts exponential smoothing operations on the gradient. For $[\eta^{(j)}_{n}]_i$ denoting the $i$-th dimension of $\eta^{(j)}_{n}$, we have
\begin{equation}
    [\eta^{(j)}_{n}]_i = \frac{\sqrt{\Delta_i^{(j-1)}+\epsilon}}{\sqrt{S_i^{(j)}+\epsilon}},
\end{equation}
where $\Delta^{(j)}=\rho \Delta^{(j-1)}+(1-\rho)\left\{\eta^{(j)}_{n}\odot\widehat{m}_n(\theta^{(j)}_n)\right\}^2$ and $S^{(j)} = \rho S^{(j-1)}+(1-\rho)\widehat{m}_n(\theta^{(j)}_n)^2$,
with $\rho=0.95$, $\epsilon=10^{-6}$, and $\Delta_i^{(0)}=S_i^{(0)}=0$. This learning-rate is self-adaptive, and the magnitude of $\Delta^{(j)}$ decreases on average with $j$, which in theory allows for a stopping rule on the algorithm based on $||\Delta^{(j)}||$. In practice, such statistics are noisy, and setting a large number of iterations after which the change in the smoothed parameters is checked for convergence works best.

\subsection{SEGA implementation in the two examples}\label{sec:implementation_examples}
SEGA requires evaluation of the augmented log gradient $\nabla_\theta \log p_\theta(y_{1:n},z_{1:n}\mid x_{1:n})$ in closed-form and simulation from the distribution $p_\theta(z_{1:n}\mid d_{1:n})$. This feature allows for the direct application of SEGA in both classes of examples considered in Section \ref{sec:examples}. 

\subsubsection{SEGA for the multinomial probit model}
Since the scale of the covariance matrix of the latent utilities is not identified in a
multinomial probit model, we follow  \cite{ding2024computationally} and fix the trace of the precision matrix so that $\text{Trace}(\Sigma^{-1}) = J$, and where $\Sigma^{-1} = CC^\top$. Denote the vector $\psi = \text{vech}(C)$, which is a vector of dimension $d_C=J(J+1)/2$. The trace restriction imposes $\text{Trace}{(\Sigma^{-1})} = \sum_{i=l}^{d_C} \psi_l^2= J$. Similar to \citet{loaiza2021scalable} we parametrize $\psi$ in terms of a spherical coordinate system with radius $\sqrt{J}$  and angles $\kappa = (\kappa_1,\dots,\kappa_{n-1})^\top$, where
\begin{equation}\label{eq:transformation}
  \psi_{l}(\kappa) = \begin{cases}
      \sqrt{J}\cos\kappa_{1} & \text{for $l=1$},\\
    \sqrt{J}\cos\kappa_{l}\prod_{j = 1}^{l-1}\sin\kappa_{j} &   \text{for $1<l<d_C$},\\
    \sqrt{J}\prod_{j = 1}^{l-1}\sin\kappa_{j} & \text{for $l=d_C$}.
  \end{cases}
\end{equation}
To ensure positive diagonal elements of $C$, set $0<\kappa_{\frac{J(J+1)}{2}-\frac{(J-i+1)(J-i+2)}{2}+1}\le \frac{\pi}{2}$ for $i = 1,\dots,J-1$, and constrain the remaining angles to $[0,\pi]$. The angles are transformed into the real line by $\kappa_l = \Phi(\xi_l)\text{UB}_l$, where $\text{UB}_l$ denotes the upper bound of the $l^{\text{th}}$ angle. Note that $\Sigma$ is now implicitly a function of $\xi$, and $\theta = \left(\beta^\top,\xi^\top\right)^\top$.

Under this parametrization, the MNP model in \eqref{eq:Y_i} and \eqref{eq:Z_i} delivers a closed-form estimator of the score in \eqref{Eq:FisherScoreest}. The required gradient and distribution are 
\begin{flalign}
    \nabla_\theta \log p_\theta(y_{1:n},z_{1:n}\mid x_{1:n})&=\sum_{i=1}^n (\nabla_\beta \log \phi_J(z_i;x_i\beta,\Sigma)^\top,\nabla_{\xi} \log \phi_J(z_i;x_i\beta,\Sigma)^\top)^\top, \label{eq:grad_augpost_mnp}\\
p_\theta(z_{1:n}\mid d_{1:n})&=\frac{\prod_{i=1}^n\phi_J(z_i;x_i\beta,\Sigma)I[A(y_i)z_i<0]}
{\int \prod_{i=1}^n\phi_J(z_i;x_i\beta,\Sigma)I[A(y_i)z_i<0]\dt z_{1:n}}, \label{eq:pz_mnp}
\end{flalign}
Thus, $p_\theta(z_{i}\mid d_{i})$ is multivariate truncated normal with truncation matrix $A(y_i)=I_J$ if $y_i=0$ and $A(y_i)=I_J+(-e_j-1)e_j'$ otherwise, where $e_j$ is a zero vector with a one as the $j$th element. This distribution can be simulated from via Gibbs methods. Further details are provided in Supplemental Appendix~\ref{A:mnp_fisher}.

\subsubsection{SEGA for the random effects Tobit model}
In the random effects Tobit model, the latent variables $z_{1:n}$ are represented by the random effects $\alpha_i$ and the latent variables $y_{it}^\ast$. To facilitate optimization, all parameters are transformed to the real line. The variance parameter is transformed as $c = \log(\sigma^2)$. The covariance matrix of the random effects is written down in terms of its Cholesky decomposition $\Omega = \tilde{D}\tilde{D}^\top$, where $\tilde{D}$ is a lower triangular matrix with diagonal elements $\tilde{d}_{i,i} = \exp(\delta_{i,i})$, and off diagonal elements $\tilde{d}_{i,j} = \delta_{i,j}$. 
Denote $D$ to be the lower triangular matrix comprised of the elements $\delta_{i,j}$, and define
$\delta= \text{vech}(D)$ to be the vector of unique parameter vectors that characterize $\Omega$. The parameter vector is $\theta = (\beta^\top,c,\delta^\top)^\top$.

While of a different nature to the MNP model, the random effects Tobit model in \eqref{Eq:linktobit} and \eqref{Eq:yystar} also delivers the gradient and distribution required by the estimator of the score  in~\eqref{Eq:FisherScoreest}:
\begin{flalign}\label{eq:tobit_sega}
    \nabla_\theta \log p_\theta(y_{1:n},z_{1:n}|x_{1:n})&=\sum_{i=1}^n\left(\sum_{t=1}^T\nabla_{\beta,c}\log\phi_1(y_{it}^*;\mu_{it},\sigma^2)^\top,\nabla_{\delta}\log\phi_r(\alpha_i;0_r,\Omega)^\top\right)^\top,\notag \\
p_\theta(y_{1:n}^*\mid d_{1:n},\alpha_{1:n})&=\prod_{\{(i,t):y_{it}=0\}}\frac{\phi_1(y_{it}^*;\mu_{it},\sigma^2)I(y_{it}^*<0)}
{\Phi_1(0;\mu_{it},\sigma^2)}\prod_{\{(i,t):y_{it}>0\}} I(y_{it}^*=y_{it}),\notag \\
p_\theta(\alpha_{1:n}\mid d_{1:n},y^*_{1:n})&=\prod_{i=1}^n\phi_r(\alpha_i;\bar{\alpha}_i,V_{i}),
\end{flalign}
\sloppy with $\mu_{it}=h_{it}^\top\beta+w_{it}^\top\alpha_i$, $\bar{{\alpha}}_i =\frac{1}{\sigma^2}V_i\left(\sum_{t=1}^{T}w_{it}{y}_{it}^*-\left[\sum_{t=1}^{T}w_{it}h_{it}^\top\right]\beta\right)$, and $V_i = \left[\Omega^{-1}+\frac{1}{\sigma^2}\sum_{t=1}^{T} w_{it}w_{it}^\top\right]^{-1}$. The latent variables $y^*_{1:n}$ can be simulated from a univariate truncated normal and $\alpha_{1:n}$ from a multivariate normal via Gibbs methods. Further details are provided in Supplemental Appendix~\ref{A:tobit_fisher}.

\subsection{Inference after SEGA}\label{sec:inference}
Although the SEGA estimator is obtained from stochastic gradients rather than direct likelihood maximization, in Section~\ref{sec:theory} we show that $\bar\theta_n$ in \eqref{eq:smooth} is asymptotically equivalent to the infeasible MLE under regularity conditions. Hence, inference can be based on the same sandwich covariance matrix as the infeasible MLE; namely, $V=\mathcal{H}^{-1}\mathcal{I}\mathcal{H}^{-1}$, where $\mathcal{H}=-\E\nabla_{\theta}^2\log p_\theta (d_i)$ is the expected Hessian of the log likelihood and $\mathcal{I}=\text{Var}\{\nabla_\theta \log p_\theta (d_i)\}$ the variance of the score.

\subsubsection{A variance estimator for SEGA}
Estimating $V$ requires estimates of $\mathcal{H}$ and $\mathcal{I}$, which are not directly available because the observed likelihood score and Hessian involve integrals over latent variables. We therefore estimate these quantities using Louis' identity \citep{louis1982finding} and Fisher's identity, respectively, where conditional expectations are approximated by simulation.  

Given simulated data, $z_{i,s}\stackrel{iid}{\sim} p_\theta(z_i|d_i)$, for $\ell(\theta
,z_i) = \log p_\theta(y_i,z_i|x_i)$,  and $\widehat{\nabla_\theta\ell_i(\theta)} = \frac{1}{S}\sum_{s=1}^S\nabla_\theta\ell(\theta,z_{i,s})$, our proposed estimator is 
\begin{align}\label{eq:Vhat}
    \widehat{V} = \widehat{\mathcal{H}(\bar{\theta}_n)}^{-1}
\widehat{\mathcal{I}(\bar{\theta}_n)}
\widehat{\mathcal{H}(\bar{\theta}_n)}^{-1},
\end{align}
where 
\begin{align*}
\widehat{\mathcal{I}(\theta)} =&  \frac{1}{n}\sum_{i=1}^n\widehat{\nabla_\theta\ell_i(\theta)}\widehat{\nabla_\theta\ell_i(\theta)}^\top-\left(\frac{1}{n}\sum_{i=1}^n\widehat{\nabla_\theta\ell_i(\theta)}\right)\left(\frac{1}{n}\sum_{i=1}^n\widehat{\nabla_\theta\ell_i(\theta)}\right)^\top,\\
    \widehat{\mathcal{H}(\theta)} =& \frac{1}{n}\sum_{i=1}^n   \frac{1}{S}\sum_{s=1}^S\nabla_\theta^2 \ell(\theta,z_{i,s})+
      \nabla_\theta \ell(\theta,z_{i,s})\nabla_\theta \ell(\theta,z_{i,s})^\top-\widehat{\nabla_\theta\ell_i(\theta)}\widehat{\nabla_\theta\ell_i(\theta)}^\top.
\end{align*}
Each of the terms can be estimated separately using the simulated data $z_{i,s}$. Fisher's identity allows the observed score for each unit to be recovered by averaging the augmented score $\nabla_\theta \log p_\theta(y_i,z_i|x_i)$ over independent draws from the conditional distribution of the latent variables $p_\theta(z_i|d_i)$. The estimator $\widehat{\mathcal{I}}(\theta)$ is therefore the sample variance of these estimated observed scores. To estimate $\mathcal{H}$, we use Louis' identity, which  expresses the observed Hessian as the conditional mean of the augmented Hessian $\nabla^2_\theta \log p_\theta(y_i,z_i|x_i)$ plus a correction for the conditional variation of the augmented score. 

Tests and confidence intervals can then be based on $\widehat V$, while for nonlinear economic quantities, such as elasticities and marginal effects, we use a \citet{krinsky1986approximating} parametric bootstrap: we draw parameters from the estimated asymptotic distribution of $\overline\theta_n$ and transform each draw into the corresponding quantity of interest.

\subsubsection{Evaluating the variance estimator in the SEGA examples}
The evaluation of $\widehat V$ requires $S$ independent draws from $p_\theta(z_i|d_i)$. Since the Gibbs samplers used in Section~\ref{sec:examples} for SEGA point estimation produce autocorrelated draws, they cannot be used for estimation of $\widehat{V}$ without modifications.
For the multinomial probit model, we use the accept-reject sampling method proposed by \citet{botev2017normal} for generating i.i.d.\ draws from the truncated multivariate normals implied by $p_\theta(z_{1:n}\mid d_{1:n})$. This approach is computationally more demanding than the Gibbs sampler used in SEGA, and is therefore solely for the construction of $\widehat V$.
In the random effects Tobit model, we employ a two step generation process based on $p_\theta(z_i|d_i) = p_\theta(y_i^*|d_i)p_\theta(\alpha_i|d_i,y_i^*)$. The density $p_\theta(y_i^*|d_i)$ can be written as a truncated multivariate normal, for which we can again use the \citet{botev2017normal} method for generating i.i.d.\ draws. The density $p_\theta(\alpha_i|d_i,y_i^*)$ is multivariate normal and also part of the Gibbs sampling scheme in SEGA. 

Given the simulated latent variables, both models reduce to Gaussian augmented likelihoods, so the augmented score $\nabla_\theta \log p_\theta(y_i,z_i|x_i)$ and augmented Hessian $\nabla^2_\theta \log p_\theta(y_i,z_i|x_i)$ entering Fisher's and Louis' identities are available in closed form. Details are provided in Supplemental Appendix~\ref{A:MNPse} and~\ref{A:Tobitse}.

\subsection{SEGA and the method of simulated scores}\label{sec:MSS_discuss}
In contrast to standard simulation-based estimators, SEGA attempts to solve the score equation $m_n(\theta) = 0$ stochastically, rather than deterministically. Thus, SEGA iterates will not satisfy the equation $m_n(\theta^{(k)}_n)=0$ for any finite $k$, and in general neither will $\overline\theta_n$. As such, SEGA iterates cannot be interpreted as Z-estimators (\citealp{van2000asymptotic}, Ch. 5) or simulated method of moments estimators \citep{pakes1989simulation}. 

Nevertheless, SEGA estimators share a similar philosophy to simulation-based econometric estimators, such as the MSS estimators presented in \cite{hajivassiliou1996simulation}. {Indeed, the replacement of the intractable score $m_n(\theta)$ with an average over a large number of simulated draws is the key idea in MSS. In particular,} the MSS estimator is also based on an approximation to the scores of LDVMs, and can be thought of as trying to estimate $\theta$ by solving a simulated version of the scores
\begin{equation}
\widehat\theta_{\mathrm{MSS}}^{(R)}\in\left\{\theta\in\Theta:\frac{1}{n}\sum_{i=1}^n\frac{1}{R}\sum_{s=1}^R\widetilde m_i\{\theta,z_i(\theta;\epsilon_{i,s})\}=0\right\},\label{eq:MSS}
\end{equation}
where $z_i(\theta;\epsilon_{i,s})$ is drawn from $p_\theta(z_{i}\mid d_{i})$ based on the fixed set of random numbers $\epsilon_{i,s}$, for $i\in\{1,\dots,n\}$ and  $s\in\{1,\dots,R\}$, that do not change across the algorithm, and where $\tilde{m}_i\{\theta,z_{i}(\theta;\epsilon_{i,s})\}$ denotes a per-unit estimator of the corresponding per-unit infeasible score equations $m_i(\theta)$, where $m_n(\theta)=\frac{1}{n}\sum_{i=1}^{n}m_i(\theta)$. 

For a given $\theta$ value, implementation of MSS requires generating $z_i(\theta;\epsilon_{i,s})$ using a set of fixed random numbers $\epsilon_{i,s}$  to ensure that $z_i(\theta;\epsilon_{i,s})$ cannot change in a stochastic manner. This requirement is necessary for consistent estimation in MSS. As can be seen by \eqref{eq:MSS}, the MSS estimator is a Z-estimator based on numerically solving an approximation to the infeasible score equations $0=m_n(\theta)$. Consequently, as has been known since at least \cite{pakes1989simulation}, if MSS allows the random draws $\epsilon_{i,s}$ to change across evaluations of the criterion in \eqref{eq:MSS}, the MSS estimator $\widehat\theta_{\mathrm{MSS}}^{(R)}$ will not be consistent. 

However, simulating draws $z_i(\theta;\epsilon_{i,s})$ that only depend on the fixed set of random numbers $\epsilon_{i,s}$ is no simple task in LDVMs. In general, such an approach is infeasible in any setting where generating random variables requires accept/reject sampling or other techniques that do not allow for the random numbers to be fixed in their implementation. %While Gibbs sampling can be used to generate $z_i(\theta;\epsilon_{i,s})$, the resulting parameter estimates can be highly-sensitive to the number of Gibbs steps used to generate each $z_i(\theta;\epsilon_{i,s})$ as well as the initialization of the Gibbs chain. 
In cases where Gibbs sampling is used to simulate draws of $z_i(\theta;\epsilon_{i,s})$, the accuracy of the resulting parameter estimates depend on the interaction between the sample size, $n$, the number of simulations used to estimate the gradient, $R$, and the number of steps used in the Gibbs sampler to draw each $z_i(\theta;\epsilon_{i,s})$, as well as its initialization. 

Using a Gibbs sampler with MSS that is poorly initialized, or that uses an insufficient number of steps, results in simulated variables $z_i(\theta;\epsilon_{i,s})$ that will not be drawn from the correct distribution and the resulting $\wh\theta_{\text{MSS}}^{(R)}$ can be a poor approximation of the MLE. Furthermore, all else equal, the larger the dimension of $z_i(\theta;\epsilon_{i,s})$, the larger the number of steps we are required to use within the Gibbs sampler to ensure that we are drawing from the correct distribution; initialization of the Gibbs sampler also becomes more costly as the dimension increases. In our numerical experiments conducted in Section \ref{sec:MSS_Emp}, we demonstrate that even in moderate dimensions the MSS estimator requires a large number of steps within the Gibbs sampler to accurately approximate the MLE, which becomes computationally prohibitive even in moderate dimensions. 

The requirement that the latent variables be simulated as $z_i(\theta;\epsilon_{i,s})$, with $\epsilon_{i,s}$ fixed across evaluations, is completely absent in SEGA: each $\widetilde{z}^{(k)}_{i}\stackrel{iid}{\sim} p_{\theta^{(k)}_n}(\cdot\mid d_{i})$
is generated based on new random numbers at each iteration. SEGA bypasses this requirement since its iterations are, by construction, random and since SEGA only seeks to guarantee that the limit, as $k$ diverges, of the random solution path solves $m_n(\theta)=0$. 

\section{Empirical applications}\label{sec:application}
To illustrate the empirical relevance of SEGA, we apply the method to two large-scale LDVMs for which likelihood-based inference is infeasible with existing approaches: a multinomial probit model for pasta brand choice and a random effects Tobit model for household-level pasta demand. We implement SEGA with 100,000 iterations, use the final 25{,}000 iterations to construct the smoothed estimates, and use $S=20{,}000$ draws to construct the variance estimator. Details are deferred to Supplemental Appendix~\ref{A:mnp_application} and \ref{A:tobit_application}.

\subsection{Brand choice and substitution in the pasta market}\label{sec:app_mnp}
We apply the MNP model to a data set on pasta brand purchases using SEGA. This consumer choice data set, collected at a leading U.S. grocer, includes more than one million purchases and is made available by the Dunnhumby data platform\footnote{https://www.dunnhumby.com/source-files/} as ``Carbo-Loading: A Relational Database". The final sample consists of purchases without coupons of the ten top-selling pasta brands, excluding private labels, amounting to 1,070,436 observations. We include an intercept and the log price for each brand in the model. \citet{loaiza2023fast} and \citet{loaiza2024hybrid} fitted this MNP model to this data using approximate Bayesian estimation methods. 
Application of the MNP model to data problems of this magnitude -- large choice set and more than a million observations -- with the theoretical guarantees associated with maximum likelihood estimation are only feasible through the use of SEGA. 

Due to the differenced utility representation in the MNP model, the model parameters do not provide direct economic interpretation. Instead, empirical interest typically focuses on the price elasticities and the substitution patterns implied by the model parameters.

Figure~\ref{fig:ped} shows the estimated own-price elasticity of the pasta brand `Barilla' in Panel (a) and the cross-price elasticity of `San Giorgo' with respect to the price of `Barilla' in Panel (b). The prices of the other pasta brands are fixed at their mean. The solid lines represent the point estimates for the price elasticities, and the dashed lines the 95\% confidence intervals. As expected, the estimated own-price elasticity is negative, and the confidence intervals are narrow due to the large sample size. The positive cross-price elasticity indicates that increases in the price of Barilla raise the probability of purchasing San Giorgio, consistent with the two brands being substitutes.

\begin{figure}[tb!]
\caption{Price elasticities for two pasta brands}
\centering
\includegraphics*[width=\textwidth]{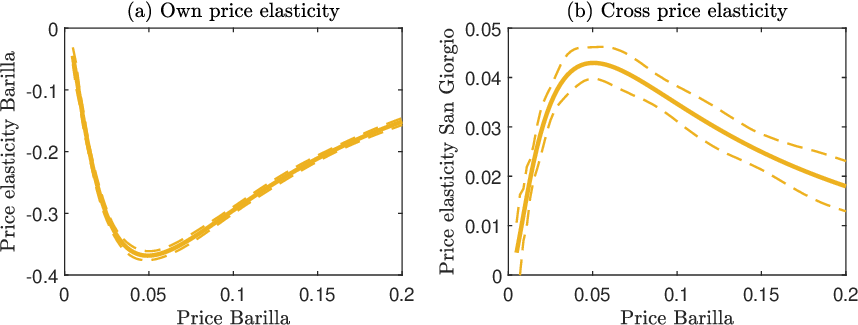}
\begin{flushleft}
\footnotesize
This figure shows the price elasticity of the pasta brands `Barilla' (Panel (a)) and `San Giorgo' (Panel (b)) as a function of the price of Barilla, with the prices of the other pasta brands fixed at their mean. The solid lines represent the price elasticities and the dashed lines the 95\% confidence intervals.  
\end{flushleft}
\label{fig:ped}
\end{figure}

%Discuss wald tests. 
Empirical applications of the MNP model often impose covariance restrictions. \citet{paetz2018utility} discuss the implied substitution patterns of an identity covariance matrix, \citet{rossi2012bayesian} justify a diagonal matrix by arguing that choice behavior is often characterized by large differences in relative variance between choice alternatives, and \citet{geweke1994alternative} show that assuming an identity covariance matrix for the undifferenced utilities implies an equicorrelated matrix on the differenced utilities. SEGA enables formal testing of these specifications via Wald tests. The restrictions $H_0:\Sigma=I_J$, $H_0:\sigma_{jk}=0$ for all $j\neq k$, and $H_0:\Sigma=\frac{1}{2}(I_J+\iota_J\iota_J^\top)$ are all decisively rejected, with test statistics of $27{,}322$, $24{,}094$, and $9{,}451.6$, respectively, far exceeding their corresponding 1\% critical values of $68.709$, $58.619$, and $68.709$. %Thus, the data provide no support for any of these restricted covariance specifications.
These results indicate that the unrestricted covariance structure captures substitution patterns that are strongly rejected by commonly imposed restricted specifications.

\subsection{Household heterogeneity in pasta demand}\label{sec:app_tobit}
We fit the random effects Tobit model to a data set constructed from the same consumer choice data used in Section~\ref{sec:app_mnp}. We select the purchases of private label pasta products that do not involve coupons from $n=203,965$ households across $T=62$ weeks. We model in each week $t$ the total dollar sales per ounce per household $i$, conditional on a set of promotion indicators, a price index, lagged sales, and weekly dummies. The 16 promotion variables indicate the number of shopping trips during the week with a private label pasta product in the weekly mailer or in a temporary in-store display, with different indicators for different locations in the mailer or display. The standardized log price index is constructed as the arithmetic mean across the price of all products in a particular week, where the weekly price for each product is approximated as the average price over all transactions for that product in that week. The lagged dollar sales for each household in each week is measured as an average across the past four weeks. The model includes a random intercept and random coefficients for the price index and lagged sales.  

Figure~\ref{fig:tobit} shows the estimated probability of positive pasta sales in Panel (a) and the estimated marginal effect of the log price index on expected sales in Panel (b), both as a function of the household-specific lag-sales coefficient. Promotions are set to zero, and Panel (a) conditions on the average price index and zero lagged sales, and Panel (b) on the median price index across observations with positive sales and the median positive lagged sales. Panel (a) shows that households with stronger lag-sales effects have lower baseline purchase inclination. Panel (b) shows that the price marginal effect becomes more negative for households with stronger lag-sales effects, while the uncertainty band widens at high lag-sales coefficients. Together, these patterns suggest that households with low baseline demand are more state dependent and, when they have recently purchased, exhibit stronger price responsiveness.

\begin{figure}[tb!]
\caption{Household heterogeneity and state dependence in pasta demand}
\centering
\includegraphics*[width=\textwidth]{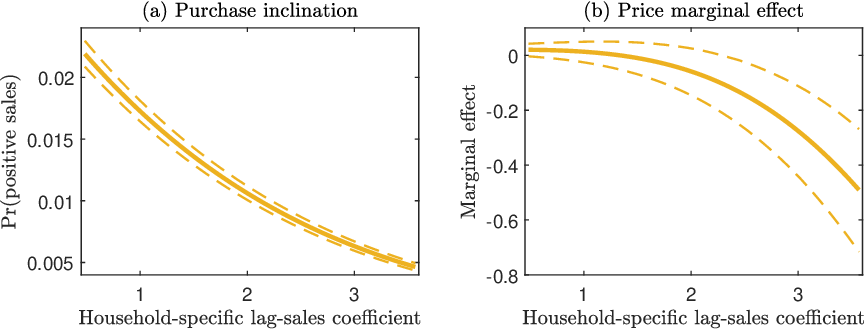}
\begin{flushleft}
\footnotesize
This figure shows the probability of positive pasta sales with average price index and zero lagged sales (Panel (a)) and the marginal effect of the log price index on expected sales with median price index and median positive lagged sales (Panel (b)), as a function of the household-specific lag-sales coefficient. Promotions are set to zero. The solid lines represent the point estimates and the dashed lines the 95\% confidence intervals. 
\end{flushleft}
\label{fig:tobit}
\end{figure}

% test offdiags covmat re are zero
% specifically test corr intercept and lag coef
The random effects Tobit model allows for household heterogeneity in baseline purchase inclination, price sensitivity, and state dependence. Restricted specifications often assume that these sources of heterogeneity are independent, or that state dependence is homogeneous across households. SEGA enables formal testing of such restrictions via Wald tests. We first test whether the random effects are independent, $H_0:\Omega_{0p}=\Omega_{0\ell}=\Omega_{p\ell}=0$, where the subscripts denote the random intercept, price-index coefficient, and lag-sales coefficient. We also test the economically salient restriction that baseline purchase inclination and state dependence are unrelated, $H_0:\Omega_{0\ell}=0$. Both restrictions are decisively rejected, with both test statistics larger than 2900, exceeding their corresponding 1\% critical values of 11.345 and 6.635. % 2.9730e+03 and 2.9029e+03
\section{Theoretical properties of SEGA}\label{sec:theory}
In this section, we show that SEGA produces inferences that are first-order asymptotically equivalent to the infeasible MLE when the number of iterations in the algorithm scale appropriately with $n$. Critically, unlike standard simulation-based estimators, such as SML and MSS, we do not require that the number of simulations used to approximate the score diverges: first-order asymptotic equivalence follows using a single simulated draw of the latent variables at each iteration of the algorithm. 

\subsection{Finite-sample concentration}
In this section, we analyze the randomness in SEGA due to the simulated latent variables generated along the iteration path. To make the algorithms dependence on the simulated variables $z_{1:n}$ explicit, in what follows we write $\wh{m}_n(\theta;z_{1:n})$. 

The iterates in Algorithm~\ref{alg:sega} can be written as
\begin{align}
\theta^{(k+1)}_n&=\theta^{(k)}_n+\eta^{(k+1)}_nm_n(\theta^{(k)}_n)-\eta_n^{(k+1)}\left\{m_n(\theta^{(k)}_n)-\widehat{m}_n(\theta^{(k)}_n,\widetilde{z}^{(k)}_{1:n})\right\},
\label{eq:new}
\end{align}
where the notation $\widetilde{z}^{(k)}_{1:n}$ clarifies that the latent variables used to estimate the score are drawn at the $k$-th step of the algorithm, and are therefore conditional on $\theta_n^{(k)}$. The first term in \eqref{eq:new} is the infeasible gradient-ascent update based on the exact normalized score \(m_n(\theta)\). The second term can be viewed as simulation noise: under Fisher's identity, this noise is a martingale difference sequence (MDS) conditional on the past iterates for any $k\ge1$. Hence, the theoretical behavior of $\theta_n^{(k+1)}$ will be determined by how we can control the MDS term. The formulation in \eqref{eq:new} also shows that the randomness in $\theta_n^{(k+1)}$ is due to the conditional simulation of $\widetilde{z}_{1:n}^{(k)}$ - generated conditionally on $\theta_n^{(k)}$ and the observed data $d_{1:n}$ - clarifying that this randomness depends on the observed path of iterates $\{\theta_n^{(0)},\theta_n^{(1)},\dots,\theta_n^{(k)}\}$.

These observations imply that SEGA estimators have different theoretical behavior to that of other simulation-based estimators, such as MSS. In particular, the conditional simulation of $\widetilde{z}_{1:n}^{(k)}$ introduces feedback between the simulated datasets and the updated parameters values: a draw of $\widetilde{z}_{1:n}^{(k)}$ in the tail of the distribution $p_\theta(\widetilde{z}_{1:n}^{}\mid d_{1:n})$ could deliver an iterate $\theta_n^{(k+1)}$ that lies quite far away from the MLE. Hence, the analysis of any individual parameter iterate may not be that informative, as such draws of $\widetilde{z}_{1:n}^{}$ will occur in practice. Consequently, the most useful way to understand the behavior of the sequence $\theta_n^{(k+1)}$ is by analyzing its average behavior, where we average over $\widetilde{z}_{1:n}$ and $d_{1:n}$. 

To investigate the average behavior of $\theta_n^{(k)}$, we maintain a set of regularity conditions for the conditional expectation of  $\widehat{m}_n(\theta;\widetilde{z}_{1:n})$ that are similar to those encountered in the literature on SGA methods. % (see, e.g., \citealp{moulines2011non}). 
Let $\E$ denote the expectation with respect to the observed and simulated data. Let $(\mathcal F_k)_{k\ge0}$ denote the sigma-field generated by the observed data, the initial value $\theta_n^{(0)}$, and all simulated latent variables used up to iteration $k-1$. We maintain the following assumptions.

\begin{assumption}\label{ass:mds}
%Let $(\mathcal{F}_k)_{k\ge0}$ be an increasing family of $\sigma$-fields and let  $\theta^{(0)}_n$ be $\mathcal{F}_0$-measurable. 
The following conditions hold with probability one (wp1). 

\noindent (i) For each $k\ge1$:
$$
\E[\widehat{m}_n(\theta^{(k)}_n;z^{(k)}_{1:n})\mid \mathcal{F}_{k}]=m_n(\theta^{(k)}_n);
$$ 
\noindent(ii) There exists a $\nu\in\mathbb{R}_+$ such that, for each $k\ge1$: 
$$
%\E_n\left[\|\widehat{m}_n(\theta^{(k)}_n;z^{(k)}_{1:n})\|_2^2\mid\mathcal{F}_{k}\right]\le\nu^2,\quad\text{ and  }
\E\left[\|\widehat{m}_n(\hat\theta_n;z^{(k)}_{1:n})\|_2^2\mid\mathcal{F}_{k}\right]\le\nu^2;
$$ 
\noindent(iii) For any $\mathcal{F}_{k}$ measurable $\theta_2,\theta_1\in\Theta$, there exists an $L>0$ such that 
$$
\E[\|\widehat{m}_n(\theta_2;z^{(k)}_{1:n})-\widehat{m}_n(\theta_1;z^{(k)}_{1:n})\|^2_2\mid\mathcal{F}_{k}]\le L^2\|\theta_2-\theta_1\|^2_2.
$$ 
\end{assumption}

Assumption~\ref{ass:mds} collects the regularity conditions required for the simulated score. Part~(i) is the conditional unbiasedness implied by Fisher's identity. This is automatically satisfied by SEGA if $z^{(k)}_{1:n}$ is an exact draw from its conditional distribution. If draws are produced by Gibbs sampling, this condition assumes the Gibbs chain to be at stationarity. Part~(ii) requires the stochastic score to have a uniformly bounded conditional second moment at the MLE $\hat\theta_n$. Part~(iii) is a mean-square Lipschitz condition in the parameter, evaluated using the same simulated latent variables. 
In order to provide non-asymptotic, in $n$ and $k$, convergence results for $\theta_n^{(k)}$ in Algorithm~\ref{alg:sega}, we require one additional assumption. %This assumption is well-known in the literature on the theoretical analysis of SGD (see, e.g., \citealp{moulines2011non}, and \citealp{chen2020statistical}). 

\begin{assumption}\label{ass:conv} The objective function $\ell_n(\theta)$ is continuously differentiable and $\mu$-strongly concave: for some $\mu>0$, and any $\theta_2,\theta_1\in\Theta$,
$$\ell_n\left(\theta_2\right) \leq \ell_n\left(\theta_1\right)+\left\langle\nabla_\theta \ell_n\left(\theta_1\right), \theta_2-\theta_1\right\rangle-\frac{n\mu}{2}\left\|\theta_1-\theta_2\right\|_2^2 .$$%Further, assume that $\nabla_{\theta\theta}^2\ell_n(\hat\theta)$ exists.
\end{assumption}

Assumption~\ref{ass:conv} is a global strong-concavity condition on the sample log likelihood; this condition is equivalent to the requirement that the normalized objective $n^{-1}\ell_n(\theta)$ is $\mu$-strongly concave, so the normalized score $m_n(\theta)=n^{-1}\nabla_\theta\ell_n(\theta)$ is strongly monotone with curvature $\mu$. This condition is stronger than what is required for asymptotic likelihood theory, where local curvature around the MLE is sufficient. We impose it here to obtain a finite-$n$ and finite-$k$ concentration bound for the SEGA iterates and $\overline\theta_n$. An asymptotic version can be obtained under weaker local conditions. 

While Assumptions~\ref{ass:mds}-\ref{ass:conv} may appear strict, {related conditions have been maintained by, e.g., \cite{moulines2011non} and \cite{chen2020statistical} when providing non-asymptotic convergence results for different SGA algorithms than SEGA.} Moreover, with effort, these conditions can be verified in the MNP example under low-level sufficient conditions, see Supplemental Appendix~\ref{A:lemmas} for details.
\begin{lemma}\label{lemma:MNP}
Under the regularity conditions in Supplemental Appendix~\ref{A:lemmas}, Assumptions~\ref{ass:mds}-\ref{ass:conv} are satisfied for the MNP model.   
\end{lemma}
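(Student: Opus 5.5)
The plan is to verify each part of Assumptions~\ref{ass:mds}--\ref{ass:conv} in turn. Throughout we work with the MNP augmented log-density $\ell(\theta,z_i)=\log\phi_J(z_i;x_i\beta,\Sigma(\xi))$. We rely on the low-level conditions in Supplemental Appendix~\ref{A:lemmas}, which we take to include: (a) a compact parameter set $\Theta$ on which $\Sigma(\xi)$ has eigenvalues bounded away from $0$ and $\infty$; (b) bounded covariates $\|x_{ij}\|\le M_x$; (c) exact draws from the truncated normal \eqref{eq:pz_mnp}; and (d) local curvature of the normalized likelihood.

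\textbf{Part (i).} This is Fisher's identity \eqref{Eq:FisherScore} applied conditionally. Given $\mathcal F_k$, the iterate $\theta_n^{(k)}$ is fixed, and $\tilde z^{(k)}_{1:n}$ is an exact draw from $p_{\theta_n^{(k)}}(\cdot\mid d_{1:n})$. Hence the conditional expectation of $\widehat m_n$ is the integral in \eqref{Eq:FisherScore}, which equals $m_n(\theta_n^{(k)})$. Differentiation under the integral is justified because the Gaussian score is polynomial in $z_i$ and dominated uniformly on compact $\Theta$.

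\textbf{Parts (ii) and (iii).} We use the explicit form in \eqref{eq:grad_augpost_mnp}:
\begin{align*}
\nabla_\beta\ell&=x_i^\top\Sigma^{-1}(z_i-x_i\beta),\\
\nabla_{\xi}\ell&=\tfrac12 \partial_\xi\mathrm{vec}(\Sigma^{-1})^\top\mathrm{vec}\{\Sigma-(z_i-x_i\beta)(z_i-x_i\beta)^\top\}.
\end{align*}
Both are polynomials of degree at most two in $u_i=z_i-x_i\beta$. Their coefficients are smooth functions of $\theta$ (through the spherical map \eqref{eq:transformation} and $\Phi$), so the coefficients and their $\theta$-derivatives are bounded on compact $\Theta$.

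For (ii), Jensen's inequality gives $\|\widehat m_n\|^2\le n^{-1}\sum_i\|\nabla\ell(\hat\theta_n,z_i)\|^2$. It then suffices to bound the fourth moments of $u_i$ under the truncated normal $p_{\theta_n^{(k)}}(z_i\mid d_i)$, uniformly in $\theta\in\Theta$ and in $x_i$.

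For (iii), the mean value theorem in $\theta$ with the same $z$ gives a bound of the form $\|\nabla\ell(\theta_2,z)-\nabla\ell(\theta_1,z)\|\le C(1+\|z\|^2)\|\theta_2-\theta_1\|$. This again reduces the condition to a uniform fourth-moment bound on $z_i$ under the conditional law.

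\textbf{Main obstacle.} The key step is this uniform moment bound, because truncation can push mass into regions where the choice probability $P_{iy_i}(\theta)$ is tiny. The tool I would try first is
\[
\E[\|z_i\|^4\mid d_i]\le \E_{\phi}[\|z_i\|^4]/P_{iy_i}(\theta).
\]
The choice probabilities are bounded below by some $p_{\min}>0$ uniformly over the compact set, by (a), (b) and continuity. The same bound can be stated with $\hat\theta_n$ in place of $\theta_n^{(k)}$; the $\mathcal F_k$-measurability of $\theta_1,\theta_2$ only matters through conditioning.

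\textbf{Assumption~\ref{ass:conv}.} Global strong concavity does not hold automatically for the MNP likelihood. I would take the appendix condition to be that $-n^{-1}\nabla^2\ell_n(\theta)\succeq\mu I$ on $\Theta$, possibly through an assumed lower bound on the Louis-identity Hessian. Alternatively, I would restrict $\Theta$ to a convex neighbourhood of $\hat\theta_n$ on which continuity of the Hessian, which is bounded via the moment bounds above, preserves the local curvature. Integrating this Hessian bound along segments in the convex set $\Theta$ yields the inequality in Assumption~\ref{ass:conv}.
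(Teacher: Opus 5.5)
Your proposal is correct and has the same overall structure as the paper's verification. Part (i) comes from Fisher's identity with exact conditional draws. Parts (ii) and (iii) are reduced to conditional moment bounds for truncated-normal draws, controlled by a lower bound on the choice probabilities. Assumption~\ref{ass:conv} is obtained by imposing a Hessian curvature condition and integrating it along segments; the paper uses the equivalent second-order Taylor expansion.

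The differences are in execution:
\begin{itemize}
\item \emph{Part (iii).} The paper switches to the precision-matrix parametrization $\theta=(\beta^\top,\mathrm{vech}(\Sigma^{-1})^\top)^\top$. It proves (iii) by an explicit two-step calculation (perturb $\beta$, then $\Sigma^{-1}$), giving a random Lipschitz constant $\tilde L(z_{1:n})$ whose second moment it then bounds. You stay in the $\xi$-parametrization the algorithm actually uses and replace that calculation with a mean-value argument and the envelope $C(1+\|z\|^2)$. This is shorter and shows that fourth moments are all that is needed. The segment requirement is harmless: $C(\xi)$ has positive diagonal for every $\xi$, so the coefficient bounds extend to the compact convex hull of $\Theta$.
\item \emph{Moment bounds.} The paper bounds $\E[\|z_i\|^2\mid\mathcal F_k]$ by the truncation argument. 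It then reaches fourth moments through the unproved comparison $\E[z_{ij}^4]\le C(\E[z_{ij}^2])^2$, and it assumes the choice-probability lower bound outright. Your inequality $\E[\|z_i\|^4\mid d_i]\le \E_\phi[\|z_i\|^4]/P_{iy_i}(\theta)$ gets the fourth moment directly. You also derive the lower bound on $P_{iy_i}$ from compactness and continuity. Both steps are cleaner and avoid the unproved comparison.
\item \emph{Assumption~\ref{ass:conv}.} Your normalization $-n^{-1}\nabla_\theta^2\ell_n\succeq\mu I$ on a convex set gives exactly the $n\mu/2$ term the assumption requires. The paper's curvature condition is stated for the unnormalized Hessian on a neighbourhood of $\hat\theta_n$, so it yields $\mu/2$, and only locally.
\end{itemize}
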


To obtain our key results, we require additional structure on the stepsize which we impose through the following assumption. 
\begin{assumption}\label{ass:iters}
The step-size satisfies $\eta_n^{(k)}\propto\eta_n k^{-\alpha}$, with $\alpha\in(1/2,1)$, and $0<\underline{c}\le \eta_n\le \overline{c}$ (wp1), with $\eta_n$ being $\mathcal{F}_0$ measurable.
\end{assumption}
Assumption \ref{ass:iters} restricts how fast $\eta_n^{(k)}$ decreases to zero. The condition $\alpha\in(1/2,1)$ is the usual Robbins--Monro range, which ensures that $\sum_k \eta_n^{(k)}=\infty$ and $\sum_k \{\eta_n^{(k)}\}^2<\infty$. 
Although the original formulation of ADADELTA  given in Algorithm \ref{alg:sega} performs well in practice, its update need not satisfy Assumption \ref{ass:iters}. One may therefore use ADADELTA during the initial iterations and, once the smoothed iterates plateau, switch to the scaled update in Assumption \ref{ass:iters}.

We are now ready to state our first main result, which bounds the deviation between the MLE, $\hat\theta_n$, and the SEGA iterates. Proofs of all subsequent results are deferred to Appendix~\ref{A:proofs}. 
\begin{theorem}\label{thm:main}
Under Assumptions \ref{ass:mds}-\ref{ass:iters},
$$
\E\|\theta^{(k)}_n-\hat\theta\|^p\lesssim k^{-\frac{p\alpha}{2}}(1+\E\|\theta^{(0)}_n-\hat\theta_n\|^2)^{p/2},\quad p\in\{1,2\}.
$$
\end{theorem}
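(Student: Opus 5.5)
We prove the case $p=2$ by a recursive mean-square bound in the style of \citet{moulines2011non}; the case $p=1$ then follows from Jensen's inequality, $\E\|\theta^{(k)}_n-\hat\theta_n\|\le(\E\|\theta^{(k)}_n-\hat\theta_n\|^2)^{1/2}$.

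Write $\delta_k=\theta^{(k)}_n-\hat\theta_n$ and $\gamma_k=\eta_n^{(k)}$, treating the step as a scalar (or a diagonal with uniformly bounded entries). From the update in Algorithm~\ref{alg:sega}, expand
$$
\|\delta_{k+1}\|^2=\|\delta_k\|^2+2\gamma_{k+1}\langle\delta_k,\widehat m_n(\theta^{(k)}_n;\widetilde z^{(k)}_{1:n})\rangle+\gamma_{k+1}^2\|\widehat m_n(\theta^{(k)}_n;\widetilde z^{(k)}_{1:n})\|^2 .
$$
Take $\E[\cdot\mid\mathcal F_k]$ and use the following ingredients.

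\textbf{Cross term.} By Assumption~\ref{ass:mds}(i), the conditional mean of the cross term equals $2\gamma_{k+1}\langle\delta_k,m_n(\theta^{(k)}_n)\rangle$. Here $\gamma_{k+1}$ is $\mathcal F_k$-measurable, since $\eta_n$ is $\mathcal F_0$-measurable by Assumption~\ref{ass:iters}. Assumption~\ref{ass:conv} makes $m_n$ strongly monotone, and $m_n(\hat\theta_n)=0$ at the interior maximiser. Hence $\langle\delta_k,m_n(\theta^{(k)}_n)\rangle\le-\mu\|\delta_k\|^2$.

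\textbf{Quadratic term.} Split $\widehat m_n(\theta^{(k)}_n)=\{\widehat m_n(\theta^{(k)}_n)-\widehat m_n(\hat\theta_n)\}+\widehat m_n(\hat\theta_n)$, both evaluated at the same draw $\widetilde z^{(k)}_{1:n}$. Apply $\|a+b\|^2\le2\|a\|^2+2\|b\|^2$, then Assumption~\ref{ass:mds}(iii) with $\theta_2=\theta^{(k)}_n$ and $\theta_1=\hat\theta_n$ (both $\mathcal F_k$-measurable), and Assumption~\ref{ass:mds}(ii). This gives a conditional bound of at most $2L^2\|\delta_k\|^2+2\nu^2$.

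Combining the two terms and taking full expectations, $a_k=\E\|\delta_k\|^2$ satisfies
$$
a_{k+1}\le(1-2\mu\gamma_{k+1}+2L^2\gamma_{k+1}^2)\,a_k+2\nu^2\gamma_{k+1}^2 .
$$
Since $\eta_n\in[\underline c,\overline c]$, we may replace $\gamma_k$ by deterministic bounds $c_1k^{-\alpha}\le\gamma_k\le c_2k^{-\alpha}$.

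\textbf{Unrolling the recursion.} This is the main obstacle, because for small $k$ the factor $1-2\mu\gamma+2L^2\gamma^2$ may exceed one. Fix $k^\ast$, a constant depending only on $\mu,L,\overline c,\alpha$, such that $2L^2\gamma_k\le\mu$ for $k\ge k^\ast$. For $k<k^\ast$, the factor is at most $\exp(2L^2\gamma_k^2)$, and because $\sum_k k^{-2\alpha}<\infty$ when $\alpha>1/2$, we get $a_{k^\ast}\lesssim a_0+1$. For $k\ge k^\ast$ the recursion becomes $a_{k+1}\le(1-\mu\gamma_{k+1})a_k+2\nu^2\gamma_{k+1}^2$, which unrolls to
$$
a_k\le\prod_{j=k^\ast+1}^{k}(1-\mu\gamma_j)\,a_{k^\ast}+2\nu^2\sum_{i=k^\ast+1}^{k}\gamma_i^2\prod_{j=i+1}^{k}(1-\mu\gamma_j).
$$
The first term decays like $\exp(-c\,k^{1-\alpha})$, which is $o(k^{-\alpha})$. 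For the sum, split at $i=k/2$, as in \citet{moulines2011non}:
\begin{itemize}
\item For $i\le k/2$, the product is at most $\exp(-c\,k^{1-\alpha})$ and the sum of $\gamma_i^2$ is bounded, so this part is $O(e^{-ck^{1-\alpha}})$.
\item For $i>k/2$, bound $\gamma_i^2\le\gamma_{k/2}\,\gamma_i$. The resulting sum of $\gamma_i\prod_{j>i}(1-\mu\gamma_j)$ is at most $1/\mu$ by telescoping, so this part is at most $\gamma_{k/2}/\mu\lesssim k^{-\alpha}$.
\end{itemize}
Hence $a_k\lesssim k^{-\alpha}(1+a_0)$, which is the case $p=2$. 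Jensen's inequality then gives $\E\|\delta_k\|\lesssim k^{-\alpha/2}(1+a_0)^{1/2}$, as stated.
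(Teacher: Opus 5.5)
Your proof is correct and follows essentially the same route as the paper's: the same expansion of $\|\delta_{k+1}\|^2$, the same Lipschitz/variance split of the quadratic term, and strong concavity for the cross term. You get the sharper constant $\mu$ via monotonicity, while the paper gets $\mu/2$ via its concavity lemma. You also use a threshold $k^\ast$ to absorb the $\gamma_k^2$ term, and an unrolling split at $k/2$ that is precisely the argument behind Lemma~B.2 of \citet{chen2020statistical}, which the paper cites with $m=\lfloor k/2\rfloor$ rather than proving inline. Your explicit treatment of the phase $k<k^\ast$, with $k^\ast$ made deterministic via $\overline c$, usefully supplies the uniform bound $\E\|\delta_{\lfloor k/2\rfloor}\|^2\lesssim 1+\E\|\delta_0\|^2$ that the paper uses only implicitly when it invokes the law of iterated expectations; the one caveat is your parenthetical about diagonal step sizes, which would need the weighted norm $\|\cdot\|_{D^{-1}}$ (with $D$ the fixed diagonal preconditioner) rather than the Euclidean norm, although the scalar step of Assumption~\ref{ass:iters} is all the theorem requires.
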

The individual iterates $\theta^{(k)}_n$ are noisy due to the simulation of the latent variables $z_{1:n}$. Given this, we may obtain more precise inferences by instead considering the smoothed version of the iterates $\overline\theta_n$. It is fairly direct to show that $\overline\theta_n$ satisfies a version of Theorem \ref{thm:main}. 
\begin{corollary}\label{corr:smoothed}
Under Assumptions \ref{ass:mds}-\ref{ass:iters},
$$\E\|\overline\theta_n-\hat\theta\|^p\lesssim k^{-\frac{p\alpha}{2}}(1+\E\|\theta^{(0)}_n-\hat\theta_n\|^2)^{p/2},\quad p\in\{1,2\}.
$$
\end{corollary}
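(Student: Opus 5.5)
The corollary should follow from Theorem~\ref{thm:main} by convexity of the norm, with no new probabilistic argument. Write $\overline\theta_n-\hat\theta_n=\frac{1}{k-k_0}\sum_{j=k_0+1}^{k}(\theta_n^{(j)}-\hat\theta_n)$. For $p=1$, the triangle inequality and linearity of expectation give
\[
\E\|\overline\theta_n-\hat\theta_n\|\le \frac{1}{k-k_0}\sum_{j=k_0+1}^{k}\E\|\theta_n^{(j)}-\hat\theta_n\|.
\]
For $p=2$, Jensen's inequality (convexity of $\|\cdot\|^2$) gives
\[
\E\|\overline\theta_n-\hat\theta_n\|^2\le \frac{1}{k-k_0}\sum_{j=k_0+1}^{k}\E\|\theta_n^{(j)}-\hat\theta_n\|^2.
\]
These bounds hold pathwise before taking expectations, so the dependence among iterates induced by the conditional simulation is irrelevant here.

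Next, apply Theorem~\ref{thm:main} termwise. For each $j$ this gives $\E\|\theta_n^{(j)}-\hat\theta_n\|^p\le C\,j^{-p\alpha/2}(1+\E\|\theta_n^{(0)}-\hat\theta_n\|^2)^{p/2}$, where $C$ does not depend on $j$. This uniformity must be checked in the proof of the theorem. It holds because the constant there depends only on $\mu,L,\nu,\underline c,\overline c,\alpha$, and these are fixed by Assumptions~\ref{ass:mds}--\ref{ass:iters}. The initial-condition factor is common to all terms, so it factors out of the average.

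It then remains to bound the deterministic average $\frac{1}{k-k_0}\sum_{j=k_0+1}^k j^{-p\alpha/2}$ by a constant times $k^{-p\alpha/2}$. This is the only real step. Since $p\alpha/2<1$, comparison with an integral gives $\sum_{j=k_0+1}^k j^{-p\alpha/2}\le \frac{k^{1-p\alpha/2}}{1-p\alpha/2}$. Dividing by $k-k_0$ yields at most $\frac{1}{1-p\alpha/2}\cdot\frac{k}{k-k_0}\,k^{-p\alpha/2}$.

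The point to watch is the factor $k/(k-k_0)$, which must stay bounded. I would make explicit the standing convention, implicit in the paper's ``$1\le k_0\ll k$'' and matching its practice of averaging the final quarter of the iterations, that $k_0\le c\,k$ for some fixed $c<1$. Under that convention $k/(k-k_0)\le 1/(1-c)$, and absorbing all constants into $\lesssim$ gives the stated rate for both $p=1$ and $p=2$. If $k_0$ is instead allowed to be arbitrary, I would fall back on the cruder bound $j^{-p\alpha/2}\le k_0^{-p\alpha/2}$, which only delivers the rate in $k_0$. For that reason I would state the proportionality condition on $k_0$ explicitly.
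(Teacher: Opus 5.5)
Your proposal is correct and is essentially the paper's proof. Like the paper, you use convexity of the norm (and you also write out the Jensen step for $p=2$, which the paper dismisses as ``the same way''), apply Theorem~\ref{thm:main} termwise, and bound $\sum_{j=k_0+1}^{k} j^{-p\alpha/2}\le k^{1-p\alpha/2}/(1-p\alpha/2)$ using $k-k_0\asymp k$, which the paper also invokes. As a side remark, the proportionality condition on $k_0$ is not actually needed: $j\mapsto j^{-p\alpha/2}$ is decreasing, so its average over $\{k_0+1,\dots,k\}$ never exceeds its average over $\{1,\dots,k\}$, which is at most $k^{-p\alpha/2}/(1-p\alpha/2)$. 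Equivalently, your fallback bound using $j\ge k_0+1$ already gives the $k$-rate in exactly the regime $k_0>k/2$ where $k/(k-k_0)$ is uncontrolled.
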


\subsection{Asymptotic equivalence to the infeasible MLE}
We now study the asymptotic behavior of $\bar\theta_n$ as both the sample size \(n\) and the number of iterations \(k_n=k(n)\) diverge.

%\subsubsection{Consistency}
The result in Theorem \ref{thm:main} states that the average deviation between $\theta^{(k)}_n$ and $\hat\theta_n$ can be bounded by a constant multiple of $k^{-\alpha/2}$ and the difference between the starting value of the algorithm, $\theta^{(0)}_n$, and the MLE $\hat\theta_n$. So long as $\E\|\theta_n^{(0)}-\hat\theta_n\|\le C_n$, which includes the case where $\theta_n^{(0)}$ is an arbitrary non-random starting point restricted to lie in $\Theta$, with $\Theta$ compact, then we can immediately prove convergence in distribution of $\overline{\theta}^{}_n$. To state such a result, we make the relationship between $k$ and $n$ formal via the following assumption. 

\begin{assumption}\label{ass:iters2}
 %The number of iterations, $k_n=k(n)$, satisfies $k_n\rightarrow\infty$ as $n\rightarrow\infty$. 
 The number of iterations, $k_n=k(n)$, satisfies $\sqrt{n} / k_n^{\alpha / 2}=o(1)$ as $n \rightarrow \infty$.
\end{assumption}
Assumption \ref{ass:iters2} ensures that the optimization error from SEGA is asymptotically negligible relative to the $n^{-1/2}$ sampling error of the infeasible MLE. For example, using an iteration schedule like $k(n)\asymp n^{1/\beta}$ requires choosing $\alpha>\beta$ and satisfies Assumption \ref{ass:iters2}.

To derive our ultimate result in as direct a manner as possible, we also maintain the following condition on the initial value. This condition can be weakened at the cost of imposing additional assumptions that are less interpretable. 
\begin{assumption}\label{ass:starting}
There exists a random variable $C_n$ such that $\|\theta_n^{(0)}-\hat{\theta}_n\| \leq C_n$ where $\E(C_n^2)<\infty$.
\end{assumption}

%Stochastic approximation algorithms generally do not deliver an efficient limiting distribution, however, averaging can remove this inefficiency by averaging the stochastic iterates after burn-in (\citealp{polyak1992acceleration}). Indeed, 
The following result shows that the smoothed SEGA estimator $\overline\theta_n$ is first-order asymptotically equivalent to the infeasible MLE.

\begin{theorem}\label{thm:limit}
 Assumptions \ref{ass:mds}-\ref{ass:starting} are satisfied. If $\sqrt{n}(\hat{\theta}_n-\theta_0)\Rightarrow N(0,V)$,  then  $\sqrt{n}(\overline\theta_n-\theta_0)\Rightarrow N(0,V)$.
\end{theorem}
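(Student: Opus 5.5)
The argument is a Slutsky-type decomposition that transfers the assumed limit law of the MLE to the smoothed SEGA estimator. We write
$$
\sqrt{n}(\overline\theta_n-\theta_0)=\sqrt{n}(\hat\theta_n-\theta_0)+\sqrt{n}(\overline\theta_n-\hat\theta_n).
$$
The first term converges in distribution to $N(0,V)$ by hypothesis. It therefore suffices to show that $\sqrt{n}(\overline\theta_n-\hat\theta_n)=o_p(1)$, with probability taken jointly over the observed data and the simulated latent variables. Slutsky's lemma then gives the conclusion. The whole task is to turn the finite-sample bound in Corollary~\ref{corr:smoothed} into this negligibility statement, using Assumptions~\ref{ass:iters2} and~\ref{ass:starting}.

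\textbf{Step 1.} Apply Corollary~\ref{corr:smoothed} with $p=1$ and $k=k_n$:
$$
\E\|\overline\theta_n-\hat\theta_n\|\le C\,k_n^{-\alpha/2}\bigl(1+\E\|\theta^{(0)}_n-\hat\theta_n\|^2\bigr)^{1/2}.
$$
Here $C$ is the implied constant, depending on $\mu,L,\nu,\underline c,\overline c,\alpha$. We must check that $C$ does not depend on $n$. The constants in Assumptions~\ref{ass:mds}--\ref{ass:iters} are stated without $n$-dependence, and $\eta_n$ is bounded in $[\underline c,\overline c]$, so this should hold. If $\nu$ or $\mu$ were allowed to vary with $n$, one would need them to be $O_p(1)$ and bounded away from zero respectively.

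\textbf{Step 2.} Use Assumption~\ref{ass:starting} to bound $\E\|\theta^{(0)}_n-\hat\theta_n\|^2\le \E C_n^2$. We need this bound to be $O(1)$ uniformly in $n$; I read ``$\E(C_n^2)<\infty$'' as a uniform bound $\sup_n\E C_n^2<\infty$. Then
$$
\sqrt{n}\,\E\|\overline\theta_n-\hat\theta_n\|\lesssim \frac{\sqrt{n}}{k_n^{\alpha/2}}\bigl(1+\sup_n\E C_n^2\bigr)^{1/2}=o(1)
$$
by Assumption~\ref{ass:iters2}. Markov's inequality then gives $\sqrt{n}\|\overline\theta_n-\hat\theta_n\|=o_p(1)$, and Slutsky completes the proof.

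\textbf{Main obstacle.} The delicate point is uniformity. The conditions in Assumptions~\ref{ass:mds}--\ref{ass:conv} hold conditionally on the data (wp1), and the constants in Corollary~\ref{corr:smoothed} must not blow up with $n$. If $\mu$ is only guaranteed with probability tending to one, say $\mu_n\ge\mu$ on an event $A_n$ with $P(A_n)\to1$, I would handle this by conditioning. First apply the bound on $A_n$ with conditional expectations given the data. Then apply conditional Markov on that event. Finally note that $P(A_n^c)\to0$, which still yields $o_p(1)$. A secondary issue is that Polyak averaging starts at $k_0$. The corollary as stated already covers $\overline\theta_n$, so no separate treatment of burn-in is needed provided $k_n-k_0\asymp k_n$.
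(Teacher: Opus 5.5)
Your proposal is correct and follows essentially the same route as the paper: the decomposition $\sqrt{n}(\overline\theta_n-\theta_0)=\sqrt{n}(\hat\theta_n-\theta_0)+\sqrt{n}(\overline\theta_n-\hat\theta_n)$, Corollary~\ref{corr:smoothed} with $p=1$, Markov's inequality combined with Assumptions~\ref{ass:iters2}--\ref{ass:starting}, and then Slutsky. Your explicit requirement that the constants be free of $n$ and that $\sup_n\E C_n^2<\infty$ is exactly what the paper uses implicitly when it writes $\E(C_n^2)=M<\infty$ with $M$ not depending on $n$.
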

Theorem \ref{thm:limit} is the main inferential result for SEGA: for a sufficiently large number of iterations, SEGA inherits the limiting distribution of the infeasible MLE, including the usual sandwich form under misspecification as defined in Section~\ref{sec:inference}.

\section{Numerical experiments}\label{sec:experiments}
We use numerical experiments to evaluate the finite-sample performance of SEGA in the multinomial probit model. We compare SEGA with the method of simulated scores (MSS), simulated maximum likelihood (SML), and the exact MLE. We examine whether the theoretical concentration and asymptotic equivalence results for SEGA translate into accurate and computationally scalable finite-sample inference. Experiments with the random effects Tobit model are reported in Supplemental Appendix~\ref{A:tobit_simulation}.

\subsection{Design and implementation}
We consider the MNP model described in Section~\ref{sec:examplesMNP}. The data is generated according to \eqref{eq:Y_i} and \eqref{eq:Z_i}. The predictor matrix $x_i$ has rows $x_{ij} = (e_j^\top, z_{ij})$, where $e_j$ denotes the $j$-th standard basis vector in $\mathbb{R}^J$ and $z_{ij} = \log p_{ij} - \log p_{i0}\sim N(0,1)$ denotes the log price difference of alternative $j$ relative to the base. Hence, the model contains $J$ alternative-specific intercepts and one price coefficient. Experiments with $J$ equal to 2, 3, and 6, set $\beta = [0.5, -0.5, -0.8]^\top$, $\beta = [0.5, -0.5, -0.2, -0.8]^\top$, and $\beta = [-0.5, -0.30,-0.1,0.1,0.3,0.5,-0.8]^\top$, respectively. All experiments assume $\Sigma = 0.5 I_J + 0.5 \iota_J \iota_J^\top$.

SEGA is implemented with 20,000 iterations with the final 1,000 iterations used for smoothing. We use $S=20,000$ draws for the variance estimator. SML is implemented using an off-the-shelf optimization routine with the analytical score of the simulated likelihood derived in \citet{bolduc1999practical}. Following \citet{gates2006mata}, the number of simulation draws is set to $(J+1)\times 50$. MSS is implemented using an off-the-shelf root finding algorithm where the score vector equation is evaluated using the Gibbs resampling simulator proposed in \cite{hajivassiliou1998method}. We vary the number of simulation draws \(R\in\{10,100\}\) and the number of Gibbs-resampling steps
\(G\in\{1,5,10,15,20\}\). When $J=2$, we compute the exact MLE by numerical integration, described in Supplemental Appendix~\ref{A:mnp_MLE}.

\subsection{Simulation noise and comparison with MSS}\label{sec:MSS_Emp}
In this section, we compare the simulation noise inherent in SEGA against that in MSS. As discussed in Section \ref{sec:MSS_discuss}, MSS can deliver unstable parameter estimates if the number of simulations $S$ is not chosen large enough, or if the Gibbs sampler used to estimate the scores does not draw from the stationary distribution. This latter behavior can occur even with small choice problems when the Gibbs sampler does not take enough steps, and, in general, as the dimension of the choice set increases we must run the Gibbs sampler longer to draw from the correct stationary distribution. 

To isolate the simulation noise in the SEGA and MSS algorithms, we consider one fixed simulated dataset with $n = 10{,}000$ observations and repeatedly estimate the model using SEGA and MSS. Since the data are held fixed, variation across repetitions reflects algorithmic noise rather than sampling variation. We compare SEGA and MSS to the estimate obtained from SML, which provides an estimate close to the MLE in this setting. 

Figure~\ref{fig:HMcompare_densities} shows the price coefficient estimates across 100 repetitions. The results are qualitatively similar across all parameters. The horizontal line denotes the SML estimate. The gold boxplots correspond to SEGA, while the grey boxplots correspond to MSS. Panels~(a) and~(b) report results for MSS with \(R=10\) simulation draws in the three-choice and four-choice models, respectively. Panels~(c) and~(d) repeat the comparison for MSS with \(R=100\) simulation draws.

\begin{figure}[tb!]
\caption{Simulation noise in SEGA and MSS estimates}
\centering
\includegraphics*[width=\textwidth,trim=1.5cm 1cm 1.5cm .8cm,clip]{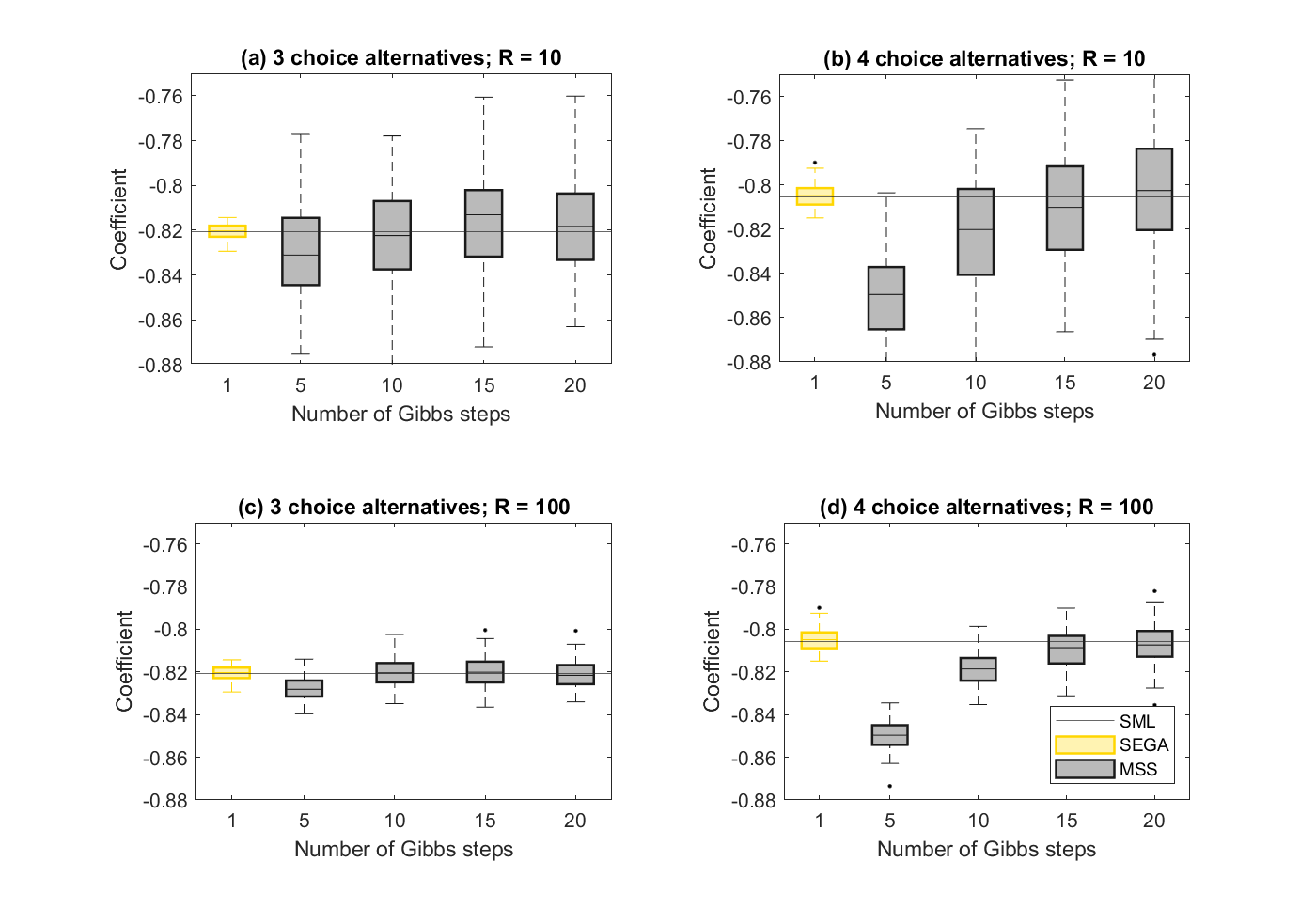}
\begin{flushleft}
\footnotesize
Parameter estimate variation under MSS and SEGA. The first column reports estimates from the three-choice MNP model, and the second column reports estimates from the four-choice MNP model. The horizontal line denotes the SML estimate. Gold boxplots correspond to SEGA, while grey boxplots correspond to MSS. The first and second rows show MSS results using $R=10$ and $R=100$ simulations, respectively. All boxplots are constructed from 100 repetitions. 
\end{flushleft}
\label{fig:HMcompare_densities}
\end{figure}

The figure shows that MSS is highly-sensitive to its tuning parameters. When \(R=10\), the MSS estimates are substantially dispersed across repetitions, indicating substantial Monte Carlo variability in the estimator of the score function used by MSS. Increasing the number of Gibbs-resampling steps reduces bias relative to the benchmark, but does not remove the dispersion. This pattern is especially visible in the four-choice model in Panel~(b), where insufficient resampling produces estimates that are far from the SML estimate. Increasing the number of simulation draws to \(R=100\) reduces the variability of MSS, as shown in Panels~(c) and~(d). However, the remaining bias still depends strongly on the number of Gibbs-resampling steps, especially in the higher-dimensional model. 

In contrast, the SEGA estimates are tightly concentrated around the SML estimate in all panels. This stability arises from the sequential smoothing in SGA, which averages simulation noise progressively over iterations to reduce their variability. In contrast, MSS attempts to reduce simulation noise within each iteration, a strategy that substantially limits applicability in large-scale settings.

The experiment illustrates that the instability of MSS noted in the literature can be directly attributed to its sensitivity of its tuning parameters, together with its dependence on the initialization of the latent variables and model parameters. The design of this experiments is deliberately favorable to MSS: the method is initialized at the true parameter value, whereas SEGA is initialized at default values. When initialized at default values, MSS frequently encounters numerical failures in the root-finding algorithm; for example, with \(G=10\) and \(R=100\), approximately 30\% of runs fail. Given the implementation challenges associated with MSS, we do not include it in the following experiments.

%\subsection{Sampling variability}
\subsection{Sampling behavior and comparison with SML}
The second set of experiments studies sampling variability. We generate $1{,}000$ independent data sets and compare the sampling distributions of SEGA and SML, using the exact MLE as a benchmark when it is computationally available. Figure~\ref{fig:MCdens28} compares the repeated-sampling distributions of SEGA and SML for the price coefficient in the MNP model. Panels~(a) and~(c) report the low-dimensional case with \(J=2\), for which the exact MLE can also be computed by numerical integration. Panels~(b) and~(d) report the higher-dimensional case with \(J=6\), for which exact likelihood evaluation is computationally infeasible. The top row uses \(n=1{,}000\), while the bottom row uses \(n=20{,}000\). The vertical line indicates the true parameter value.

\begin{figure}[tb!]
\caption{Monte Carlo distributions coefficient estimates}
\centering
\includegraphics*[width=\textwidth]{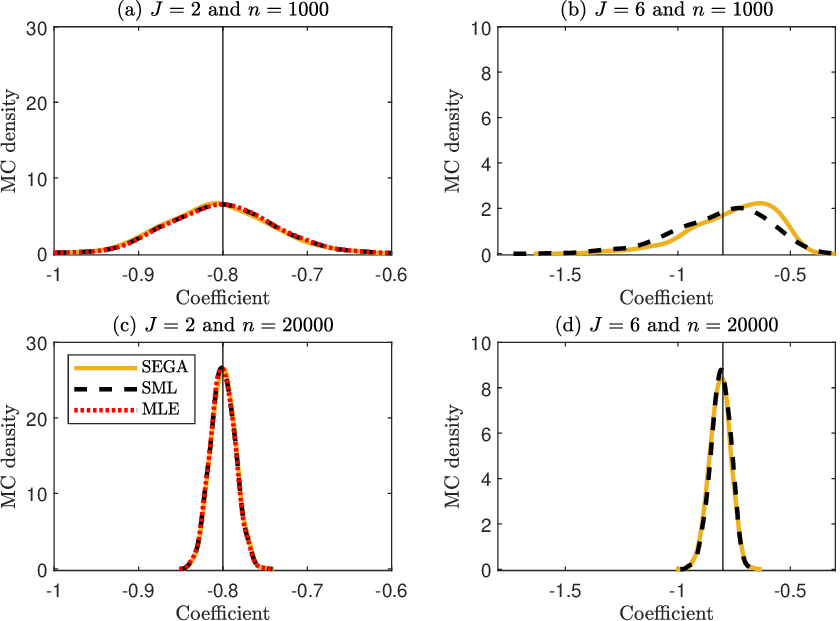}
\begin{flushleft}
\footnotesize
This figure shows the Monte Carlo distributions of the estimates for $\beta_J$ by SEGA (solid line) and SML (dotted line). The panels correspond to Monte Carlo experiments with $J=2,6$ and $n=1000,20000$. The vertical lines indicate the parameter value in the data generating process.
\end{flushleft}
\label{fig:MCdens28}
\end{figure} 

In the low-dimensional case, the SEGA, SML, and exact MLE distributions are virtually indistinguishable. This is visible in Panel~(a) and (c), where all three distributions are tightly concentrated around the true value. These results show that, when the exact MLE is available as a benchmark, SEGA reproduces its sampling behavior closely. The higher-dimensional case is more challenging. In Panel~(b), where \(J=6\) and \(n=1{,}000\), the SEGA and SML distributions are both more dispersed and display skewness. The two distributions are not identical in this small-sample, higher-dimensional setting, reflecting the greater difficulty of estimating the MNP model when the latent utility dimension increases. However, Panel~(d) shows that this difference largely disappears when the sample size increases to \(n=20{,}000\). The SEGA and SML distributions then become tightly concentrated and nearly coincide. This pattern is consistent with the asymptotic equivalence result in Theorem~\ref{thm:limit}: as the sample size grows and the number of SEGA iterations is sufficiently large, the sampling behavior of SEGA matches that of MLE.

Table~\ref{tab:MCstat28} summarizes the same comparison in terms of absolute bias, root mean squared error, and computation time. The bias and RMSE results confirm the message from Figure~\ref{fig:MCdens28}. In the \(J=2\) designs, SEGA, SML, and the exact MLE have nearly identical accuracy. In the \(J=6\) designs, SEGA and SML also deliver similar RMSEs, with only small differences in absolute bias. Thus, the stochastic score approximation used by SEGA does not lead to a meaningful loss of statistical accuracy in these experiments. 

\begin{table}[tb!]
  \centering
  \caption{Monte Carlo bias, RMSE, and computation time}
  \begin{threeparttable}
        \begin{tabular}{llrrrrrr}
        \toprule \toprule
          &       & \multicolumn{3}{c}{$J=2$} & \multicolumn{3}{c}{$J=6$} \\
            \cmidrule(lr){3-5} \cmidrule(lr){6-8}
    \multicolumn{1}{l}{$n$} &       & Absolute bias & RMSE  & Time    & Absolute bias & RMSE  & \multicolumn{1}{l}{Time} \\
    \midrule
    1000  & SEGA  &     0.006  &  0.062  &  0.085  &  0.046  &  0.191 &   0.424  \\
          & SML   &     0.003 &   0.062  &  0.028  &  0.008   & 0.205 &   2.175\\
    & MLE   &     0.002  &  0.062  &  0.275 &     -&      -&      -\\

          \cmidrule(lr){2-8}
    10000 & SEGA  &  0.001  &  0.014 &   0.552  &  0.012  &  0.047 &   3.0580 \\
          & SML   &     0.000  &  0.014 &   0.261  &  0.010  &  0.045 &  22.8210\\
              & MLE   &    0.000  &  0.014  &  7.485  &    -   &   -  &       - \\

          \bottomrule \bottomrule
    \end{tabular}%
\begin{tablenotes}
\footnotesize
\item This table shows the absolute bias and root mean squared error averaged across all parameters in the multinomial probit model. Furthermore, it reports the average computation time in minutes for each method across the replications in each experiment. 
\end{tablenotes}
\end{threeparttable}
  \label{tab:MCstat28}
\end{table}

The main difference between the methods is computational. When \(J=2\), SML is faster
than SEGA because the simulated likelihood is relatively cheap to evaluate. However, the
computational cost of SML grows rapidly with the number of alternatives. For \(J=6\) and
\(n=1{,}000\), SML is already more than five times slower than SEGA. For \(J=6\) and
\(n=20{,}000\), SML is more than seven times slower. This difference reflects the distinct way in which the methods handle integration over the latent utilities. SML approximates choice probabilities at each likelihood evaluation, and this becomes increasingly costly as the dimension of the latent utility vector grows. SEGA instead uses a single draw of the latent variables at each iteration and averages the resulting simulation noise over the stochastic-gradient path.

\subsection{Inferential accuracy}
The final experiment evaluates whether the standard errors implied by the asymptotic distribution in Theorem~\ref{thm:limit} deliver well-calibrated confidence intervals in finite samples. We compute coverage rates across 1,000 Monte Carlo replications for experiments with \(J=2\) and a correctly specified model, \(J=2\) and a misspecified model, and a correctly specified but higher dimensional setting with \(J=6\). 

The misspecified \(J=2\) design generates from the same MNP model but the covariance matrix is made heteroskedastic: \(\Sigma_i = \left(0.1+0.5{z_{i1}^2}\right)^2 \left(0.5 I_J + 0.5 \iota_J\iota_J^\top\right)\). The estimated model nevertheless imposes the homoskedastic MNP covariance specification in \eqref{eq:Z_i}. Hence, the target parameter is the pseudo-true value \( \theta_0=\arg\max_{\theta\in\Theta}E\{\ell_i(\theta)\}\), rather than the parameter value used to generate the data. We approximate this pseudo-true parameter by computing the exact MLE on 100 independent samples of size 100,000 and averaging the resulting estimates. 

Table~\ref{tab:coverage} reports the coverage rates for the experiments with $J=2$. For the correctly specified \(J=2\) model, coverage is close to the nominal 95\% level for all parameters. At \(n=1{,}000\), the coverage rates range from 0.91 to 0.95, and they improve further at \(n=10{,}000\), where all rates lie between 0.94 and 0.95. 
The misspecified \(J=2\) design delivers similarly accurate coverage: the rates are close to 0.95 at both sample sizes, indicating that sandwich-based SEGA inference remains well calibrated when the target is interpreted as the pseudo-true parameter.

\begin{table}[tb!]
\centering
\caption{Empirical coverage of 95\% SEGA confidence intervals}
\label{tab:coverage}
\begin{threeparttable}
\begin{tabular}{lccccc}
\toprule\toprule
Sample size & \(\beta_1\) & \(\beta_2\) & \(\beta_p\) & \(\Sigma_{12}\) & \(\Sigma_{22}\) \\
\hline
&\multicolumn{5}{c}{Correct model specification} \\
\cline{2-6}
\(1{,}000\)  & 0.94 & 0.93 & 0.95 & 0.93 & 0.91 \\
\(10{,}000\) & 0.95 & 0.94 & 0.95 & 0.95 & 0.94 \\
\hline
&\multicolumn{5}{c}{Model misspecification} \\
\cline{2-6}
\(1{,}000\)  & 0.96 & 0.94 & 0.95 & 0.95 & 0.96 \\
\(10{,}000\) & 0.95 & 0.94 & 0.95 & 0.93 & 0.94 \\
 \bottomrule \bottomrule
\end{tabular}
\begin{tablenotes}
\footnotesize
\item This table reports empirical coverage rates of nominal 95\% confidence intervals for the SEGA estimator across 1,000 Monte Carlo replications. The first reports results under correct specification and the second panel under misspecification. 
\end{tablenotes}
\end{threeparttable}
\end{table}

In the high-dimensional setting, the Monte Carlo experiment indicates that larger sample sizes are required for accurate empirical coverage. For the \(J=6\) design, the model contains 27 parameters, and the individual coverages range between 0.83 and 0.99 with \(n=1{,}000\), and 0.89 and 0.96 with \(n=10{,}000\), while the median coverage is 0.94 with both sample sizes. As the number of alternatives increases, the likelihood involves higher-dimensional integrals, and accurate estimation of the sampling variability requires substantially more data. Our empirical application with the multinomial probit model contains more than one million observations, so the large-sample regime is the relevant one for that setting. However, for this highly-nonlinear model, Monte Carlo validation at that scale is computationally infeasible.

\section{Conclusion}\label{sec:conclusion}
Latent variable models are central to empirical work in economics, but likelihood-based inference in these models is often limited by the need to integrate over high-dimensional latent variables.
This paper proposes SEGA, a stochastic estimation framework that uses Fisher’s identity to construct an unbiased estimate of the likelihood score from a single draw of the latent variables. By embedding this score in a stochastic gradient ascent algorithm, SEGA averages simulation noise over the optimization path rather than requiring accurate likelihood or score approximations at each parameter value.

The theoretical results show that SEGA has the same first-order inferential properties as the infeasible MLE. In finite samples, the SEGA estimator concentrates around the infeasible MLE as the number of iterations increases. Asymptotically, when the number of iterations grows appropriately with the sample size, the SEGA estimator is asymptotically equivalent to the infeasible MLE and inherits its first-order limiting distribution, including the usual sandwich form under misspecification. The numerical experiments support these theoretical results.

The empirical applications illustrate the value of SEGA for applied researchers. In the multinomial probit application, SEGA allows us to estimate an unrestricted model of brand choice using more than one million purchase observations, construct confidence intervals for own- and cross-price elasticities, and test covariance restrictions that are commonly imposed for tractability. In the random effects Tobit application, SEGA allows us to estimate a household-level demand model with more than 200,000 households observed over 62 weeks, and to conduct inference on heterogeneity in baseline demand, price sensitivity, and state dependence. 

These applications show that SEGA makes likelihood-based inference feasible in LDVMs at scales where existing methods are impractical. By removing this computational barrier, SEGA expands the set of empirical questions that applied researchers can study with flexible limited dependent variable models.
\newpage 
\appendix
\section{Proofs}\label{A:proofs}
\begin{proof}[Proof of Theorem \ref{thm:main}]
To simply notations, let $\zeta_k=\widetilde{z}_{1:n}^{(k)}$, where we recall that $\widetilde{z}_{1:n}^{(k)}\sim p_{\theta^{(k)}_n}(\cdot\mid d_{1:n})$. Define $\delta_{k}:=\theta^{(k)}_n-\hat\theta_n$, and define the shifted gradients $\widetilde{m}^{(k)}_n(\delta_{k})=\widehat{m}_n(\delta_{k}+\hat\theta;\zeta_k)$ and $\widetilde{m}^{(k)}_n(0)=\widehat{m}_n(\hat\theta;\zeta_k)$. 
We rewrite the recursion in \eqref{eq:new} as
$$
\delta_{k}=\delta_{k-1}+\eta_n^{(k)}\widetilde{m}^{(k-1)}_n(\delta_{k-1}).
$$
Now, we consider the behavior of $\|\delta_{k}\|_2^2$, which can be written as
\begin{flalign}
\|\delta_{k}\|_2^2&=\|\delta_{k-1}\|_2^2+\{\eta_n^{(k)}\}^2\|\widetilde{m}^{(k-1)}_n(\delta_{k-1})\|^2_2+2\eta_n^{(k)}\langle\widetilde{m}^{(k-1)}_n(\delta_{k-1}),\delta_{k-1}\rangle\nonumber\\&=\text{A}+\text{B}+\text{C}.\label{eq:square}
\end{flalign}
We now consider terms B and C separately. 
\\

\noindent \textbf{Term B}. 
First, consider that 
\begin{flalign*}
&\|\widetilde{m}^{(k-1)}_n(\delta_{k-1})\|^2_2\\&=\|\widetilde{m}^{(k-1)}_n(\delta_{k-1})-\widetilde{m}^{(k-1)}_n(0)+\widetilde{m}^{(k-1)}_n(0)\|^2_2\\&=\|\widetilde{m}^{(k-1)}_n(0)\|^2_2+\|\widetilde{m}^{(k-1)}_n(\delta_{k-1})-\widetilde{m}^{(k-1)}_n(0)\|_2^2 +2\langle\widetilde{m}^{(k-1)}_n(\delta_{k-1})-\widetilde{m}^{(k-1)}_n(0),\widetilde{m}^{(k-1)}_n(0) \rangle
\\&\le 
\|\widetilde{m}^{(k-1)}_n(0)\|^2_2+\|\widetilde{m}^{(k-1)}_n(\delta_{k-1})-\widetilde{m}^{(k-1)}_n(0)\|_2^2 +2\|\widetilde{m}^{(k-1)}_n(\delta_{k-1})-\widetilde{m}^{(k-1)}_n(0)\|_2\|\widetilde{m}^{(k-1)}_n(0)\|_2
\\&\le\|\widetilde{m}^{(k-1)}_n(0)\|^2_2+\|\widetilde{m}^{(k-1)}_n(\delta_{k-1})-\widetilde{m}^{(k-1)}_n(0)\|_2^2 +[\|\widetilde{m}^{(k-1)}_n(\delta_{k-1})-\widetilde{m}^{(k-1)}_n(0)\|^2_2+\|\widetilde{m}^{(k-1)}_n(0)\|^2_2]\\&= 2\|\widetilde{m}^{(k-1)}_n(\delta_{k-1})-\widetilde{m}^{(k-1)}_n(0)\|^2_2+2\|\widetilde{m}^{(k-1)}_n(0)\|^2_2,
\end{flalign*}where the first inequality comes from Cauchy-Schwartz and the second by applying Young's inequality to the last term. 

Now, recall that the expectation of $\widetilde{m}^{(k-1)}_n(\delta_{k})$ is Lipschitz under Assumption \ref{ass:mds}(iii), so that we have that 
\begin{equation}
\E_{}\left[\|\widetilde{m}^{(k-1)}_n(\delta_{k-1})-\widetilde{m}^{(k-1)}_n(0)\|^2_2\mid \mathcal{F}_{k-1}\right]\le L^2\|\delta_{k-1}\|^2_2.\label{eq:lipz_term}
\end{equation}
Further, under Assumption \ref{ass:mds}(ii), 
\begin{equation}
\E_{}[\|\widetilde{m}^{(k-1)}_n(0)\|^2_2\mid\mathcal{F}_{k-1}]\le \nu^2\label{eq:var_term}.
\end{equation}
Taking conditional expectations of $\|\widetilde{m}^{(k-1)}_n(\delta_{k-1})\|^2_2$ and applying equations \eqref{eq:lipz_term} and \eqref{eq:var_term} we have 
\begin{flalign}
\E\left[\|\widetilde{m}^{(k-1)}_n(\delta_{k-1})\|^2_2\mid \mathcal{F}_{k-1}\right]&\le     2\E\left[\|\widetilde{m}^{(k-1)}_n(\delta_{k-1})-\widetilde{m}^{(k-1)}_n(0)\|^2_2\mid \mathcal{F}_{k-1}\right]+2\E\left[\|\widetilde{m}^{(k-1)}_n(0)\|^2_2\mid \mathcal{F}_{k-1}\right]\nonumber\\&\le 2L^2\|\delta_{k-1}\|^2_2+2\nu^2\label{eq:termB}.
\end{flalign}

\noindent \textbf{Term C}. 
First, we write 
\begin{flalign*}
2\langle\widetilde{m}^{(k-1)}_n(\delta_{k-1}),\delta_{k-1} \rangle=2\langle\widetilde{m}^{(k-1)}_n(\delta_{k-1})-\widetilde{m}_n(\delta_{k-1}),\delta_{k-1} \rangle+2\langle \widetilde{m}_n(\delta_{k-1}),\delta_{k-1} \rangle ,  
\end{flalign*}where, with similar notions to $\widetilde{m}_n^{(k)}(\delta_{k-1})$, we set $\widetilde{m}_n(\delta_{k-1})=m_n(\delta_{k-1}+\hat\theta)$. 

Taking expectations, and using Assumption \ref{ass:mds}(i), yields 
\begin{flalign*}
&2\E_{{}}\left[\langle\widetilde{m}^{(k-1)}_n(\delta_{k-1}),\delta_{k-1} \rangle\mid \mathcal{F}_{k-1}\right]\\&=2\langle\E\left[\widetilde{m}^{(k-1)}_n(\delta_{k-1})-\widetilde{m}_n(\delta_{k-1})\mid \mathcal{F}_{k-1}\right],\delta_{k-1} \rangle+2\langle \widetilde{m}_n(\delta_{k-1}),\delta_{k-1} \rangle  
\\&=2\langle \widetilde{m}_n(\delta_{k-1}),\delta_{k-1} \rangle.
\end{flalign*}
Applying the strong concavity of $\ell_n(\theta)$, Assumption \ref{ass:conv}, and Lemma \ref{lem:concavity} in Supplemental Appendix~\ref{A:lemmas} to obtain, for $\mu>0$,
\begin{flalign*}
  \langle \widetilde{m}_n(\delta_{k-1}),\delta_{k-1} \rangle    \le-\frac{\mu}{2}\|\delta_{k-1}\|^2_2 .
\end{flalign*}
Hence, 
\begin{flalign}
2\E\left[\langle\widetilde{m}^{(k-1)}_n(\delta_{k-1}),\delta_{k-1} \rangle\mid \mathcal{F}_{k-1}\right]\le -\mu\|\delta_{k-1}\|^2_2 . \label{eq:termC}
\end{flalign}

\noindent\textbf{A+B+C.} Taking expectations of equation \eqref{eq:square} and applying equations \eqref{eq:termB} and \eqref{eq:termC} then yields 
\begin{flalign*}
\E[\|\delta_{k}\|^2_2\mid\mathcal{F}_{k-1}]&\le \|\delta_{k-1}\|^2_2 +2L^2[\eta_{n}^{(k)}]^2\|\delta_{k-1}\|^2_2+2[\eta_{n}^{(k)}]^2\nu^2-\mu[\eta_{n}^{(k)}]\|\delta_{k-1}\|_2^2\\&=\|\delta_{k-1}\|_2^2\left\{1-\mu[\eta_{n}^{(k)}]+C_1[\eta_{n}^{(k)}]^2\right\}+2[\eta_{n}^{(k)}]^2\nu^2,    
\end{flalign*}where $C_1=2L^2$.
%\dtf{In the second case, the constant 1 just becomes 3 and nothing else changes.  }
%There exists some constant $C_1$ such that 
%\begin{flalign*}
%\E[\|\delta_{k}\|^2_2\mid\mathcal{F}_{k-1}]&\le\|\delta_{k-1}\|_2^2\left\{1-\mu[\eta_{n}^{(k)}]+2C_1[\eta_{n}^{(k)}]^2\right\}+2[\eta_{n}^{(k)}]^2\nu^2.
%\end{flalign*}
Use the form of the learning rate $\eta_n^{(k)}$, and the fact that $\eta_n$ is $\mathcal{F}_0$ measurable to write:
$$
\E[\|\delta_{k}\|^2_2\mid\mathcal{F}_{k-1}]\le\|\delta_{k-1}\|_2^2\left\{1-\frac{\mu\eta_{n}}{k^\alpha}+2C_1\frac{\eta_n^2}{k^{2\alpha}}\right\}+2\frac{\nu^2\eta_n^2}{k^{2\alpha}}.
$$
Now, let 
$$
k^\ast=\min\left\{k\in\mathbb{N}:C_1\frac{\eta_n^2}{k^{2\alpha}}\le \frac{\mu\eta_n}{2k^\alpha},\;\;\frac{\mu\eta_nk^{(1-\alpha)}}{4}\ge 2\alpha\log(k)\right\},
$$
which exists for $k$ large enough so long as $0<\eta_n<\infty$. For any $k\ge k^\ast$, we have 
\begin{flalign*}
\E[\|\delta_{k}\|^2_2\mid\mathcal{F}_{k-1}]&\le\|\delta_{k-1}\|_2^2\left\{1-\frac{1}{2}\mu[\eta_{n}^{(k)}]\right\}+2[\eta_{n}^{(k)}]^2\nu^2.
\end{flalign*}
Applying Lemma B.2 in \cite{chen2020statistical} (Lemma \ref{lem:chen2020} in Supplemental Appendix~\ref{A:lemmas}), we have, for any $m\le k-1$,
\begin{flalign*}
\E[\|\delta_{k}\|^2_2\mid\mathcal{F}_{m}]%&\le\exp\left\{-\frac{\mu}{4}\sum_{i=m+1}^{k}\eta^{(i)}_n)\right\}+(4\nu^2/\mu)[\eta_n^{(m)}].
&\le \|\delta_{m}\|_2^2\exp\left\{-\frac{\mu}{2}(k-m)\eta_n^{(k)}\right\}+(4\nu^2/\mu)[\eta_n^{(m)}].%\exp\left\{-\frac{\mu}{4}\sum_{i=m}^{k}\eta^{(i)}_n\right\}+C_2\eta_n^{1+\gamma}m^{-(1+\gamma)}.
\end{flalign*}

Now, set $m=\lfloor k/2 \rfloor>k^\ast$, and note that 
$
\frac{\mu}{2}(k-m)\frac{\eta_n}{k^\alpha}\ge \frac{\mu\eta_n}{4}k^{1-\alpha}
$ from which we have 
$$
\exp\left\{-\frac{\mu}{2}(k-m)\frac{\eta_n}{k^\alpha}\right\}\le \exp\left\{-\frac{\mu\eta_n}{4}k^{1-\alpha}\right\}.
$$From the definition of $k^\ast$, we further have 
$$
\frac{\mu\eta_nk^{1-\alpha}}{4}\ge 2\alpha\log(k)=\log(k^{2\alpha}),
$$
and since $\alpha\in(0,1)$, 
$$
\exp\left\{-\frac{\mu\eta_n}{4}k^{1-\alpha}\right\}\le \exp\{-\log(k^{2\alpha})\}=k^{-2\alpha}\le k^{-\alpha}.
$$
Setting $C_2=4\nu^2/\mu$, when $m=\lfloor k/2 \rfloor>k^\ast$, we then obtain 
\begin{flalign*}
\E\left[\|\delta_{k} \|^2_2\mid \mathcal{F}_{m}\right]&\le \|\delta_{k/2}\|^2_2k^{-\alpha}+C_2\frac{\eta_n}{k^{\alpha}}\le k^{-\alpha}(\|\delta_{k/2}\|^2_2+C_2).
\end{flalign*}

Applying the above, and the law of iterated expectations, for some  $C_3>0$ we obtain 
$$
\E_{}\left[\|\delta_{k} \|^2_2\right]\le k^{-\alpha}C_3\left(\E\|\delta_0\|^2_2+1\right).
$$
To obtain the result for $p=1$, use Cauchy-Schwartz to see that 
$$
\E_{}\left[\|\delta_{k} \|_2\right]\le \sqrt{\E\left[\|\delta_{k} \|^2_2\right]}\le \sqrt{k^{-\alpha}C_3\left(\E\|\delta_0\|^2_2+1\right)}.
$$
\end{proof}

\begin{proof}[Proof of Corollary \ref{corr:smoothed}]
Set $p=1$ for simplicity. The result for $p=2$ follows the same way. For each term $\|\theta^{(j)}_n-\hat\theta_n\|$ we have, by Theorem \ref{thm:main}, 
$$
\E\left\|\theta^{(j)}_n-\hat\theta_n\right\|\le C (1+\E\|\theta^{(0)}_n-\hat\theta_n\|^2)^{\frac{1}{2}}\frac{1}{j^{\alpha/2}},
$$for some constant $C>0$. Use the above and the fact that the norm $\|\cdot\|$ is convex to obtain
\begin{flalign*}
\E\left\|\frac{1}{k-k_0}\sum_{j=k_0+1}^{k}\theta^{(j)}_n-\hat\theta_n\right\|&\le \frac{1}{k-k_0}\sum_{j=k_0+1}^{k}\E\|\theta^{(j)}_n-\hat\theta_n\|\\&\le C(1+\E\|\theta^{(0)}_n-\hat\theta_n\|^2)^{\frac{1}{2}}\frac{1}{k-k_0}\sum_{j=k_0+1}^{k}\frac{1}{j^{\alpha/2}}.%\\&\lesssim\|\theta^{(0)}_n-\hat\theta\|\frac{1}{k}\sum_{j=1}^{k}\frac{1}{j^{\alpha/2}}
\end{flalign*}
Use the approximation 
$
\sum_{j=k_0+1}^{k}\frac{1}{j^{\alpha/2}}\leq C \frac{k^{1-\alpha/2}}{1-\alpha/2},
$ and the fact that $k-k_0\asymp k$ to deduce that 
\begin{flalign*}
\E\left\|\frac{1}{k-k_0}\sum_{j=k_0+1}^{k}\theta^{(j)}_n-\hat\theta_n\right\|\leq C (1+\E\|\theta^{(0)}_n-\hat\theta\|^2)^{\frac{1}{2}}\cdot\left(\frac{1}{1-\alpha/2}\frac{1}{k^{\alpha/2}}\right),
\end{flalign*}
as stated. 
\end{proof}

\begin{proof}[Proof of Theorem \ref{thm:limit}]
Write 
$$
\sqrt{n}(\overline{\theta}_n-\theta_0)=\sqrt{n}(\hat\theta_n-\theta_0)+\sqrt{n}(\overline{\theta}_n-\hat\theta_n).
$$If we can show that $\sqrt{n}(\overline\theta_n-\hat\theta_n)=o_p(1)$, then the stated result follows by Slutsky's theorem. 

Fix $\delta>0$. From Corollary \ref{corr:smoothed}, and Markov, 
\begin{flalign*}
\Pr\left[\sqrt{n}\|\overline\theta_n-\hat\theta_n\|>\delta\right]\le& \frac{\sqrt{n}}{\delta}\E\|\overline\theta_n-\hat\theta_n\|\leq C \frac{1}{\delta}\frac{\sqrt{n}}{k^{\alpha/2}}\left(1+\E\left[\|\theta^{(0)}_n-\hat\theta_n\|^2\right]\right)^{1/2}.
\end{flalign*}
By Assumption \ref{ass:starting}, we have that 
$$
\E\|\theta^{(0)}_n-\hat\theta_n\|^2\le \E(C_n^2)=M<\infty.
$$
Fix $\epsilon>0$. Given $\delta>0$, there exists an $n'$ such that $$
\frac{C(1+M)^{1/2}}{\delta}\frac{\sqrt{n'}}{k_{n'}^{\alpha/2}}<\epsilon,
$$ and so for $n\ge n'$, we have $\Pr\left[\sqrt{n}\|\overline\theta_n-\hat\theta_n\|>\delta\right]\le\epsilon$. Since $\epsilon$ and $\delta$ are arbitrary,  $\sqrt{n}\|\overline{\theta}_n-\hat\theta_n\|=o_p(1)$, and the result follows.
\end{proof}
\bibliographystyle{apalike} 
\bibliography{mle_mnp}

\newpage
\noindent
\setcounter{page}{1}
\begin{center}
	{\bf \Large{Supplemental Appendix for ``Scalable likelihood-based inference for limited dependent variable models''}}
\end{center}

\vspace{10pt}
\setcounter{equation}{0}
\setcounter{figure}{0}
\setcounter{table}{0}
\setcounter{section}{0}
\renewcommand{\thetable}{S\arabic{table}}
\renewcommand{\thefigure}{S\arabic{figure}}
\renewcommand{\thesection}{S\Alph{section}}
\renewcommand{\thesubsection}{S\Alph{section}.\arabic{subsection}}
\renewcommand{\theequation}{S\arabic{equation}}

\noindent
This appendix has three parts:

\begin{itemize}
	\item[] {\bf Part~SA}: Example 1: Multinomial probit model.
	\item[] {\bf Part~SB}: Example 2: Random effects Tobit model.
	\item[] {\bf Part~SC}: Supplementary lemma's.
\end{itemize}
\newpage

\section{Example 1: Multinomial probit model}

\subsection{SEGA implementation}\label{A:mnp_fisher}
First, we derive the exact expressions in the gradient of the log augmented likelihood with respect to $\theta=(\beta^\top,\xi^\top)^\top$ presented in \eqref{eq:grad_augpost_mnp}:
\begin{align}
    \nabla_\theta \log p_\theta(y_{1:n},z_{1:n}|x_{1:n}) =&
    \nabla_\theta \log p(y_{1:n}\mid z_{1:n})+\sum_{i=1}^n \nabla_\theta \log \phi_J(z_i;x_i\beta,\Sigma) \\
    =& \sum_{i=1}^n (\nabla_\beta \log \phi_J(z_i;x_i\beta,\Sigma)^\top,\nabla_\xi \log \phi_J(z_i;x_i\beta,\Sigma)^\top). \notag
\end{align}
Define ${\eta}_i = {z}_i-x_i{\beta}$. The gradient with respect to $\beta$ equals
\begin{align}
    \nabla_{\beta}\log \phi_{J}\left({z}_i;x_i{\beta},\Sigma\right)^\top& = {\eta}_i^\top\Sigma^{-1}x_i.
\end{align}
The gradient with respect to $\xi$ equals
\begin{align}\label{eq:Gexpres}
    \nabla_{\xi}\log \phi_{J}\left({z}_i;x_i{\beta},\Sigma\right)^\top &= \frac{\partial}{\partial  {\xi}}\left[-\frac{J}{2}\log(2\pi)+\frac{1}{2}\log (\text{det}(\Sigma^{-1}))-\frac{1}{2} {\eta}_i^\top\Sigma^{-1} {\eta}_i\right]\\
    & =  \frac{1}{2}\frac{\partial}{\partial  {\xi}}\log (\text{det}(\Sigma^{-1}))-\frac{1}{2}\frac{\partial}{\partial  {\xi}} {\eta}_i^\top\Sigma^{-1} {\eta}_i\nonumber\\
    & =  \frac{1}{2}\frac{1}{\text{det}(\Sigma^{-1})}\frac{\partial \text{det}(\Sigma^{-1})}{\partial \Sigma^{-1}}\frac{\partial \Sigma^{-1}}{\partial  {\xi}}-\frac{1}{2}\frac{\partial}{\partial  {\xi}} {\eta}_i^\top\Sigma^{-1} {\eta}_i\nonumber \\
    & =  \frac{1}{2}\frac{1}{\text{det}(\Sigma^{-1})}\frac{\partial \text{det}(\Sigma^{-1})}{\partial \Sigma^{-1}}\frac{\partial \Sigma^{-1}}{\partial  {\xi}}-\frac{1}{2}\left( {\eta}_i^\top\otimes {\eta}_i^\top\right)\frac{\partial}{\partial  {\xi}}\text{vec}(\Sigma^{-1})\nonumber\\
     & = \frac{1}{2}\frac{1}{\text{det}(\Sigma^{-1})}\frac{\partial \text{det}(\Sigma^{-1})}{\partial \Sigma^{-1}}\frac{\partial \Sigma^{-1}}{\partial  {\xi}}-\frac{1}{2}\left( {\eta}_i^\top\otimes {\eta}_i^\top\right)\frac{\partial \Sigma^{-1}}{\partial  {\xi}}\nonumber\\
     & =  \frac{1}{2}\frac{1}{\text{det}(\Sigma^{-1})}\frac{\partial \text{det}(\Sigma^{-1})}{\partial \Sigma^{-1}}\frac{\partial \Sigma^{-1}}{\partial C}\frac{\partial C}{\partial {\kappa}}\frac{\partial  {\kappa}}{\partial  {\xi}}-\frac{1}{2}\left( {\eta}_i^\top\otimes {\eta}_i^\top\right)\frac{\partial \Sigma^{-1}}{\partial C}\frac{\partial C}{\partial {\kappa}}\frac{\partial  {\kappa}}{\partial  {\xi}}\nonumber,
\end{align}
where $\frac{\partial \text{det}(\Sigma^{-1})}{\partial \Sigma^{-1}} = \text{det}(\Sigma^{-1})\text{vec}(\Sigma)^\top$ and $\frac{\partial \Sigma^{-1}}{\partial C} = \left(I_{J^2}+K_{J,J}\right)\left(C\otimes I_J\right)$,
with $K_{m,n}$ the commutation matrix of an $m\times n$ matrix. Since only the lower diagonal elements of $C$ depend on $ \kappa$, the derivative matrix $\frac{\partial C}{\partial  {\kappa}}$ is sparse. The non sparse elements of this matrix can be constructed as $\frac{\partial \text{vech}(C)}{\partial  {\kappa}}$, where
$\left\{\frac{\partial \text{vech}(C)}{\partial  {\kappa}}\right\}_{l,j}=\frac{\partial \psi_{l}( {\kappa})}{\partial \kappa_{j}}$ with
\begin{align*}
\frac{\partial \psi_{l}( {\kappa})}{\partial \kappa_{j}}=
\left\{
\begin{array}{ll}
\sqrt{J}\cos\left(\kappa_{j}\right)\cos\left(\kappa_{l}\right)\prod_{s\in\{1,\dots,l-1\}\backslash j}\sin\left(\kappa_{s}\right) & \text{if } j<l \text{ and } l<d_C,\\
-\sqrt{J}\prod_{s=1}^{l}\sin\left(\kappa_{s}\right) & \text{if } j=l \text{ and } l<d_C,\\
\sqrt{J}\cos\left(\kappa_{j}\right)\prod_{s\in\{1,\dots,l-1\}\backslash j}\sin\left(\kappa_{s}\right) & \text{if } j<l \text{ and } l=d_C,\\
0 & \text{if otherwise.} 
\end{array}
\right.
\end{align*}
Second, we produce draws from $p_\theta(z_{1:n}\mid d_{1:n})$ in \eqref{eq:pz_mnp} using the Gibbs sampler proposed by \citet{geweke1991efficient}.
 
\subsection{Variance estimator implementation}\label{A:MNPse} 
First, we use the method proposed by \citet{botev2017normal} to generate i.i.d.\ draws from the truncated multivariate normals implied by $p_\theta(z_{1:n}\mid d_{1:n})$. 

Second, we derive the expressions for $\nabla_\theta \log p_\theta(   y_i,    z_i| x_i)$ and $\nabla_\theta^2 \log p_\theta(   y_i,    z_i| x_i)$ in \eqref{eq:Vhat} to compute standard errors for the estimates of $\beta$ and the distinct elements of $\Sigma$, with the normalization that the first diagonal element of $\Sigma$ is equal to one:
\begin{align*}
    \nabla_\theta \log p_\theta(   y_i,    z_i| x_i)=&\left(\nabla_\beta \log p_\theta(   y_i,    z_i| x_i)^\top,\nabla_{\text{vech1}(\Sigma)} \log p(   y_i,    z_i|  {\theta},X_i)^\top\right)^\top,\\
    \nabla_\theta^2 \log p_\theta(   y_i,    z_i| x_i)=& \begin{bmatrix} \nabla_{\beta}^2 \log p_\theta(   y_i,    z_i| x_i) & \nabla_{\text{vech1}(\Sigma),\beta} \log p_\theta(   y_i,    z_i|x_i)^\top \\  \nabla_{\text{vech1}(\Sigma),\beta} \log p_\theta(   y_i,    z_i| x_i) & \nabla_{\text{vech1}(\Sigma)}^2 \log p_\theta(   y_i,    z_i| x_i) \end{bmatrix},
\end{align*}
with vech1(A) the half-vectorization of the symmetric matrix A after which the first element is eliminated. That is, for a $2\times2$ matrix A, vech1(A)$=(a_{21},a_{22})^\top$. Each of the elements of these expressions can be computed as
\begin{align}\label{eq:Vexpres}
    \nabla_\beta \log p_\theta(y_i, z_i \mid x_i) =& x_i^\top \Sigma^{-1}\eta_i,\\
    \nabla_{\beta}^2 \log p_\theta(y_i, z_i \mid  x_i) =& -x_i^\top \Sigma^{-1} x_i,\notag\\
    \nabla_{\text{vech1}(\Sigma)} \log p_\theta(y_i, z_i \mid  x_i) =& \left(\frac{\partial \Sigma}{\partial\text{vech1}(\Sigma)}\right)^\top \text{vec}\left(-\frac{1}{2}(\Sigma^{-1}-\Sigma^{-1}\eta_i\eta_i^\top\Sigma^{-1})\right),\notag\\  
    \nabla_{\text{vech1}(\Sigma)}^2 \log p_\theta(y_i, z_i \mid  x_i) =& \left(\frac{\partial \Sigma}{\partial\text{vech1}(\Sigma)}\right)^\top \frac{1}{2}(-I_{J^2}+\Sigma^{-1}\eta_i\eta_i^\top \otimes I_J + I_J \otimes \Sigma^{-1}\eta_i\eta_i^\top)\notag\\
    &\times(-\Sigma^{-1} \otimes \Sigma^{-1})\frac{\partial \Sigma}{\partial\text{vech1}(\Sigma)},\notag\\
    \nabla_{\text{vech1}(\Sigma),\beta} \log p_\theta(y_i, z_i \mid  x_i) =& (\nabla_{\text{vech1}(\Sigma),\beta_1} \log p_\theta(y_i, z_i \mid  x_i),\dots,\nabla_{\text{vech1}(\Sigma),\beta_r} \log p_\theta(y_i, z_i \mid x_i)), \notag
\end{align}
where $\nabla_{\Sigma,\beta_k} \log p_\theta(y_i, z_i \mid  x_i)= \text{vec1}(A_k+A_k^\top-\text{diag}(A_k))$ with $A_k=-\Sigma^{-1} x_{ik} \eta_i^\top \Sigma^{-1}$.

Third, we derive expressions for $\nabla_\theta \log p_\theta(   y_i,    z_i| x_i)$ and $\nabla_\theta^2 \log p_\theta(   y_i,    z_i| x_i)$ in \eqref{eq:Vhat} with respect to the parameters $\xi$ used in the SEGA algorithm, instead of $\Sigma$:
\begin{align*}
    \nabla_\theta \log p_\theta(   y_i,    z_i| x_i)=&\left(\nabla_\beta \log p_\theta(   y_i,    z_i|  x_i)^\top,\nabla_{\xi} \log p_\theta(   y_i,    z_i|  x_i)^\top\right)^\top,\\
    \nabla_\theta^2 \log p_\theta(   y_i,    z_i|  x_i)=& \begin{bmatrix} \nabla_{\beta}^2 \log p_\theta(   y_i,    z_i|  x_i) & \nabla_{\xi,\beta} \log p_\theta(   y_i,    z_i| x_i)^\top \\  \nabla_{\xi,\beta} \log p_\theta(   y_i,    z_i|  x_i) & \nabla_{\xi}^2 \log p_\theta(   y_i,    z_i|  x_i) \end{bmatrix},
\end{align*}
where $\nabla_{\xi} \log p_\theta(   y_i,    z_i|  x_i)^\top$ is derived in \eqref{eq:Gexpres}, and $\nabla_\beta \log p_\theta(   y_i,    z_i|  x_i)^\top$ and $\nabla_{\beta}^2 \log p_\theta(   y_i,    z_i|  x_i)$ in \eqref{eq:Vexpres}.
We use the following identity for the derivation of $\nabla_{\xi}^2 \log p_\theta(   y_i,    z_i|  x_i)$:
\begin{equation}
    \frac{\partial^2 A}{\partial B\partial B} = \left[I_{n_B}\otimes\frac{\partial A}{\partial C}\right]\frac{\partial^2C}{\partial B\partial B}+\left[\frac{\partial C}{\partial B}^\top\otimes I_{n_A}\right]\frac{\partial^2 A}{\partial C\partial C}\frac{\partial C}{\partial B},
\end{equation}
where \(A=A(C(B))\), with \(A\in\mathbb R^{n_A}\), \(B\in\mathbb R^{n_B}\), and \(C\in\mathbb R^{n_C}\). If \(A\), \(B\), or \(C\) are matrices, the derivatives are understood with respect to their vectorized representations, so that \(n_A\), \(n_B\), and \(n_C\) denote the corresponding vector dimensions. We can now write
\begin{align*}
    &\nabla_{\xi}^2 \log p_\theta( y_i,  z_i|x_i)  = \frac{\partial\left[\frac{\partial}{\partial \xi}\log p_\theta( y_i,  z_i|x_i)\right]}{\partial\xi}
     = \frac{\partial\left[\frac{\partial}{\partial \Sigma^{-1}}\log p_\theta( y_i,  z_i|x_i)\frac{\partial\Sigma^{-1}}{\partial\xi}\right]}{\partial\xi}\\
    & = \left[I_{n_\xi}\otimes\frac{\partial}{\partial \Sigma^{-1}}\log p_\theta( y_i,  z_i|x_i)\right]\frac{\partial\text{vec}(\frac{\partial\Sigma^{-1}}{\partial\xi})}{\partial\xi}+ \left[\frac{\partial\Sigma^{-1}}{\partial\xi}^\top\otimes I_{1}\right]\frac{\partial\text{vec}(\frac{\partial}{\partial \Sigma^{-1}}\log p_\theta(y_i, z_i|x_i))}{\partial\xi}\\
    & = \left[I_{n_\xi}\otimes\frac{\partial}{\partial \Sigma^{-1}}\log p_\theta( y_i,  z_i|x_i)\right]\frac{\partial^2\Sigma^{-1}}{\partial\xi\partial\xi}+\frac{\partial\Sigma^{-1}}{\partial\xi}^\top\left\{\frac{\partial\left(\frac{\partial}{\partial \Sigma^{-1}}\log p_\theta( y_i, z_i|x_i)\right)}{\partial\xi}\right\}\\
    & = \left[I_{n_\xi}\otimes\frac{\partial}{\partial \Sigma^{-1}}\log p_\theta( y_i,  z_i|x_i)\right]\frac{\partial^2\Sigma^{-1}}{\partial\xi\partial\xi}+\frac{\partial\Sigma^{-1}}{\partial\xi}^\top\left\{\frac{\partial\left(\frac{\partial}{\partial \Sigma^{-1}}\log p_\theta( y_i,  z_i|x_i)\right)}{\partial\Sigma^{-1}}\frac{\partial\Sigma^{-1}}{\partial{\xi}}\right\}\\
    & = \left[I_{n_\xi}\otimes\frac{\partial}{\partial \Sigma^{-1}}\log p_\theta( y_i,  z_i|x_i)\right]\frac{\partial^2\Sigma^{-1}}{\partial\xi\partial\xi}+\frac{\partial\Sigma^{-1}}{\partial\xi}^\top\frac{\partial^2}{\partial \Sigma^{-1}\partial\Sigma^{-1}}\log p_\theta( y_i,  z_i|x_i)\frac{\partial\Sigma^{-1}}{\partial{\xi}}\\
    & = \left[I_{n_\xi}\otimes\left(0.5\text{vec}(\Sigma)^\top-0.5{\eta}_i^\top\otimes{\eta}_i^\top\right)\right]\frac{\partial^2\Sigma^{-1}}{\partial\xi\partial\xi}+\frac{\partial\Sigma^{-1}}{\partial\xi}^\top\left[-0.5\Sigma\otimes\Sigma\right]\frac{\partial\Sigma^{-1}}{\partial{\xi}}.
\end{align*}
Consider the dimensions of the final line: $(n\times nJ^2)(nJ^2 \times n)+(n\times J^2)(J^2 \times J^2)(J^2 \times n)$. Now we require expressions for:
\begin{align}
    \frac{\partial^2\Sigma^{-1}}{\partial\xi\partial\xi} &= \left[I_{n_\xi}\otimes\frac{\partial\Sigma^{-1}}{\partial \kappa}\right]\frac{\partial^2 \kappa}{\partial\xi\partial\xi}+\left[\frac{\partial \kappa}{\partial \xi}^\top\otimes I_{n_\Sigma}\right]\frac{\partial^2\Sigma^{-1}}{\partial \kappa\partial \kappa}\frac{\partial \kappa}{\partial \xi},\\
    \frac{\partial^2\Sigma^{-1}}{\partial\kappa\partial\kappa} &= \left[I_{n_\kappa}\otimes\frac{\partial\Sigma^{-1}}{\partial \text{vech}(C)}\right]\frac{\partial^2 \text{vech}(C)}{\partial\kappa\partial\kappa}+\left[\frac{\partial \text{vech}(C)}{\partial \kappa}^\top\otimes I_{n_\Sigma}\right]\frac{\partial^2\Sigma^{-1}}{\partial \text{vech}(C)\partial \text{vech}(C)}\frac{\partial \text{vech}(C)}{\partial \kappa}. \notag
\end{align}
Here,   $\frac{\partial^2\Sigma^{-1}}{\partial \text{vech}(C)\partial \text{vech}(C)} = \left[D^\top\otimes I_{J^2}\right]\frac{\partial^2\Sigma^{-1}}{\partial C\partial C}D$, where $D$ is the duplication matrix such that $\text{vec}(C) = D\text{vech}(C)$.
The matrix $\frac{\partial^2\Sigma^{-1}}{\partial C\partial C}$ has dimensions $J^4 \times J^2$ and equals $I_{J^2}\otimes(I_{J^2}+K_{J^2,J^2}) $. To construct the derivative $\frac{\partial^2 \kappa}{\partial\xi\partial\xi}$, first note that $\frac{\partial \kappa}{\partial\xi} = \text{diag}\left(b_1\phi({\xi_1}),\dots,b_{n_\xi}\phi({\xi_{n_\xi}})\right)$. We can show that $\text{vec}(\frac{\partial \kappa}{\partial\xi}) = D(\left(b_1\phi({\xi_1}),\dots,b_{n_\xi}\phi({\xi_{n_\xi}})\right)^\top)$, where $D$ is an appropriate selection matrix. We can then show $\frac{\partial^2 \kappa}{\partial\xi\partial\xi} = D\times\text{diag}(-{\xi_1}b_1\phi({\xi_1}),\dots,-\xi_{n_\xi}b_{n_\xi}\phi({\xi_{n_\xi}}))$.
 The matrix $\frac{\partial^2C}{\partial \kappa\partial \kappa}$ has dimensions $n_{\kappa}J^2 \times n_{\kappa}$. The elements of $\frac{\partial^2 \text{vech}(C)}{\partial\kappa\partial\kappa}$ are constructed as $\frac{\partial^2 \text{vech}(C)_l}{\partial\kappa_j\partial\kappa_k} = \frac{\partial^2 \psi_{l}(\bm{\kappa})}{\partial \kappa_{j} \partial \kappa_{k}}$ with
\begin{align*}
\frac{\partial^2 \psi_{l}(\bm{\kappa})}{\partial \kappa_{j} \partial \kappa_{k}}=
\left\{
\begin{array}{ll}
-\sqrt{J}\cos\left(\kappa_{l}\right)\prod_{s\in\{1,\dots,l-1\}}\sin\left(\kappa_{s}\right) & \text{if } j=k<l \text{ and } l<d_C,\\
\sqrt{J}\cos\left(\kappa_{j}\right)\cos\left(\kappa_{k}\right)\cos\left(\kappa_{l}\right)\prod_{s\in\{1,\dots,l-1\}\backslash \{j,k\}}\sin\left(\kappa_{s}\right) & \text{if } j\neq k<l \text{ and } l<d_C,\\
-\sqrt{J}\cos\left(\kappa_{j}\right)\prod_{s\in\{1,\dots,l\}\backslash j}\sin\left(\kappa_{s}\right) & \text{if } j<k=l \text{ and } l<d_C,\\
-\sqrt{J}\cos\left(\kappa_{k}\right)\prod_{s\in\{1,\dots,l\}\backslash \{k\}}\sin\left(\kappa_{s}\right) & \text{if } k\leq j=l \text{ and } l<d_C,\\
-\sqrt{J}\prod_{s\in\{1,\dots,l-1\}}\sin\left(\kappa_{s}\right) & \text{if } j=k<l \text{ and } l=d_C,\\
\sqrt{J}\cos\left(\kappa_{j}\right)\cos\left(\kappa_{k}\right)\prod_{s\in\{1,\dots,l-1\}\backslash \{j,k\}}\sin\left(\kappa_{s}\right) & \text{if } j\neq k<l \text{ and } l=d_C,\\
0 & \text{if otherwise.} 
\end{array}
\right.
\end{align*}
The cross Hessian is given as $\nabla_{\beta,\xi}\log \phi_{J}\left({z}_i;x_i{\beta},\Sigma\right) = \left(x_i^\top\otimes{\eta}_i^\top\right) \frac{\partial \Sigma^{-1}}{\partial C}\frac{\partial C}{\partial{\kappa}}\frac{\partial {\kappa}}{\partial{\xi}}$.

\subsection{Additional details and results empirical application}\label{A:mnp_application}
\subsubsection{Computing elasticities} 
For given parameter values $\theta$ and covariates $x_i$, where the $k$-th row of $x_i$ is specified as $x_{ik} = (e_k^\top,\, \log p_{ik} - \log p_{i0})$ with $p_{ij}$ the price of alternative $j$, the cross-price elasticity of alternative $j$ with respect to the price of alternative $k$ is defined as
\begin{align}
e_{jk}(\theta; x_i)
= \frac{\partial {\Pr}_\theta(Y = j \mid x_i)}{\partial \log p_{ik}}
= \frac{\partial {\Pr}_\theta(Y = j \mid x_i)}{\partial x_{ik,J+1}}.
\end{align}
Since $\Pr_\theta(Y = j \mid x_i)$ is not available in closed form, neither is the corresponding elasticity. However, we can approximate the choice probability as $\widehat{\Pr}_\theta(Y = j \mid x_i)= \frac{1}{S} \sum_{s=1}^S p(y_i=j \mid z_{i,s})$,where $z_{i,s} \sim N(x_i \beta, \Sigma)$ . 

An estimate of the elasticity can then be obtained via a symmetric finite-difference approximation. Evaluating the probability at $x_i$ with the elements corresponding to the price difference for alternative $k$ equal to $[x_{ik}]_{J+1} + \epsilon$ and $[x_{ik}]_{J+1} - \epsilon$ yields
\begin{align}
    \widehat{e}_{jk}(\theta; x_i)= \frac{\widehat{\Pr}_\theta(Y = j \mid x_{ik,J+1} + \epsilon)-\widehat{\Pr}_\theta(Y = j \mid x_{ik,J+1} - \epsilon)}{2\epsilon}.
\end{align}
Note that as $S\rightarrow\infty$ this estimate gets arbitrarily accurate. We report the elasticity estimates evaluated at $\theta = \bar{\theta}_n$. 

We construct confidence intervals for the elasticities by drawing $\theta^{(m)} \sim N\!\left(\bar{\theta}_n, \frac{1}{n}\widehat{V}\right)$, and transforming each draw into the corresponding elasticity, $\widehat{e}_{jk}\!\left(\theta^{(m)}, x_i\right)$. These transformed draws are then used to summarize the sampling variability of the elasticity estimates.
To ensure positive definiteness and the identification constraints for the covariance matrix, the draws $\theta^{(m)}$ are from the unconstrained parameter space with $\beta$ and $\xi$.

\subsubsection{Additional empirical results}
Figure~\ref{fig:conv} shows the trajectories of the coefficients in $\beta$ and the transformed angles in $\xi$ in the SEGA algorithm. The algorithm takes approximately 59 hours.

\begin{figure}[!ht]
\caption{MNP parameter trajectories in SGA with 100,000 iterations.}
\centering
\includegraphics*[width=\textwidth]{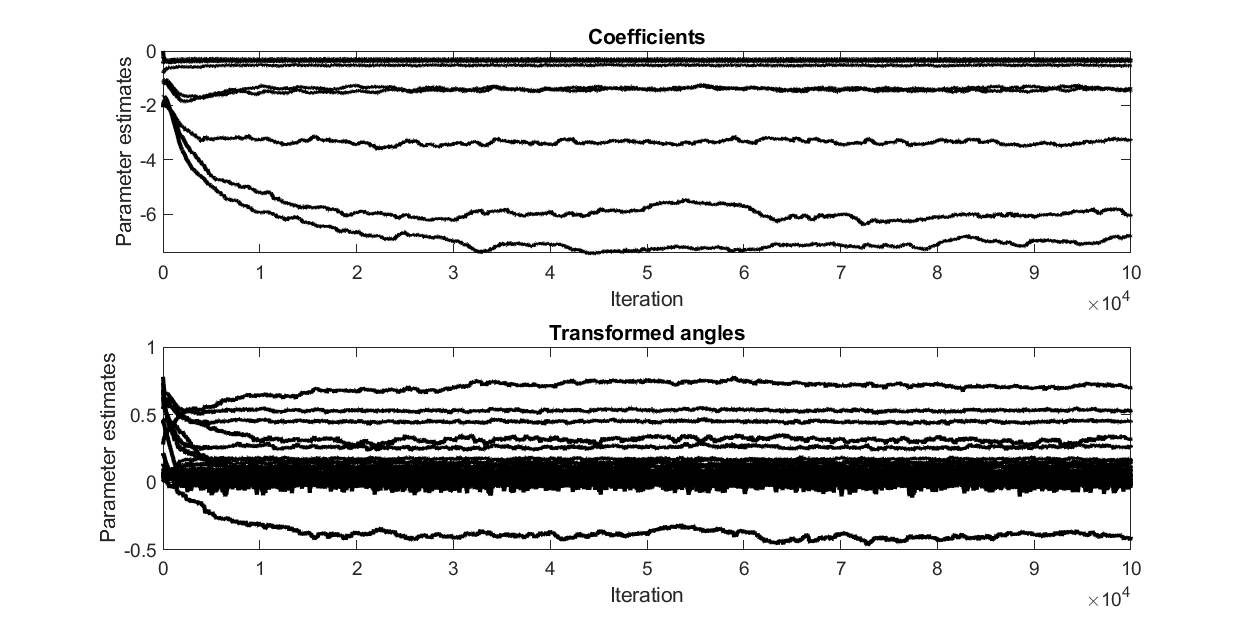}
\label{fig:conv}
\end{figure}

Table~\ref{tab:mnp_coef_estimates} shows the coefficient and variance estimates, together with their standard errors. The first two columns show that price has a statistically significant negative impact on purchase probabilities. Moreover, the estimates for the intercepts show statistically significant differences across purchase frequencies of the different pasta brands. The final two columns show that the purchase price explains different portions of the utility of each choice alternative.

% tables with estimation results
\begin{table}[!ht]
  \centering 
  \caption{Multinomial probit coefficient and variance estimates}
  \begin{threeparttable}
    \begin{tabular}{lrrrr}
    \toprule \toprule
          & \multicolumn{2}{c}{Coefficients} & \multicolumn{2}{c}{Variances} \\
          \cmidrule(lr){2-3}  \cmidrule(lr){4-5}
          & \multicolumn{1}{l}{Estimate} & \multicolumn{1}{l}{Std. Error} & \multicolumn{1}{l}{Estimate} & \multicolumn{1}{l}{Std. Error} \\
          \midrule
    Price & -0.335 & 0.002 &       &  \\
    Mueller & -0.392 & 0.004 &       &  \\
    Creamette & -0.410 & 0.007 & 1.080 & 0.014 \\
    Ronzoni & -0.396 & 0.009 & 0.972 & 0.0140 \\
    San Giorgio & -0.584 & 0.024 & 1.161 & 0.031 \\
    No Yolks & -1.517 & 0.031 & 2.309 & 0.047 \\
    Hodgson Mills & -1.593 & 0.045 & 2.092 & 0.064 \\
    Healthy Harvest & -7.845 & 0.067 & 15.600 & 0.221 \\
    DaVinci & -3.777 & 0.221 & 3.326 & 0.293 \\
    Dececco & -6.687 & 0.130 & 8.882 & 0.291 \\
    \bottomrule \bottomrule
    \end{tabular}%
\begin{tablenotes}
\footnotesize
\item This table shows the estimates and robust standard errors for the elements in $\beta$ and on the diagonal of $\Sigma$. The pasta brand names indicate the intercepts (columns 2-3) and variances (columns 4-5) in the different utility equations. Note that the utilities are differenced with the utility of Barilla, and that the variance of the utility for Mueller is fixed at one. 
\end{tablenotes}
\end{threeparttable}
  \label{tab:mnp_coef_estimates}
\end{table}

Table~\ref{tab:mnp_coef_covariances} shows the covariance estimates in the multinomial probit model, together with their standard errors. We find statistically significant covariances across the unobserved components of the utilities of the mostly purchased pasta brands. However, due to the fact that utilities are differenced with the base category, the interpretation of the individual covariance estimates is challenging. 

\begin{table}[!h]
  \centering \tiny
  \caption{Multinomial probit covariance estimates}
  \begin{threeparttable}
    \begin{tabular}{clrrrrrrrr}
    \toprule \toprule
          &       & \multicolumn{1}{l}{Mueller} & \multicolumn{1}{l}{Creamette} & \multicolumn{1}{l}{Ronzoni} & \multicolumn{1}{l}{San Giorgio} & \multicolumn{1}{l}{No Yolks} & \multicolumn{1}{l}{Hodgson Mills} & \multicolumn{1}{l}{Healthy Harvest} & \multicolumn{1}{l}{DaVinci} \\
          \midrule
   \multirow{8}[0]{*}{\begin{sideways}Estimates\end{sideways}} & Creamette & 0.378 &       &       &       &       &       &       &  \\
          & Ronzoni & 0.451 & 0.399 &       &       &       &       &       &  \\
          & San Giorgio & 0.445 & 0.518 & 0.445 &       &       &       &       &  \\
          & No Yolks & 0.377 & 0.452 & 0.355 & 0.410 &       &       &       &  \\
          & Hodgson Mills & 0.400 & 0.394 & 0.418 & 0.428 & 0.183 &       &       &  \\
          & Healthy Harvest & 0.010 & 0.095 & 0.116 & 0.046 & 0.219 & 0.023 &       &  \\
          & DaVinci & 0.096 & 0.182 & 0.253 & 0.295 & 0.191 & 0.223 & -0.019 &  \\
          & Dececco & 0.026 & 0.031 & 0.128 & 0.088 & 0.069 & 0.073 & 0.111 & 0.178 \\
          \midrule
    \multirow{8}[0]{*}{\begin{sideways}Standard Errors\end{sideways}} & Creamette & 0.007 &       &       &       &       &       &       &  \\
          & Ronzoni & 0.005 & 0.007 &       &       &       &       &       &  \\
          & San Giorgio & 0.007 & 0.008 & 0.006 &       &       &       &       &  \\
          & No Yolks & 0.011 & 0.012 & 0.009 & 0.027 &       &       &       &  \\
          & Hodgson Mills & 0.010 & 0.011 & 0.008 & 0.028 & 0.027 &       &       &  \\
          & Healthy Harvest & 0.067 & 0.066 & 0.050 & 0.081 & 0.139 & 0.113 &       &  \\
          & DaVinci & 0.060 & 0.035 & 0.026 & 0.105 & 0.082 & 0.109 & 0.406 &  \\
          & Dececco & 0.057 & 0.056 & 0.047 & 0.119 & 0.130 & 0.149 & 0.548 & 0.565 \\
          \bottomrule \bottomrule
    \end{tabular}%
\begin{tablenotes}
\footnotesize
\item This table shows the estimates and robust standard errors for the elements on the off-diagonal of $\Sigma$. The pasta brand names indicate the covariances across the different utility equations. Note that the utilities are differenced with the utility of Barilla.
\end{tablenotes}
\end{threeparttable}
  \label{tab:mnp_coef_covariances}
\end{table}

\subsection{Exact maximum likehood estimation}\label{A:mnp_MLE}
In the three choice MNP model, we can evaluate the likelihood function exactly via a one dimensional numerical integration. The log-likelihood is
\begin{align}
\ell_n(\theta)=\sum_{i=1}^n\sum_{j=0}^2{I}\{y_i=j\}\log P_{ij}(\theta),
\end{align}
where \(P_{ij}(\theta)=\Pr(y_i=j\mid x_i)\). For $z_i\sim N(x_i\beta,\Sigma),$ the probability of $j=0$ is $P_{i0}(\theta)=\Phi_2(-x_i\beta;\Sigma)$. For alternative \(j=1\), $P_{i1}(\theta)=\Pr(z_{i1}\ge \max\{0,z_{i2}\})$,
which is evaluated using the conditional Gaussian representation
\begin{align}
P_{i1}(\theta)=\int_{-\infty}^{\infty}\phi(z_2;x_{i2}^\top\beta,\Sigma_{22})\left[1-\Phi\!\left(\frac{\max(z_2,0)-\mu_1(z_2)}{\sqrt{\tilde\Sigma_1}}\right)\right]dz_2,
\end{align}
where $\mu_1(z_2)=x_{i1}^\top\beta+\frac{\Sigma_{12}}{\Sigma_{22}}(z_2-x_{i2}^\top\beta)$, and $\tilde\Sigma_1=\Sigma_{11}-\frac{\Sigma_{12}^2}{\Sigma_{22}}$. Similarly, for alternative \(j=2\), $P_{i2}(\theta)=\Pr(z_{i2}\ge \max\{0,z_{i1}\})$, which is evaluated as
\begin{align}
P_{i2}(\theta)=\int_{-\infty}^{\infty}\phi(z_1;x_{i1}^\top\beta,\Sigma_{11})\left[1-\Phi\!\left(\frac{\max(z_1,0)-\mu_2(z_1)}{\sqrt{\tilde\Sigma_2}}\right)\right]dz_1,
\end{align}
where $\mu_2(z_1)=x_{i2}^\top\beta+\frac{\Sigma_{12}}{\Sigma_{11}}(z_1-x_{i1}^\top\beta)$, and $\tilde\Sigma_2=\Sigma_{22}-\frac{\Sigma_{12}^2}{\Sigma_{11}}$. The univariate integrals required to evaluate $P_{i1}(\theta)$ and $P_{i2}(\theta)$ can be computed arbitrarily accurate using off-the-shelf numerical integration methods.
 
\section{Example 2: Random effects Tobit model}

\subsection{SEGA implementation}\label{A:tobit_fisher}
First, we derive the exact expressions in the gradient of the log augmented likelihood with respect to  presented in \eqref{eq:tobit_sega}:
\begin{align*}
\nabla_\theta \log p_\theta(y_{1:n},z_{1:n}|x_{1:n})=\sum_{i=1}^n\left(\sum_{t=1}^T\nabla_{\beta,c}\log \phi_1(y_{it}^*;\mu_{it},\sigma^2)^\top,\nabla_{\delta}\log\phi_r(\alpha_i;0_r,\Omega)^\top\right)^\top,
\end{align*}
where $\mu_{it}=h_{it}^\top\beta+w_{it}^\top\alpha_i$ and $\nabla_{\beta,c}\log \phi_1(y_{it}^*;h_{it}^\top\beta+w_{it}^\top\alpha_i,\sigma^2)=$
\begin{align}
 \left(\nabla_\beta\log \phi_1(y_{it}^*;h_{it}^\top\beta+w_{it}^\top\alpha_i,\sigma^2)^\top,\nabla_c\log \phi_1(y_{it}^*;h_{it}^\top\beta+w_{it}^\top\alpha_i,\sigma^2)\right)^\top.
\end{align}
The three required gradient components can be evaluated analytically as
\begin{align}\label{eq:grad_tobit_sega}
&\nabla_\beta\log \phi_1(y_{it}^*;h_{it}^\top\beta+w_{it}^\top\alpha_i,\sigma^2) = \frac{1}{\sigma^2}(y_{it}^*-h_{it}^\top\beta-w_{it}^\top\alpha_i)h_{it},\\
&\nabla_{c}\log \phi_1(y_{it}^*;h_{it}^\top\beta+w_{it}^\top\alpha_i,\sigma^2) = -\frac{1}{2}+\frac{1}{2\sigma^2}(y_{it}^*-h_{it}^\top\beta-w_{it}^\top\alpha_i)^2,\notag\\
&\nabla_{\delta}\log\phi_r(\alpha_i;0_r,\Omega) = -\frac{1}{2}\text{vec}(\Omega^{-1})^\top(I_{r^2}+K_{r,r})(\tilde{D}\otimes I_r)\frac{\partial \tilde{D}}{\partial \delta}+\left(\alpha_i^\top\Omega^{-1}\tilde{D}\otimes\alpha_i^\top\Omega^{-1}\right)\frac{\partial \tilde{D}}{\partial \delta},\notag
\end{align}
with $\frac{\partial \tilde{D}}{\partial \delta} = \frac{\partial \tilde{D}}{\partial D}\frac{\partial D}{\partial \delta}$. Here, $\frac{\partial \tilde{D}}{\partial D}$ is a diagonal matrix with elements $\frac{\partial \tilde{D}}{\partial D}_{(i-1)r+1,(i-1)r+1} = \exp(\delta_{i,i})$ for $i = 1,\dots,r$, and ones in the remaining diagonal elements. Also, we have that  $\frac{\partial D}{\partial \delta} = P$, where $P$ is a sparse matrix such that $\text{vec}(D) = P\delta$.

Second, we produce draws from $p_\theta(z_{1:n}|d_{1:n})$ in \eqref{eq:tobit_sega} using 
a Gibbs sampling scheme that iteratively employs two steps:
\begin{itemize}
    \item[] Step 1: Generate one draw from $p_{\theta}(y_{1:n}^*|d_{1:n},\alpha_{1:n}) = \prod_{i=1}^{n}\prod_{t=1}^{T}p({y}_{it}^*|{d}_{it},\alpha_i)$,
    \item[] Step 2: Generate one draw from $p_{\theta}(\alpha_{1:n}|d_{1:n},y_{1:n}^*)=\prod_{i=1}^n p_\theta({\alpha}_i|d_i,y_{1:n}^*)$,
\end{itemize}
where $d_{it} = (y_{it},h_{it}^\top,w_{it}^\top)^\top$, and $d_{i} = (d_{i1}^\top,\dots,d_{iT}^\top)$.
For observations such that $y_{it}>0$, generating from 
${y}_{it}^*$ in Step 1 is equivalent to setting ${y}_{it}^* = y_{it}$. When $y_{it} = 0$, then one must generate from the truncated normal $p_\theta({y}_{it}^*|{d}_{it},{\alpha}_i) = \phi({y}_{it}^*;{h}_{it}^\top{\beta}+{w}_{it}^\top{\alpha}_i,\sigma^2)/\Phi(0;{h}_{it}^\top{\beta}+{w}_{it}^\top{\alpha}_i,\sigma^2)I({y}_{it}^*\le 0)$.
The random effects in Step 2 can be generated independently across observations from  the multivariate normal distribution $p_\theta({\alpha}_i|d_i,y_{1:n}^*) = \phi_{r}(\alpha_i;\bar{{\alpha}}_i,V_i)$ with
\begin{align*}
    \bar{{\alpha}}_i =\frac{1}{\sigma^2}V_i\left(\sum_{t=1}^{T}w_{it}{y}_{it}^*-\left[\sum_{t=1}^{T}w_{it}h_{it}^\top\right]\beta\right) \text{ and } V_i = \left[\Omega^{-1}+\frac{1}{\sigma^2}\sum_{t=1}^{T} w_{it}w_{it}^\top\right]^{-1}.
\end{align*}

\subsection{Variance estimator implementation}\label{A:Tobitse}
First, we use a two step generation process based on the representation $p_\theta(z_i|d_i) = p_\theta(\alpha_i|d_i,y_i^*)p_\theta(y_i^*|d_i)$ to generate i.i.d. draws from $p_\theta(z_{1:n}|d_{1:n})$.  

In the first step, we generate from the multivariate truncated normal  
\begin{align}
p_\theta(y_i^{*}|d_i) = \frac{\phi_{T}(y_i^{*};H_i\beta,\Sigma_i)\prod_{t=1}^T\mathbb{I}\left[y_{it} = y_{it}^*\mathbb{I}(y_{it}^*>0)\right]}{\int \phi_{T}(y_i^{*};H_i\beta,\Sigma_i)\prod_{t=1}^T\mathbb{I}\left[y_{it} = y_{it}^*\mathbb{I}(y_{it}^*>0)\right]},
\end{align}
where  $H_i = (h_{i1},\dots,h_{it})^\top$, $W_i = (w_{i1},\dots,w_{it})^\top$ and $\Sigma_i = W_i\Omega W_i+\sigma^2 I_T$. To generate from this distribution we can  condition on the values of $y_{it}^*=y_{it}>0$ that are observed and do not need to be generated. Denote $y_{i,u}^* = \{y_{it}^*: y_{it}=0\}$ to be   the set of unobserved variables and $y_{i,o}^* = \{y_{it}^*: y_{it}>0\}$ those that are observed. Using the properties of normal distributions we can show that
\begin{align}
p_\theta(y_{i,u}^{*}|y_{i,o}^*,d_i) = \frac{\phi_{T_u}(y_{i,u}^{*};\mu_{i,u},\Sigma_{i,uu}-\Sigma_{i,ou}\Sigma_{i,oo}^{-1}\Sigma_{i,uo})\prod_{y_{it}^*\in y_{i,u}^{*}} \mathbb{I}(y_{it}^*<0)}{\int \phi_{T_u}(y_{i,u}^{*};\mu_{i,u},\Sigma_{i,uu}-\Sigma_{i,ou}\Sigma_{i,oo}^{-1}\Sigma_{i,uo})\prod_{y_{it}^*\in y_{i,u}^{*}} \mathbb{I}(y_{it}^*<0)}
\end{align}
where $T_u$ denotes the number of elements in $y_{i,u}^*$,  $\mu_{i,u} = H_{i,u}\beta+\Sigma_{i,ou}\Sigma_{i,oo}^{-1}(y_{i,o}-H_{i,o}\beta)$, $\Sigma_{i,ou} = \text{cov}_\theta(y_{i,o}^{*},y_{i,u}^{*}|x_i)$, $\Sigma_{i,uu} =\text{var}_\theta(y_{i,u}^{*}|x_i)$  and $\Sigma_{i,oo} =\text{var}_\theta(y_{i,o}^{*}|x_i)$. This conditional distribution is a $T_u$-variate truncated normal, from which we can draw independent draws using the method by \cite{botev2017normal}.

In the second step, we generate from $p_\theta({\alpha}_i|d_i,y_{1:n}^*)$, which is the Gaussian distribution also used in the Gibbs sampling scheme in the SEGA algorithm.

Second, we derive the expressions for $\nabla_\theta \log p_\theta(   y_i,    z_i| x_i)$ and $\nabla_\theta^2 \log p_\theta(   y_i,    z_i| x_i)$ in \eqref{eq:Vhat} to compute standard errors for the estimates of $\beta$, $\sigma^2$, and the lower triangular elements of $\Omega$, which can be extracted as $\text{vech}(\Omega)$:
\begin{align}
\nabla_{\theta}\ell(\theta,z_i) =& \left(\nabla_{\beta}\ell(\theta,z_i)^\top,\nabla_{\sigma^2}\ell(\theta,z_i)^\top,\nabla_{\text{vech}(\Omega)}\ell(\theta,z_i)^\top\right)^\top,\\
\nabla_\theta^2 \ell(\theta,z_i) =& \begin{bmatrix} 
        \nabla_\beta^2 \ell(\theta,z_i) & \nabla_{\beta,\sigma^2} \ell(\theta,z_i) & {0}_{p\times r(r+1)/2}  \\
        \nabla_{\beta,\sigma^2} \ell(\theta,z_i)^\top & \nabla_{\sigma^2}^2 \ell(\theta,z_i)  & {0}_{1\times r(r+1)/2}  \\
        {0}_{ r(r+1)/2\times p} & {0}_{ r(r+1)/2\times 1} &\nabla_{\text{vech}(\Omega)}^2 \ell(\theta,z_i).
\end{bmatrix}.
\end{align}
The following are the terms required to evaluate the gradient and Hessian:
\begin{align}\label{eq:grad_tobit}
    &i)\hspace{0.3cm}\nabla_{\beta}\ell(\theta,z_i) = \frac{1}{\sigma^2}\sum_{t=1}^T\eta_{it}^*h_{it}; \quad ii)\hspace{0.3cm}  \nabla_{\sigma^2} \ell(\theta,z_i) = -\frac{T}{2\sigma^2}+\frac{1}{2(\sigma^2)^2}\sum_{t=1}^T{\eta_{it}^*}^2;\\
    &iii)\hspace{0.3cm}\nabla_{\text{vech}(\Omega)} \ell(\theta,z_i) = \left[\frac{\partial \Omega}{\partial\text{vech}(\Omega)}\right]^\top\text{vec}\left[-\frac{1}{2}(\Omega^{-1}-\Omega^{-1}\alpha_i\alpha_i^\top\Omega^{-1})\right]; \notag\\
    &iv)\hspace{0.3cm}\nabla_{\beta}^2 \ell(\theta,z_i) = -\frac{1}{\sigma^2}\sum_{t=1}^T h_{it} h_{it}^\top; \hspace{0.5cm} v)\hspace{0.3cm} \nabla_{\beta,\sigma^2} \ell(\theta,z_i)= -\frac{1}{(\sigma^2)^2}\sum_{t=1}^T\eta_{it}^*h_{it}; \notag\\
    &vi)\hspace{0.3cm}\nabla_{\sigma^2}^2 \ell(\theta,z_i) = \frac{T}{2(\sigma^2)^2}-\frac{1}{(\sigma^2)^3}\sum_{t=1}^T{\eta_{it}^*}^2; \notag\\
    & vii)\hspace{0.3cm}\nabla_{\text{vech}(\Omega)}^2 \ell(\theta,z_i) = P^\top\frac{1}{2}(-I_{r^2}+\Omega^{-1} \alpha_i  \alpha_i^\top \otimes I_r +I_r \otimes \Omega^{-1}\alpha_i  \alpha_i^\top)(-\Omega^{-1} \otimes \Omega^{-1})P, \notag
\end{align}
where $\eta_{it}^* = y_{it}^*-h_{it}^\top\beta-w_{it}^\top\alpha_i$ and $P$ is the matrix such that $\text{vec}(\Omega) = P\text{vech}(\Omega)$.

Third, we derive expressions for $\nabla_\theta \log p_\theta(   y_i,    z_i| x_i)$ and $\nabla_\theta^2 \log p_\theta(   y_i,    z_i| x_i)$ in \eqref{eq:Vhat} with respect to the parameters $c$ and $\delta$, instead of $\sigma^2$ and $\Omega$:
\begin{align}
\nabla_{\theta}\ell(\theta,z_i) =& \left(\nabla_{\beta}\ell(\theta,z_i)^\top,\nabla_{c}\ell(\theta,z_i)^\top,\nabla_{\delta}\ell(\theta,z_i)^\top\right)^\top,\\
\nabla_\theta^2 \ell(\theta,z_i) =& \begin{bmatrix} 
        \nabla_\beta^2 \ell(\theta,z_i) & \nabla_{\beta,c} \ell(\theta,z_i) & {0}_{p\times r(r+1)/2}  \\
        \nabla_{\beta,c} \ell(\theta,z_i)^\top & \nabla_{c}^2 \ell(\theta,z_i)  & {0}_{1\times r(r+1)/2}  \\
        {0}_{ r(r+1)/2\times p} & {0}_{ r(r+1)/2\times 1} &\nabla_{\delta}^2 \ell(\theta,z_i).
\end{bmatrix}.
\end{align}
where $\nabla_{\beta}\ell(\theta,z_i)$ and $\nabla^2_{\beta}\ell(\theta,z_i)$ are derived in \eqref{eq:grad_tobit} and $\nabla_{c}\ell(\theta,z_i)$ and $\nabla_{\delta}\ell(\theta,z_i)$ in \eqref{eq:grad_tobit_sega}.
%\dn{HAVE TO CHECK THIS AND IMPLEMENT THIS:}
We have 
\begin{align}
    \nabla_{\beta c}\ell(\theta,z_i)&=-\frac{1}{\exp(c)}\sum_{t=1}^T\eta_{it}^*h_{it}, \quad \nabla_{c}^2\ell(\theta,z_i)=-\frac{1}{2\exp(c)}\sum_{t=1}^T(\eta_{it}^*)^2,\\
\nabla_{\delta\delta}^2\ell(\theta,z_i)&=J_\Omega(\delta)^\top H_\Omega(\theta,z_i)J_\Omega(\delta)+R_\Omega(\theta,z_i),
\end{align}
where
\begin{align}
J_\Omega(\delta)=&(I_{r^2}+K_{r,r})(\widetilde D\otimes I_r)\frac{\partial \operatorname{vec}(\widetilde D)}{\partial \delta},\\
    H_\Omega(\theta,z_i)=&\frac{1}{2}\left[-I_{r^2}+\Omega^{-1}\alpha_i\alpha_i^\top\otimes I_r+I_r\otimes \Omega^{-1}\alpha_i\alpha_i^\top\right](-\Omega^{-1}\otimes \Omega^{-1}),\\
\left[R_\Omega(\theta,z_i)\right]_{ab}=&(-\frac{1}{2}\left\{\Omega^{-1}-\Omega^{-1}\alpha_i\alpha_i^\top\Omega^{-1}\right\})^\top\operatorname{vec}\left(\frac{\partial^2\Omega}{\partial\delta_a\partial\delta_b}\right).
\end{align}
where \(K_{r,r}\) is the commutation matrix. The matrix \(\partial\operatorname{vec}(\widetilde D)/\partial\delta\) is sparse, with entries equal to \(\exp(\delta_{jj})\) for diagonal Cholesky elements and one for off-diagonal Cholesky elements.

\subsection{Additional details and results empirical application}\label{A:tobit_application}

\subsubsection{Computing probabilities and marginal effects}
For given parameter values \(\theta\), covariates \(x_{it}\), and random effects
\(\alpha_i\), the probability of positive sales is
\begin{align}
{\Pr}_\theta(y_{it}>0\mid x_{it},\alpha_i)={\Pr}_\theta(y_{it}^*>0\mid x_{it},\alpha_i)=\Phi\left(\frac{\mu_{it}}{\sigma}\right),
\end{align}
with $\mu_{it}=h_{it}^{\top}\beta+w_{it}^{\top}\alpha_i$. The conditional expectation of observed sales is
\begin{align}
\E_\theta(y_{it}\mid x_{it},\alpha_i)=\mu_{it}\Phi\left(\frac{\mu_{it}}{\sigma}\right)+\sigma\phi\left(\frac{\mu_{it}}{\sigma}\right).
\end{align}
The marginal effect of the price index on expected observed sales is
\begin{align}
{ME}_{p}(\theta)=\frac{\partial \E_\theta(y_{it}\mid x_{it},\alpha_i)}{\partial p_{it}}=\Phi\left(\frac{\mu_{it}}{\sigma}\right)\left(\beta_p+\alpha_{pi}\right),
\end{align}
where we write $\alpha_i=(\alpha_{0i},\alpha_{pi},\alpha_{\ell i})^\top$, with the subscripts denote the random intercept, price-index coefficient, and lag-sales coefficient. Similarly, let \(\beta_p\) and \(\beta_\ell\) denote the fixed coefficients on the price index and lagged sales. 

To construct Figure~\ref{fig:tobit}, we evaluate the model on a grid of household-specific lag-sales coefficients $\lambda_i=\beta_\ell+\alpha_{\ell i}$. Let $\lambda_g\in\left[Q_{0.01}\{N(\beta_\ell,\Omega_{\ell\ell})\},Q_{0.99}\{N(\beta_\ell,\Omega_{\ell\ell})\}\right]$, with $g=1,\ldots,G$ and \(Q_\tau\{F\}\) denotes the \(\tau\)-quantile of distribution \(F\). For each grid point, set $a_{\ell,g}=\lambda_g-\beta_\ell$. Since the random effects are jointly Gaussian, the conditional mean of the remaining random effects given \(\alpha_{\ell i}=a_{\ell,g}\) is
\begin{align}
\E_\theta\left[\begin{pmatrix}
\alpha_{0i}\\
\alpha_{pi}
\end{pmatrix}
\bigg|\alpha_{\ell i}=a_{\ell,g}\right]=\frac{1}{\Omega_{\ell\ell}}
\begin{pmatrix}
\Omega_{0\ell}\\
\Omega_{p\ell}
\end{pmatrix}
a_{\ell,g}.
\end{align}
We construct \(\widehat P_g(\theta)\) and \(\widehat{ME}_{p,g}(\theta)\) by evaluating ${\Pr}_\theta(y_{it}>0\mid x_{it},\alpha_i)$ and ${ME}_{p}(\theta)$ at \(\theta=\bar\theta_n\) and $\alpha_i=\alpha_g$ with 
\begin{align}
\alpha_g=\left(\frac{\Omega_{0\ell}}{\Omega_{\ell\ell}}a_{\ell,g},\frac{\Omega_{p\ell}}{\Omega_{\ell\ell}}a_{\ell,g},a_{\ell,g}\right)^\top .
\end{align}

Confidence intervals are constructed by drawing $\theta^{(m)}\sim N\left(\bar\theta_n,\frac{1}{n}\widehat V\right)$, and transforming each draw into the corresponding constrained parameters \((\beta^{(m)},\sigma^{2(m)},\Omega^{(m)})\), and recomputing \(\widehat P_g(\theta^{(m)})\) and \(\widehat{ME}_{p,g}(\theta^{(m)})\) for each grid point. The pointwise confidence intervals are obtained from the empirical quantiles of these transformed draws.

\subsubsection{Additional empirical results}
Figure~\ref{fig:conv_tobit} shows the trajectories for the estimates in the top panel of Table~\ref{tab:tobit_estimates}. The total computation time across all iterations is approximately 75 hours.

\begin{figure}[!ht]
\caption{Tobit parameter trajectories in SGA with 100,000 iterations.}
\centering
\includegraphics*[width=\textwidth]{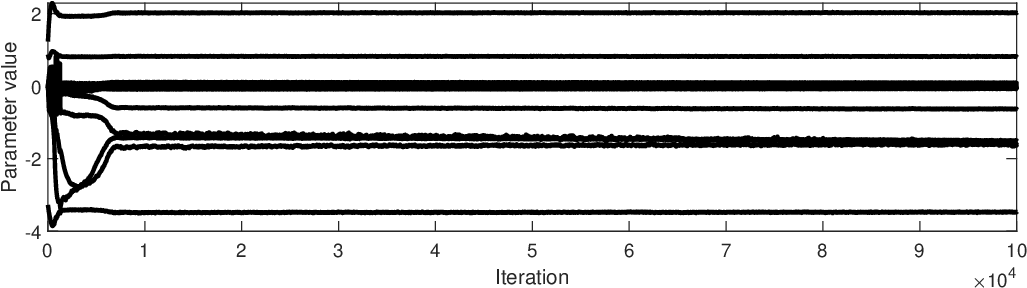}
\label{fig:conv_tobit}
\end{figure}

Table~\ref{tab:tobit_estimates} shows the estimates for the parameters in the Tobit model together with their standard errors. The price index has a negative impact on the sales, while lagged sales has a positive coefficient. We find evidence for strong unobserved individual heterogeneity, with the variance of the random effects to be statistically significantly different from zero. According to the estimation results, the random effects are also correlated. Finally, we find that all promotion coefficients are statistically significantly different from zero.

\begin{table}[!ht]
  \centering \tiny
  \caption{Tobit parameter estimates}
  \begin{threeparttable}
    \begin{tabular}{lrrrrrrrr}
        \toprule \toprule
    \multicolumn{3}{c}{Coefficients} & \multicolumn{3}{c}{Variance Random Effects} & \multicolumn{3}{c}{Covariance Random Effects} \\
    \cmidrule(lr){1-3}\cmidrule(lr){4-6}\cmidrule(lr){7-9}
    Variable & \multicolumn{1}{l}{Estimate} & \multicolumn{1}{l}{Std. Error} & \multicolumn{1}{l}{Random Effect} & \multicolumn{1}{l}{Estimate} & \multicolumn{1}{l}{Std. Error} & \multicolumn{1}{l}{Random Effects} & \multicolumn{1}{l}{Estimate} & \multicolumn{1}{l}{Std. Error} \\
    \midrule
    Intercept & -3.476 & 0.005 & \multicolumn{1}{l}{Intercept} & 0.040 & 0.002 & \multicolumn{1}{l}{Intercept/Price index} & 0.002 & 0.001 \\
    Price index & -0.019 & 0.017 & \multicolumn{1}{l}{Price index} & 0.051 & 0.003 & \multicolumn{1}{l}{Intercept/Lag sales} & -0.126 & 0.002 \\
    Lag Sales & 2.025 & 0.007 & \multicolumn{1}{l}{Lag Sales} & 0.439 & 0.005 & \multicolumn{1}{l}{Price index/Lag sales} & -0.012 & 0.002 \\
          &       &       & \multicolumn{1}{l}{Error term} & 2.266 & 0.016 &       &       &  \\
          &       &       &       &       &       &       &       &  \\
    \multicolumn{3}{c}{Feature coefficients} & \multicolumn{3}{c}{Display coefficients I} & \multicolumn{3}{c}{Display coefficients II} \\
    \cmidrule(lr){1-3}\cmidrule(lr){4-6}\cmidrule(lr){7-9}
    Feature & \multicolumn{1}{l}{Estimate} & \multicolumn{1}{l}{Std. Error} & \multicolumn{1}{l}{Display} & \multicolumn{1}{l}{Estimate} & \multicolumn{1}{l}{Std. Error} & \multicolumn{1}{l}{Display} & \multicolumn{1}{l}{Estimate} & \multicolumn{1}{l}{Std. Error} \\
    \midrule
    Back Page & 1.326 & 0.035 & \multicolumn{1}{l}{Front End Cap} & 2.020 & 0.025 & \multicolumn{1}{l}{Rear End Cap} & 2.012 & 0.016 \\
    Front Page & 1.850 & 0.025 & \multicolumn{1}{l}{In-Aisle} & 2.171 & 0.025 & \multicolumn{1}{l}{Secondary Location Display} & 2.226 & 0.027 \\
    Interior Page & 1.998 & 0.018 & \multicolumn{1}{l}{In-Shelf} & 2.063 & 0.013 & \multicolumn{1}{l}{Side-Aisle End Cap} & 2.018 & 0.133 \\
    Wrap Back & 1.169 & 0.036 & \multicolumn{1}{l}{Mid-Aisle End Cap} & 2.091 & 0.033 & \multicolumn{1}{l}{Store Front} & 1.790 & 0.054 \\
    Wrap Front & 0.999 & 0.040 & \multicolumn{1}{l}{Promo/Seasonal Aisle} & 2.375 & 0.025 & \multicolumn{1}{l}{Store Rear} & 2.075 & 0.035 \\
    Wrap Interior & 1.157 & 0.062 &       &       &       &       &       &  \\
    \bottomrule \bottomrule
    \end{tabular}%
\begin{tablenotes}
\footnotesize
\item This table shows the estimates and robust standard errors for the elements in $\beta$ and $\Omega$ and for $\sigma^2$. Note that we omitted the coefficients for the week dummies in $\beta$.
\end{tablenotes}
\end{threeparttable}
  \label{tab:tobit_estimates}
\end{table}

\subsection{Numerical experiments}\label{A:tobit_simulation}
We consider the random effects Tobit model described in Section~\ref{sec:examplesTobit}. The data is generated according to \eqref{Eq:linktobit} and \eqref{Eq:yystar}. We specify $h_{it} = w_{it} = (1, x_{it})^\top$, where $x_{it} \sim N_r(0, I_r)$. We examine two designs. First, we consider a low-dimensional random effects specification with $r=2$, $\beta_0 = (0,3,2)^\top$, $\sigma_0^2 = 1$, and
\[
\Omega_0 =
\begin{pmatrix}
1 & 0.2 & -0.1 \\
0.2 & 1 & 0.4 \\
-0.1 & 0.4 & 1
\end{pmatrix}.
\]
Second, we consider a higher-dimensional random effects structure with $r=8$, $\beta_0 = (-0.65,\,1.18,\,-0.76,\,-1.11,\,-0.85,\,-0.57,\,-0.56,\,0.18,\,-0.2)^\top$, $\sigma_0^2 = 1$, and
\[
\Omega_0 =
\begin{pmatrix}
1 & 0.3 & -0.2 & -0.2 & 0.4 & -0.1 & 0 & 0.3 & -0.1 \\
0.3 & 1 & 0 & -0.3 & 0.4 & -0.1 & 0 & 0.4 & 0 \\
-0.2 & 0 & 1 & 0 & -0.3 & 0.1 & -0.1 & 0.1 & 0.1 \\
-0.2 & -0.3 & 0 & 1 & -0.3 & 0.2 & 0.1 & 0.1 & -0.3 \\
0.4 & 0.4 & -0.3 & -0.3 & 1 & 0 & 0 & 0.5 & 0.2 \\
-0.1 & -0.1 & 0.1 & 0.2 & 0 & 1 & 0.1 & -0.1 & 0.2 \\
0 & 0 & -0.1 & 0.1 & 0 & 0.1 & 1 & -0.1 & -0.3 \\
0.3 & 0.4 & 0.1 & 0.1 & 0.5 & -0.1 & -0.1 & 1 & -0.1 \\
-0.1 & 0 & 0.1 & -0.3 & 0.2 & 0.2 & -0.3 & -0.1 & 1
\end{pmatrix}.
\]
For both model specifications, we fix the group size at $T=100$ and consider two sample sizes: a small-sample setting with $n=100$ groups and a large-sample setting with $n=1,000$ groups. For each design, we conduct a repeated sampling experiment by generating $1{,}000$ independent datasets from the corresponding data-generating process and applying SEGA to each sample. 

Table~\ref{tab:MCstatTobit} reports the Monte Carlo summary statistics. The average absolute bias across all model parameters is negligible in all experiments and decreases further as the sample size increases. The RMSE exhibits a clear decline as $n$ grows from $1{,}000$ to $1{,}000$, consistent with the expected improvement in estimator precision with larger samples. This pattern holds in both the $r=2$ and $r=8$ settings, confirming the strong finite-sample performance of SEGA. Coverage rates are also close to the nominal level. Across parameters, the minimum, median, mean, and maximum coverage values all lie near 0.95, indicating that the uncertainty quantification provided by SEGA is accurate.

% \begin{figure}[!h]
% \caption{Monte Carlo distributions coefficient estimates}
% \centering
% \includegraphics*[width=\textwidth]{figures_simulation/MCdens28Tobit.eps}
% \begin{flushleft}
% \footnotesize
% This figure shows the Monte Carlo distributions of the estimates for $\beta_{r+1}$ by SEGA in the solid gold line. The panels correspond to Monte Carlo experiments with $p=2,8$ and $n=100,1000$. The vertical lines indicate the parameter value in the data generating process.
% \end{flushleft}
% \label{fig:MCdens28Tobit}
% \end{figure}

\begin{table}[H]
  \centering
  \caption{Monte Carlo summary statistics}
  \begin{threeparttable}
    \begin{tabular}{lrrrrrr}
    \toprule \toprule
    & \multicolumn{6}{c}{\textbf{Panel A: $r=2$}} \\
    \cmidrule(lr){2-7}
    $n$ &  &  & \multicolumn{4}{c}{Coverage} \\
    \cmidrule(lr){4-7}
       &  Absolute bias            &  RMSE    & Min & Median & Mean & Max \\
    \midrule
    1000  &  0.0038 & 0.1106 &      0.919  &  0.942 &  0.941 &   0.955
\\
    10000 &  0.0007 & 0.0350 &    0.931  &  0.940 &   0.943  &  0.953
\\
    \midrule
    & \multicolumn{6}{c}{\textbf{Panel B: $r=8$}} \\
    \cmidrule(lr){2-7}
    $n$ &  &  & \multicolumn{4}{c}{Coverage} \\
    \cmidrule(lr){4-7}
       &   Absolute bias           &  RMSE    & Min & Median & Mean & Max \\
    \midrule
    1000  & 0.0037  &  0.1114 &  0.920 &   0.946 &   0.945  &  0.964  \\
    10000 & 0.0008 &   0.0352 &     0.918  &  0.950  &  0.947  &  0.970
 \\
    \bottomrule \bottomrule
    \end{tabular}
    \begin{tablenotes}
    \footnotesize
    \item This table reports absolute bias, root mean squared error, and coverage statistics (minimum, median, mean, and maximum coverage across parameters) across Monte Carlo replications.
    \end{tablenotes}
  \end{threeparttable}
  \label{tab:MCstatTobit}
\end{table} 
\section{Supplementary lemma’s}\label{A:lemmas}
\subsection{Supplementary lemma’s proofs}
The following lemmas are used to prove results in the main paper. The first result is given in \cite{chen2020statistical}.

\begin{lemma}{[Lemma B.2, \cite{chen2020statistical}]}\label{lem:chen2020}
    Let $z_k$ be a sequence in $\mathbb{R}_+$ that satisfies the recursion
    $$ z_k \leq (1-\lambda\eta_k)z_{k-1} + D \eta_k^{2+\gamma}, $$
    where $D,\lambda>0$, and $\gamma\ge0$ are fixed constants, and the sequence $\eta_{k}$ is decreasing. Then for any $m \leq k-1$,
    \begin{flalign*}
    z_k %\leq%& \exp(-\lambda\sum_{i=m+1}^{k}\eta_i)+D[\eta_{m}]^{1+\gamma}\lambda^{-1}
    \le \exp\{-\lambda(k-m)\eta_{k}\}z_m+D[\eta_{m}]^{1+\gamma}\lambda^{-1}
    \end{flalign*}
\end{lemma}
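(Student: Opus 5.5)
The plan is to unroll the recursion from index $m$ up to $k$, control the homogeneous part with $1-x\le e^{-x}$, and control the inhomogeneous part with a telescoping identity. Throughout I will assume $0\le\lambda\eta_i\le 1$ for the indices $i\in\{m+1,\dots,k\}$ involved. This is implicit in the statement, since otherwise the factors $1-\lambda\eta_i$ could be negative and the recursion could not be iterated as an inequality. In the application inside the proof of Theorem~\ref{thm:main}, this holds for $k\ge k^\ast$.

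\textbf{Step 1 (unrolling).} Because each factor $1-\lambda\eta_i$ is nonnegative, applying the recursion repeatedly gives
\begin{equation*}
z_k\le \Big\{\prod_{i=m+1}^{k}(1-\lambda\eta_i)\Big\}z_m+D\sum_{j=m+1}^{k}\eta_j^{2+\gamma}\prod_{i=j+1}^{k}(1-\lambda\eta_i),
\end{equation*}
where an empty product equals one.

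\textbf{Step 2 (homogeneous term).} Use $1-x\le e^{-x}$ to bound the product by $\exp\{-\lambda\sum_{i=m+1}^k\eta_i\}$. Since the sequence is decreasing, $\eta_i\ge\eta_k$ for every $i\le k$, so $\sum_{i=m+1}^k\eta_i\ge (k-m)\eta_k$. This gives the factor $\exp\{-\lambda(k-m)\eta_k\}$ in front of $z_m$.

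\textbf{Step 3 (inhomogeneous term).} For $j>m$ monotonicity gives $\eta_j^{2+\gamma}=\eta_j^{1+\gamma}\eta_j\le \eta_m^{1+\gamma}\eta_j$. It therefore suffices to show
\begin{equation*}
\sum_{j=m+1}^{k}\eta_j\prod_{i=j+1}^{k}(1-\lambda\eta_i)\le \lambda^{-1}.
\end{equation*}
Write $P_j=\prod_{i=j}^{k}(1-\lambda\eta_i)$, with $P_{k+1}=1$. Then
\begin{equation*}
\lambda\eta_j\prod_{i=j+1}^{k}(1-\lambda\eta_i)=P_{j+1}-P_j,
\end{equation*}
so the sum telescopes to $\lambda^{-1}(1-P_{m+1})$. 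This is at most $\lambda^{-1}$ because $P_{m+1}\ge 0$. Multiplying by $D\eta_m^{1+\gamma}$ yields the second term of the bound.

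Combining Steps 2 and 3 gives the stated inequality. The only delicate point is the sign condition $\lambda\eta_i\le1$. It is needed both to iterate the inequality, since multiplying by $1-\lambda\eta_i$ must preserve its direction, and to get $P_{m+1}\ge0$ in the telescoping step. I would make it explicit as a hypothesis, or note that it holds for all sufficiently large indices because $\eta_k\to0$. In the latter case $m$ should be taken beyond that threshold, as is done by requiring $m=\lfloor k/2\rfloor>k^\ast$ in the proof of Theorem~\ref{thm:main}.
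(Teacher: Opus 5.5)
The paper gives no proof of this lemma. It imports it from Lemma B.2 of \citet{chen2020statistical} and uses it once, in the proof of Theorem~\ref{thm:main}. Your unrolling-and-telescoping argument is a correct, self-contained proof under the side condition you impose. The identity $\lambda\eta_jP_{j+1}=P_{j+1}-P_j$ even gives the slightly sharper constant $\lambda^{-1}(1-P_{m+1})$.

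You should sharpen the status of $\lambda\eta_i\le 1$. It is not merely what lets you iterate: with a negative factor you could still use $z_{i-1}\ge 0$ to get $z_i\le D\eta_i^{2+\gamma}$. It is necessary for the conclusion itself, as the following counterexample shows.
\begin{itemize}
\item[] Take $k=m+1$, $z_m=0$, $\eta_m=\eta_{m+1}=\eta$ with $\lambda\eta>1$, and $z_{m+1}=D\eta^{2+\gamma}$.
\item[] The recursion holds with equality.
\item[] The claimed bound reads $D\eta^{2+\gamma}\le D\eta^{1+\gamma}\lambda^{-1}$, i.e.\ $\lambda\eta\le 1$, which is false.
\end{itemize}
If strict monotonicity is intended, taking $\eta_m$ slightly larger than $\eta_{m+1}$ gives the same failure. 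So the lemma as printed is missing a hypothesis. You should state $\lambda\eta_i\le 1$ for $m<i\le k$ explicitly rather than call it implicit in the statement.

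Your remark about the application also needs more than you say. In Theorem~\ref{thm:main} the lemma is applied with $\lambda=\mu/2$ and $\eta_k=\eta_nk^{-\alpha}$. The first clause in the definition of $k^\ast$ only gives $\tfrac{\mu}{2}\eta_nk^{-\alpha}\le \mu^2/(8L^2)$. Hence your claim that the condition holds for $k\ge k^\ast$ needs $\mu^2\le 8L^2$, which the paper does not state. The cleanest fix is to add $\tfrac{\mu}{2}\eta_nk^{-\alpha}\le 1$ to the definition of $k^\ast$. This costs nothing because $\eta_n\le\overline{c}$.
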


\begin{lemma}\label{lem:concavity}
If Assumption \ref{ass:conv} is satisfied, then, for $\mu>0$, 
\begin{flalign*}
  \langle {m}_n(\theta),\theta-\hat\theta \rangle    \le-\frac{\mu}{2}\|\theta-\hat\theta\|^2_2 .
\end{flalign*}
\end{lemma}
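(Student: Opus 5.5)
The plan is to derive the bound directly from the two-point strong concavity inequality in Assumption~\ref{ass:conv}, using it twice with the roles of the points swapped, and then to use the first-order optimality of the MLE $\hat\theta_n$. Throughout, $\hat\theta$ and $\hat\theta_n$ both denote the maximiser of $\ell_n$.

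First, apply Assumption~\ref{ass:conv} with $\theta_1=\theta$ and $\theta_2=\hat\theta$:
$$\ell_n(\hat\theta)\le \ell_n(\theta)+\langle\nabla_\theta\ell_n(\theta),\hat\theta-\theta\rangle-\frac{n\mu}{2}\|\theta-\hat\theta\|_2^2 .$$
Next, apply it with $\theta_1=\hat\theta$ and $\theta_2=\theta$:
$$\ell_n(\theta)\le \ell_n(\hat\theta)+\langle\nabla_\theta\ell_n(\hat\theta),\theta-\hat\theta\rangle-\frac{n\mu}{2}\|\theta-\hat\theta\|_2^2 .$$
Adding the two inequalities cancels the function values. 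Rearranging gives the strong monotonicity of the gradient:
$$\langle\nabla_\theta\ell_n(\theta)-\nabla_\theta\ell_n(\hat\theta),\theta-\hat\theta\rangle\le -n\mu\|\theta-\hat\theta\|_2^2 .$$

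The remaining step is to remove the term $\nabla_\theta\ell_n(\hat\theta)$. If $\hat\theta$ lies in the interior of $\Theta$, the first-order condition gives $\nabla_\theta\ell_n(\hat\theta)=0$. If $\Theta$ is only convex and $\hat\theta$ may lie on the boundary, I would use the optimality condition for a maximiser over a convex set instead: $\langle\nabla_\theta\ell_n(\hat\theta),\theta-\hat\theta\rangle\le 0$ for all $\theta\in\Theta$. In either case,
$$\langle\nabla_\theta\ell_n(\theta),\theta-\hat\theta\rangle\le \langle\nabla_\theta\ell_n(\hat\theta),\theta-\hat\theta\rangle-n\mu\|\theta-\hat\theta\|_2^2\le -n\mu\|\theta-\hat\theta\|_2^2 .$$
Dividing by $n$ and recalling $m_n(\theta)=n^{-1}\nabla_\theta\ell_n(\theta)$ gives $\langle m_n(\theta),\theta-\hat\theta\rangle\le-\mu\|\theta-\hat\theta\|_2^2$. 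Since $\mu>0$, this is at most $-\frac{\mu}{2}\|\theta-\hat\theta\|_2^2$, which is the claimed bound.

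The only delicate point is justifying the optimality condition at $\hat\theta$, because the paper treats $\Theta$ as a general subset of $\mathbb{R}^{d_\theta}$. I will state explicitly that $\Theta$ is convex, or that $\hat\theta$ is interior, as is implicit in Assumption~\ref{ass:conv} holding for all pairs of points. Continuous differentiability from Assumption~\ref{ass:conv} guarantees that the gradients above are well defined. Every other step is algebra.
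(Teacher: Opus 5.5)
Your proof is correct, but it takes a different route from the paper's.

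\textbf{The paper's route.} The paper applies Assumption~\ref{ass:conv} only once, with $(\theta_1,\theta_2)=(\theta,\hat\theta)$, and rearranges to
\[
\langle\nabla_\theta\ell_n(\theta),\hat\theta-\theta\rangle\ \ge\ \ell_n(\hat\theta)-\ell_n(\theta)+\frac{n\mu}{2}\|\theta-\hat\theta\|_2^2 .
\]
It then drops the first term on the right using only that $\hat\theta$ maximises $\ell_n$, i.e.\ $\ell_n(\hat\theta)\ge\ell_n(\theta)$. Dividing by $n$ gives exactly the stated constant $\mu/2$. This comparison of function values needs no first-order condition at $\hat\theta$. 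So the ``delicate point'' you flag, interiority of $\hat\theta$ or convexity of $\Theta$, never arises in the paper's argument.

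\textbf{Your route.} Symmetrising the inequality gives strong monotonicity of the score with constant $n\mu$. You then need the optimality condition $\langle\nabla_\theta\ell_n(\hat\theta),\theta-\hat\theta\rangle\le 0$. In exchange you get the sharper bound $-\mu\|\theta-\hat\theta\|_2^2$. This shows the factor $\frac12$ in the lemma is an artefact of the one-sided argument. The sharper constant would only improve the contraction in term C of the proof of Theorem~\ref{thm:main} (giving $-2\mu$ instead of $-\mu$), not the rate.

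\textbf{Comparison.} The paper's proof is shorter and uses fewer hypotheses. Yours is the standard monotone-operator argument and gives the tighter constant, at the cost of the extra optimality hypothesis. The paper arguably assumes that hypothesis anyway, since its proof states $m_n(\hat\theta)=0$ even though it never uses it.
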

\begin{proof}
Since \(\ell_n\) is \(\mu\)-strongly concave, for any \(\theta_1,\theta_2\in\Theta\) we have  
\[
\ell_n(\theta_2)
\;\le\;
\ell_n(\theta_1)
\;+\;\bigl\langle \nabla_\theta \ell_n(\theta_1),\,\theta_2-\theta_1\bigr\rangle
\;-\;\frac {n\mu}{2}\,\|\theta_2-\theta_1\|_2^2.
\]
Set
$\theta_1 =\theta$ and $\theta_2 =\hat\theta$, 
where \(\hat\theta\) satisfies \(m_n(\hat\theta)=0\).  Plugging into the strong‐concavity inequality gives
\[
\ell_n(\hat\theta)
\;\le\;
\ell_n(\theta)
\;+\;\bigl\langle \nabla_\theta \ell_n(\theta),\,\hat\theta-\theta\bigr\rangle
\;-\;\frac{n\mu}{2}\,\|\hat\theta-\theta\|_2^2.
\]
Rearrange:
\[
\bigl\langle \nabla_\theta \ell_n(\theta),\,\hat\theta-\theta\bigr\rangle
\;\ge\;
\ell_n(\hat\theta)-\ell_n(\theta)
\;+\;\frac{n\mu}{2}\,\|\hat\theta-\theta\|_2^2.
\]
But \(\hat\theta\) maximizes \(\ell_n\), so \(\ell_n(\hat\theta)\ge\ell_n(\theta)\).  Hence
\[
\bigl\langle \nabla_\theta \ell_n(\theta),\,\hat\theta-\theta\bigr\rangle
\;\ge\;
0
\;+\;\frac{n\mu}{2}\,\|\hat\theta-\theta\|_2^2
\;=\;
\frac{n\mu}{2}\,\|\theta-\hat\theta\|_2^2.
\]
Equivalently,
\[
\bigl\langle m_n(\theta),\,\theta-\hat\theta\bigr\rangle
\;\le\;
-\frac{\mu}{2}\,\|\theta-\hat\theta\|_2^2.
\]
\end{proof}
 
\subsection{Lemma 1: Assumptions MNP example}\label{A:ass_mnp}
In this section, we verify Assumptions \ref{ass:mds}-\ref{ass:conv} in the MNP example. For simplicity, we consider the parametrization with respect to the precision matrix instead of the transformed angles $\xi$. We make the following low-level regularity conditions. 
\begin{assumption}\label{ass:MNP_assumptions} The parameter is $\theta=(\beta^\top,\mathrm{vech}(\Sigma^{-1})^\top)^\top$, and the parameter space is $\Theta=\mathcal{B}\times\mathcal{S}$. The following conditions are satisfied. 
\begin{enumerate}
    \item The parameter space $\Theta$ is compact. For each $i\in\{1,\dots,n\}$, $x_i\in\mathcal{X}$, with $\mathcal{X}$ bounded. 
    \item Each $\Sigma\in\mathcal{S}$ is positive-definite; i.e., $0<\underline{\lambda}\le \lambda_{\min}(\Sigma)\le\lambda_{\max}(\Sigma)\le \overline\lambda<\infty$.
    \item The choice probabilities are bounded below: for some $\underline{p}>0$, $\inf_{i\le n}\inf_{\theta\in\Theta}\Pr_\theta(y_i\mid x_i)\ge \underline{p}$. 
    \item For each $i\le n$ and $k\ge 1$, $z_i^{(k)}\mid d_i,\mathcal F_k\sim p_{\theta}(z_i\mid d_i)$. 
    \item For some $\delta>0$, and all $\theta\in\mathcal{N}_\delta(\hat\theta_n)$, there is an $n'$ large enough such that, for all $n\ge n'$, $-\nabla^2_\theta\ell_n(\tilde\theta)\succeq \mu\cdot I$, with $\mu>0$
\end{enumerate}
\end{assumption}

\noindent\textbf{Verification of Assumption \ref{ass:mds}}

Define $\eta_i=z_i^{(k)}-x_i\beta$, where we suppress dependence on $k$. First derive the bounds on $\E[\|\eta_i\|_2^2\mid \mathcal{F}_{k}]$ and $\E[\|\eta_i\|_2^4\mid \mathcal{F}_{k}]$ required for verifying the assumptions.  We use the triangle inequality to write $\|\eta_i\|_2^2 \leq (\|z_i^{(k)}\|_2+\|x_i\beta\|_2)^2$. Note that, by Assumption \ref{ass:MNP_assumptions}(1),  $\|x_i\beta\|_2$ is bounded for all $i\le n$. 

Consider $w_i \sim N(x_i\beta,\Sigma)$ with $\E[\|w_i\|_2^2]=\text{trace}(\Sigma)+\|x_i\beta\|^2_2$. It follows that
\begin{align}
    \E[\|w_i\|_2^2] =& \E[\|w_i\|_2^2|w_i \in A]\mathbb{P}[w_i \in A] + \E[\|w_i\|_2^2|w_i \notin A]\mathbb{P}[w_i \notin A] \\
    \geq& \E[\|w_i\|_2^2|w_i \in A] = \E[\|z_i^{(k)}\|_2^2\mid \mathcal{F}_{k}],
\end{align}
and it follows that $\E[\|z_i^{(k)}\|_2^2\mid \mathcal{F}_{k}]$ is bounded. Then it follows that $\E[\|z_i\|_2^2\mid \mathcal{F}_{k}]$ is bounded if $\text{trace}(\Sigma)$ and  $\|x_i\beta\|^2_2$ are bounded and the choice probability $\mathbb{P}[w_i \in A]$ is bounded away from zero (i.e., Assumption \ref{ass:MNP_assumptions}(1)-(3)). 

Suppress the dependence on the filtration and write 
\begin{align}
    \E[\|z_i\|_2^4] =& \E\left[ \left(\sum_{j}z_{ij}^2\right)^2 \right] = \sum_j \E[z_{ij}^4]+2\sum_{j<k}\E[z_{ij}^2z_{ik}^2]\\
    \leq& C\left( \sum_j \E[z_{ij}^2]^2+2\sum_{j<k}\E[z_{ij}^2]\E[z_{ik}^2]\right)=C
    \left( \sum_j \E[z_{ij}^2]^2\right)^2=C\left( \E[\|z_{ij}\|_2^2]\right)^2, \notag
\end{align}
where we use that $\E[z_{ij}^4]<C(\E[z_{ij}^2])^2$ with $C\geq 1$, and the Cauchy–Schwarz inequality $\E[z_{ij}^2z_{ik}^2]\leq \sqrt{\E[z_{ij}^4]\E[z_{ik}^4]}$. Since $\E[\|z_i\|_2^2]$ is bounded, $\E[\|z_i\|_2^4]$ is bounded. Finally, note that 
\begin{align}
    \widehat{m}_n(\theta) = \sum_{i=1}^n ((x_i^\top \Sigma^{-1} \eta_i)^\top,\frac{1}{2}\text{vec}(\Sigma-\eta_i\eta_i^\top))^\top.
\end{align}

\noindent\textbf{Verify Assumption 1(i).} This follows from the Fisher's identity under Assumption \ref{ass:MNP_assumptions}(4).

\noindent\textbf{Verify Assumption 1(ii)}
Use the triangle inequality and submultiplicativity to write for $\theta=\hat\theta_n$:
\begin{align}
\|\widehat{m}_n(\theta;z^{(k)}_{1:n})\|_2^2 =& \|\frac{1}{n}\sum_{i=1}^n x_i^\top \Sigma^{-1} \eta_i\|_2^2+\|\frac{1}{2n}\sum_{i=1}^n\text{vec}(\Sigma-\eta_i\eta_i^\top)\|_2^2\\
\leq& \frac{1}{n}\sum_{i=1}^n\left( \| x_i^\top \Sigma^{-1} \eta_i\|_2^2+\|\frac{1}{2}\text{vec}(\Sigma-\eta_i\eta_i^\top)\|_2^2\right)\\
\leq& \frac{1}{n}\sum_{i=1}^n\left( \| x_i\|_2^2\| \Sigma^{-1}\|_2^2\| \eta_i\|_2^2+\frac{1}{2}(\|\Sigma\|_F^2 +\|\eta_i\|_2^4)\right).
\end{align}
Now $\E[\|\widehat{m}_n(\theta;z^{(k)}_{1:n})\|_2^2\mid \mid\mathcal{F}_{k}]$ is bounded by $\frac{1}{n}$ times $n$ times a constant $\nu^2$ using the bounds on $\E[\|\eta_i\|_2^2\mid \mathcal{F}_{k}]$ and $\E[\|\eta_i\|_2^4\mid \mathcal{F}_{k}]$, and on $\| x_i\|_2^2$ independently of $i$, $\| \Sigma^{-1}\|_2^2$, and $\|\Sigma\|_F^2$ (Assumption \ref{ass:MNP_assumptions} (1)\&(3)).

\noindent\textbf{Verify Assumption 1(iii).} 
Note that $\theta$ includes $\beta$ and the precision matrix $\Sigma^{-1}$, and hence we can write $\widehat{m}_n(\theta) = \widehat{m}_n(\beta,\Sigma^{-1})$. It follows from the triangle inequality that
\begin{align*}
   \|\widehat{m}_n(\theta_2)-\widehat{m}_n(\theta_1)\|^2_2 \le& \|\widehat{m}_n(\beta_2,\Sigma_2^{-1})-\widehat{m}_n(\beta_1,\Sigma_2^{-1})\|^2_2 + \|\widehat{m}_n(\beta_1,\Sigma_2^{-1})-\widehat{m}_n(\beta_1,\Sigma_1^{-1})\|^2_2 +\\
   &2\|\widehat{m}_n(\beta_2,\Sigma_2^{-1})-\widehat{m}_n(\beta_1,\Sigma_2^{-1})\|_2 \|\widehat{m}_n(\beta_1,\Sigma_2^{-1})-\widehat{m}_n(\beta_1,\Sigma_1^{-1})\|_2.
\end{align*}

\paragraph{Step 1.} Bound $\|\widehat{m}_n(\beta_2,\Sigma_2^{-1})-\widehat{m}_n(\beta_1,\Sigma_2^{-1})\|_2$. Use the sub-multiplicative property, the triangle inequality, and again the sub-multiplicative property to write
\begin{align}
    \|\nabla_\beta(\beta_2,\Sigma_2^{-1}) &- \nabla_\beta(\beta_1,\Sigma_2^{-1})\|_2 = \|\sum_{i=1}^n x_i^{\top}\Sigma_2^{-1}x_i(\beta_1-\beta_2) \|_2 \\
    \leq& \|\sum_{i=1}^n x_i^{\top}\Sigma_2^{-1}x_i\|_2 \|\beta_1-\beta_2 \|_2
    \leq \sum_{i=1}^n \| x_i^{\top}\Sigma_2^{-1}x_i\|_2 \|\beta_1-\beta_2 \|_2\\
    \leq & \sum_{i=1}^n \| x_i\|_2^2\|\Sigma_2^{-1}\|_2 \|\beta_1-\beta_2 \|_2.
\end{align}
Use the triangle inequality, submultiplicativity, and Frobenius norm to write
\begin{align*}
    \|\nabla_{\Sigma^{-1}}(\beta_2,\Sigma^{-1}_2) - \nabla_{\Sigma^{-1}}(\beta_1,\Sigma^{-1}_2) \|_2 = \frac{1}{2}\|\sum_{i=1}^n\text{vec}(\eta_{i1}\eta_{i1}^\top - \eta_{i2}\eta_{i2}^\top )\|_2=\\
    \frac{1}{2}\|\sum_{i=1}^n\eta_{i1}\eta_{i1}^\top - \eta_{i2}\eta_{i2}^\top\|_F
    = \frac{1}{2}\|\sum_{i=1}^n(\eta_{i1}-\eta_{i2})\eta_{i1}^\top + \eta_{i2}(\eta_{i1}-\eta_{i2})^\top\|_F \\\leq \frac{1}{2}\sum_{i=1}^n\|\eta_{i1}-\eta_{i2}\|_2(\|\eta_{i1}\|_2+\|\eta_{i2}\|_2)
    \leq \sum_{i=1}^n(\|\eta_{i1}\|_2+\|\eta_{i2}\|_2)\|x_i\|_2\|\beta_1-\beta_2 \|_2,
\end{align*}
where $\|\eta_{i1}\|_2+\|\eta_{i2}\|_2$ depends on $z_i$. Now $\|\widehat{m}_n(\beta_2,\Sigma_2^{-1})-\widehat{m}_n(\beta_1,\Sigma_2^{-1})\|_2$ equals
\begin{align}
    \sqrt{\|\nabla_\beta(\beta_2,\Sigma_2^{-1}) - \nabla_\beta(\beta_1,\Sigma_2^{-1})\|_2^2+\|\nabla_{\Sigma^{-1}}(\beta_2,\Sigma^{-1}_2) - \nabla_{\Sigma^{-1}}(\beta_1,\Sigma^{-1}_2) \|_2^2} \leq\\
    \sqrt{\|\Sigma_{{2}}^{-1}\|_2^2(\sum_{i=1}^n \| x_i\|_2^2)^2 + (\sum_{i=1}^n(\|\eta_{i1}\|_2+\|\eta_{i2}\|_2)\|x_i\|_2)^2 }\|\beta_1-\beta_2 \|_2.
\end{align}

\paragraph{Step 2.} Bound $\|\widehat{m}_n(\beta_1,\Sigma_2^{-1})-\widehat{m}_n(\beta_1,\Sigma_1^{-1})\|_2$.
Use the triangle inequality and submultiplicativity to write
\begin{align}
    \|\nabla_\beta(\beta_1,\Sigma_2^{-1}) &- \nabla_\beta(\beta_1,\Sigma_1^{-1})\|_2 = \|\sum_{i=1}^n x_i^{\top}(\Sigma_2^{-1}-\Sigma_1^{-1})\eta_i \|_2 \\
    \leq& \sum_{i=1}^n\| x_i\|_2\|\eta_i\|_2 \|\Sigma_2^{-1}-\Sigma_1^{-1} \|_2,
\end{align}
where $\|\eta_{i}\|_2$ depends on $z_i$. Use the matrix inverse bound to write
\begin{align*}
        \|\nabla_{\Sigma^{-1}}(\beta_1,\Sigma^{-1}_2) - \nabla_{\Sigma^{-1}}(\beta_1,\Sigma^{-1}_1) \|_2 = \frac{1}{2}n\|\text{vec}(\Sigma_2-\Sigma_1)\|_2= \frac{1}{2}n\|\Sigma_2-\Sigma_1\|_F \\\leq
        \frac{1}{2}n\|\Sigma_2\|_2\|\Sigma_1\|_2\|\Sigma_2^{-1}-\Sigma_1^{-1}\|_F.
\end{align*}
Now $\|\widehat{m}_n(\beta_1,\Sigma_2^{-1})-\widehat{m}_n(\beta_1,\Sigma_1^{-1})\|_2$ equals
\begin{align}
    \sqrt{\|\nabla_\beta(\beta_1,\Sigma_2^{-1}) - \nabla_\beta(\beta_1,\Sigma_1^{-1})\|_2^2+\|\nabla_{\Sigma^{-1}}(\beta_1,\Sigma^{-1}_2) - \nabla_{\Sigma^{-1}}(\beta_1,\Sigma^{-1}_1) \|_2^2} \leq\\
    \sqrt{(\sum_{i=1}^n \| x_i\|_2\|\eta_i\|_2)^2 + \frac{1}{4}n^2\|\Sigma_2\|_2^2\|\Sigma_1\|_2^2 }\|\Sigma_2^{-1}-\Sigma_1^{-1} \|_F.
\end{align}

\paragraph{Step 3.} Use Step 1 and 2 to bound $\|\widehat{m}_n(\theta_2)-\widehat{m}_n(\theta_1)\|^2_2$.

\begin{align*}
   \|\widehat{m}_n(\theta_2)-\widehat{m}_n(\theta_1)\|^2_2 \le& \|\widehat{m}_n(\beta_2,\Sigma_2^{-1})-\widehat{m}_n(\beta_1,\Sigma_2^{-1})\|^2_2 + \|\widehat{m}_n(\beta_1,\Sigma_2^{-1})-\widehat{m}_n(\beta_1,\Sigma_1^{-1})\|^2_2 +\\
   &2\|\widehat{m}_n(\beta_2,\Sigma_2^{-1})-\widehat{m}_n(\beta_1,\Sigma_2^{-1})\|_2 \|\widehat{m}_n(\beta_1,\Sigma_2^{-1})-\widehat{m}_n(\beta_1,\Sigma_1^{-1})\|_2\\
   \le& \left(\|\Sigma^{-1}\|_2^2(\sum_{i=1}^n \| x_i\|_2^2)^2 + (\sum_{i=1}^n(\|\eta_{i1}\|_2+\|\eta_{i2}\|_2)\|x_i\|_2)^2 \right)\|\beta_1-\beta_2 \|_2^2+\\
   & \left((\sum_{i=1}^n \| x_i\|_2\|\eta_i\|_2)^2 + \frac{1}{4}n^2\|\Sigma_2\|_2^2\|\Sigma_1\|_2^2 \right)\|\Sigma_2^{-1}-\Sigma_1^{-1} \|_2^2+\\
   &2\sqrt{\|\Sigma^{-1}\|_2^2(\sum_{i=1}^n \| x_i\|_2^2)^2 + (\sum_{i=1}^n(\|\eta_{i1}\|_2+\|\eta_{i2}\|_2)\|x_i\|_2)^2 }\|\beta_1-\beta_2 \|_2 \times\\
   &\sqrt{(\sum_{i=1}^n \| x_i\|_2\|\eta_i\|_2)^2 + \frac{1}{4}n^2\|\Sigma_2\|_2^2\|\Sigma_1\|_2^2 }\|\Sigma_2^{-1}-\Sigma_1^{-1} \|_2\\
   \leq& \tilde{L}(z_{1:n})^2 (\|\beta_1-\beta_2 \|_2+\|\Sigma_2^{-1}-\Sigma_1^{-1} \|_2)^2 =\tilde{L}(z_{1:n})^2  \|\theta_2-\theta_1 \|_2^2,
\end{align*}
with
\begin{align}
    \tilde{L}(z_{1:n}) =& \sqrt{\|\Sigma_2^{-1}\|_2^2(\sum_{i=1}^n \| x_i\|_2^2)^2 + (\sum_{i=1}^n(\|\eta_{i1}\|_2+\|\eta_{i2}\|_2)\|x_i\|_2)^2 }+\\&
   \sqrt{(\sum_{i=1}^n \| x_i\|_2\|\eta_i\|_2)^2 + \frac{1}{4}n^2\|\Sigma_2\|_2^2\|\Sigma_1\|_2^2 }\\
   \leq& {\|\Sigma_2^{-1}\|_2\sum_{i=1}^n \| x_i\|_2^2 + \sum_{i=1}^n(\|\eta_{i1}\|_2+\|\eta_{i2}\|_2)\|x_i\|_2 }+\\&
   {\sum_{i=1}^n \| x_i\|_2\|\eta_i\|_2 + \frac{1}{2}n\|\Sigma_2\|_2\|\Sigma_1\|_2 },
\end{align}
where we apply $\sqrt{A+B}\leq \sqrt{A} + \sqrt{B}$. 

\paragraph{Step 4.} Take expectation of the bound in Step 3. Note that $\eta_i$ and $\eta_j$ are conditionally independent, hence $\E[\|\eta_i\|_2\|\eta_j\|_2] =\E[\|\eta_i\|_2]\E[\|\eta_j\|_2]$. Note that the square root function is concave, and according to Jensen's inequality $\E[\|\eta_i\|_2]= \E[\sqrt{\|\eta_i\|_2^2}] \leq \sqrt{\E[\|\eta_i\|_2^2]}$. It follows that $\E[\tilde{L}(z_{1:n})^2]$ can be bounded by the bound on $\E[\|\eta_i\|_2^2]$, and the bounds on $\|\Sigma^{-1}\|_2$, $\| x_i\|_2$, and $\|\Sigma\|_2$ (Assumption \ref{ass:MNP_assumptions} (1)\&(3)). 

\noindent\textbf{Verify Assumption 2} 
For the MNP we have
\begin{align*}
    \ell_n(\theta) = \sum_{i=1}^{N}\log\text{Pr}(Y_i=y_i|{\theta},x_i)=\sum_{i=1}^{N}\log p_i(\theta),
\end{align*}
with
\begin{align*}
    \text{Pr}(Y_i=j|{\theta},x_i) =& 
  \int I(Z_{ij}\geq \max(\max(Z_{i}),0))\phi_{J}\left({z};x_i{\beta},\Sigma\right) d{Z}_{i} \text{ for } j>0,\\
   \text{Pr}(Y_i=0|{\theta},x_i) =& 
  \int I(\max(Z_{i})< 0) \phi_{J}\left({z};x_i{\beta},\Sigma\right) d{Z}_{i}.
\end{align*}
For any $\theta_2,\theta_1\in\Theta^2$, the exact second-order Taylor expansion of $\ell_n\left(\theta\right)$ around $\theta_1$ is 
\begin{align*}
    \ell_n\left(\theta_2\right) = \ell_n\left(\theta_1\right) +\left\langle\nabla_\theta \ell_n\left(\theta_1\right), \theta_2-\theta_1\right\rangle+\frac{1}{2}\left(\theta_1-\theta_2\right)^\top\nabla_\theta^2 \ell_n(\tilde{\theta})\left(\theta_1-\theta_2\right),
\end{align*}
with $\tilde{\theta}$ on the line segment between $\theta_1$ and $\theta_2$. From the expression, we have
\begin{align*}
    \nabla_\theta \ell_n({\theta}) =& \sum_i \frac{1}{p_i(\theta)} \nabla_\theta p_i(\theta),\\
    \nabla_\theta^2 \ell_n({\theta}) =& \sum_i \frac{1}{p_i(\theta)} \nabla_\theta^2 p_i(\theta) - \frac{1}{p_i(\theta)^2} \nabla_\theta p_i(\theta)\nabla_\theta p_i(\theta)^\top.
\end{align*}
Now, by Assumption \ref{ass:MNP_assumptions}(5), for $n$ large enough, we have that 
\begin{align*}
    \left(\theta_1-\theta_2\right)^\top\nabla_\theta^2 \ell_n(\tilde{\theta})\left(\theta_1-\theta_2\right)\leq -\mu \|\theta_1-\theta_2\|_2^2.
\end{align*}
Using the Taylor expansion and this result, we now have that 
\begin{align*}
\ell_n\left(\theta_2\right) &= \ell_n\left(\theta_1\right) +\left\langle\nabla_\theta \ell_n\left(\theta_1\right), \theta_2-\theta_1\right\rangle+\frac{1}{2}\left(\theta_1-\theta_2\right)^\top\nabla_\theta^2 \ell_n(\tilde{\theta})\left(\theta_1-\theta_2\right)\\
&\le\ell_n\left(\theta_1\right) +\left\langle\nabla_\theta \ell_n\left(\theta_1\right), \theta_2-\theta_1\right\rangle-\frac{1}{2}\mu \|\theta_1-\theta_2\|_2^2.
\end{align*}

\end{document}